\documentclass[11pt]{article}
\usepackage{mathrsfs}
\usepackage{amsfonts}
\usepackage{amssymb}
\usepackage{bbm}
\usepackage{amsfonts,amssymb, mathrsfs, amsmath,amssymb,float}
\usepackage{amsthm}
\usepackage{amsmath}
\usepackage{algorithm,algorithmic}
\usepackage{blkarray}
\usepackage{color}

\usepackage{extarrows}

\usepackage{geometry}
\geometry{left=2.5cm,right=2.5cm,top=2.8cm,bottom=2.8cm}
\parskip=4pt

\newtheorem{myDef}{Definition}[section]
\newtheorem{defn}[myDef]{Definition}
\newtheorem{prop}[myDef]{Proposition}
\newtheorem{exmp}[myDef]{Example}

\newtheorem{lem}[myDef]{Lemma}
\newtheorem{rem}[myDef]{Remark}
\newtheorem{thm}[myDef]{Theorem}
\newtheorem{cor}[myDef]{Corollary}
\newtheorem{alg}[myDef]{Algorithm}

\def\X{{\mathbb{X}}}
\def\Y{{\mathbb{Y}}}
\def\U{{\mathbb{U}}}
\def\V{{\mathbb{V}}}
\def\W{{\mathbb{W}}}
\def\G{{\mathbb{G}}}
\def\HS{{\mathbb{H}}}
\def\ES{{\mathbb{E}}}

\def\C{{\mathcal{C}}}

\def\deg{\hbox{\rm{deg}}}

\def\Z{{\mathbb{Z}}}

\def\N{{\mathbb{N}}}
\def\R{{\mathbb{R}}}
\def\C{{\mathbb{C}}}
\def\F{{\mathbb{F}}}
\def\E{{\mathbb{E}}}

\usepackage{environ}
\NewEnviron{iproof}[1][Proof]{\begin{proof}[\indent\bfseries #1]\BODY\end{proof}}{}

\def\FS{{\mathcal{F}}}
\def\GS{{\mathcal{G}}}
\def\IS{{\mathcal{I}}}
\def\CS{{\mathcal{C}}}
\def\BS{{\mathcal{B}}}

\def\bref#1{(\ref{#1})}

\def\bit{{{\rm{bit}}}}

\begin{document}

\title{\bf Quantum Algorithm for Optimization and Polynomial System Solving over Finite Field and Application to Cryptanalysis\thanks{The work is supported by grants NKRDPC No. 2018YFA0306702 and NSFC No. 11688101.}}
\author{Yu-Ao Chen$^{1,2}$, Xiao-Shan Gao$^{1,2}$, Chun-Ming Yuan$^{1,2}$\\
$^{1}$KLMM, Academy of Mathematics and Systems Science\\
 Chinese Academy of Sciences, Beijing 100190, China\\
$^{2}$University of Chinese Academy of Sciences, Beijing 100049, China\\
Email: xgao@mmrc.iss.ac.cn}
\date{}
\maketitle

\begin{abstract}
\noindent
In this paper, we give quantum algorithms for two fundamental
computation problems:
solving polynomial systems over finite fields
and optimization where the arguments of the objective
function and constraints take values from
a finite field or a bounded interval of integers.
The quantum algorithms can solve these problems
with any given success probability
and have polynomial runtime complexities in the size of the input,
the degree of the inequality constraints,  and
the condition number of certain matrices derived from the problem.
So, we achieved exponential speedup for these problems
when their condition numbers are small.
As applications, quantum algorithms are given to three
basic computational problems in cryptography:
the polynomial system with noise problem,
the short integer solution problem,
the shortest vector problem,
as well as the cryptanalysis for the lattice based NTRU cryptosystem.
It is shown that these problems and NTRU can against quantum computer attacks
only if their condition numbers are large, so
the condition number could be used as a new criterion
for the lattice based post-quantum cryptosystems.

%

\vskip10pt
\noindent
{\bf Keywords.}
Quantum algorithm,
polynomial system solving,
integer programming,
finite field,
polynomial system with noise,
short integer solution problem,
shortest vector problem,
cryptanalysis of NTRU,
(0,1)-programming, D-Wave.
\end{abstract}

\section{Introduction}
Solving polynomial systems and optimization over finite fields are fundamental
computation problems in mathematics and computer science, which are also  typical NP hard problems.
In this paper, we give quantum algorithms to these problems,
which could be exponential faster than the traditional methods under certain conditions.

\subsection{Main results}

Let $\F_q$ be a finite field, where $q=p^m$ for a prime number $p$ and $m\in\N_{\ge1}$.
Let $\FS=\{f_1,\ldots,f_r\}\subset\F_q[\X]$ be a set of polynomials in
variables $\X=\{x_1,\ldots,x_n\}$ and with {\em total sparseness} $T_\FS = \sum_{i=1}^r \#f_i$, where $\#f$ denotes the number of terms in $f$.
For $\epsilon\in(0,1)$, we show that
\begin{thm}\label{th-m1}
There is a quantum algorithm which decides whether $\FS=0$
has a solution in $\F_q^n$ and computes one if $\FS=0$ does have solutions in $\F_q^n$,
with success probability at least $1-\epsilon$
and complexity $\widetilde O(T_\FS^{3.5}D^{3.5}m^{5.5}
\log^{4.5} p\kappa^2\log1/\epsilon)$,
where $D={n+}\sum_{i=1}^n \lfloor\log_2 \max_{j}\deg_{x_i}f_j\rfloor$, $T_\FS$ is the total sparseness of $\FS$, and $\kappa$ is the condition number of $\FS$
(see Theorem \ref{th-qfq} for definition).
\end{thm}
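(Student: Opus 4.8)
The plan is to reduce the problem, through a chain of faithfulness-preserving steps, to the preparation and measurement of the solution of a sparse linear system, which is then handled by a quantum linear-system solver of Harrow-Hassidim-Lloyd (HHL) type; the condition number $\kappa$ of Theorem~\ref{th-qfq} will be precisely the condition number of the matrix of that linear system, and every size parameter along the way will stay polynomial in $T_\FS$, $D$, $m$, and $\log p$.

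First I would \emph{Booleanize} the system. Writing $\F_q=\F_p[t]/(h(t))$ with $\deg h=m$ and encoding each element of $\F_p$ by $\lceil\log_2 p\rceil$ bits, each variable $x_i$ becomes a block of $O(m\log p)$ Boolean variables, and each field addition and multiplication becomes a bounded-degree Boolean polynomial (carries absorbed by auxiliary bits). A power $x_i^{k}$ with $k\le\max_j\deg_{x_i}f_j$ is built by binary exponentiation, which is where the $\lfloor\log_2\deg_{x_i}f_j\rfloor$ summands — hence $D$ — enter: the number of Boolean variables and Boolean equations produced is $\widetilde O(T_\FS D m\log p)$. A standard degree reduction, introducing a fresh variable for each occurring product, then turns this into a \emph{sparse quadratic} Boolean system $\GS=\{g_1,\dots,g_s\}$ of the same asymptotic size whose $\F_2$-solutions are in explicit bijection with the solutions of $\FS=0$ in $\F_q^n$.

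Next I would convert the search for a $\{0,1\}$-solution of $\GS$ into a \emph{linear} problem. Lifting each $g_k$ to an integer polynomial, $\GS$ holds over $\F_2$ exactly when $g_k(\mathbf y)=2u_k$ for integers $u_k$, and each quadratic monomial $y_iy_j$ may be replaced by a fresh $\{0,1\}$ variable $w_{ij}$ pinned down by the usual linear inequalities $w_{ij}\le y_i$, $w_{ij}\le y_j$, $w_{ij}\ge y_i+y_j-1$. This yields a sparse linear system $A\mathbf z=\mathbf b$ over $\Q$ in the combined vector $\mathbf z=(\mathbf y,\mathbf w,\mathbf u)$, whose integral $\{0,1\}$-feasible points correspond to solutions of $\GS$. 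One then applies HHL to $A\mathbf z=\mathbf b$, obtaining in time $\widetilde O(\mathrm{poly}(\text{size})\,\kappa^2)$ a state close to $|\widehat{\mathbf z}\rangle$, reads out a candidate by measurement with amplitude amplification, and checks it by back-substitution. To \emph{produce} a solution rather than merely detect one, I would run the decision version bit by bit — fix the first Boolean coordinate, recurse on the reduced system, and so on — each of the $\widetilde O(D m\log p)$ stages boosted to failure probability $\epsilon/\mathrm{poly}$, which contributes the $\log(1/\epsilon)$ factor; tracking the exponents of $T_\FS$, $D$, $m$, $\log p$, and $\kappa$ through this pipeline gives the stated bound $\widetilde O(T_\FS^{3.5}D^{3.5}m^{5.5}\log^{4.5}p\,\kappa^2\log(1/\epsilon))$.

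The main obstacle is the interface between the continuous quantum primitive and the discrete target: HHL returns a normalized real vector, so one must argue that the relevant solution is sufficiently separated (or design $A$ so that rounding its entries is provably correct) to recover the exact $\{0,1\}$ answer with only polynomial overhead, and one must keep the condition number of $A$ — a feature of the reduction, not of $\FS$ itself — under control and identify it with the $\kappa$ defined in Theorem~\ref{th-qfq}. Bounding the sparsity and norm of $A$ after the Booleanization and degree reduction, and checking that the bit-by-bit search does not compound the approximation error, are the remaining technical points.
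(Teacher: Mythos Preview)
Your reduction chain up to the sparse quadratic Boolean system is essentially what the paper does: reduce $\F_q$ to $\F_p$ via the degree-$m$ extension, write each $\F_p$-variable in $O(\log p)$ bits, handle high powers by binary exponentiation (this is precisely where $D$ enters), and degree-reduce to a quadratic system. The paper's map $\FS\mapsto P(G(Q(\FS)))$ packages exactly these steps.

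The gap is your linearization. You replace $y_iy_j$ by a fresh variable $w_{ij}$ constrained by the three \emph{inequalities} $w_{ij}\le y_i$, $w_{ij}\le y_j$, $w_{ij}\ge y_i+y_j-1$, and then say this ``yields a sparse linear system $A\mathbf z=\mathbf b$.'' It does not: these are inequalities, and even after adding slacks you get a highly underdetermined system of equations. HHL returns (a state proportional to) the minimum-norm solution $A^{+}\mathbf b$, which has no reason to be the $\{0,1\}$-point you want, and no amount of rounding or amplitude amplification fixes that without a further idea. You correctly identify this as ``the main obstacle,'' but naming it is not closing it---recovering a discrete Boolean solution from the HHL output \emph{is} the hard part, and your proposal does not supply a mechanism.

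The paper does not attempt this step at all; it invokes the black-box procedure \textbf{QBoolSol} of \cite{qabes} (restated here as Theorem~\ref{th-m2}), which takes a polynomial system over $\C$ and returns a Boolean solution with complexity $\widetilde O(n^{2.5}(n+T_\BS)\kappa^2\log(1/\epsilon))$; the condition number $\kappa$, the $\log(1/\epsilon)$ factor, and the discrete-extraction issue are all handled inside that reference. The present paper's contribution to Theorem~\ref{th-m1} is only the reduction $\FS\mapsto P(G(Q(\FS)))$ together with the bookkeeping $N=\widetilde O(mT_\FS D\log p)$, $T=\widetilde O(m^3T_\FS D\log^2 p)$, which plugged into Theorem~\ref{th-m2} yields the stated exponents.
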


The complexity of a quantum algorithm is the number of
quantum gates needed to solve the problem.
Since $T_\FS, D, \log p^m$ are smaller than the input size,
the complexity of the algorithm is polynomial
in the input size and the condition number,
which means that we can solve polynomial systems over finite fields using quantum computers with any given success probability and in polynomial-time
if the condition number $\kappa$ of $\FS$ is small, say when $\kappa$ is poly$(n,D)$.

We also give a quantum algorithm to solve the following optimization problem.
\begin{eqnarray}\label{eq-op}
\min_{\X\in\F_p^n,\Y\in\Z^m} o(\X,\Y)&&\hbox{ subject to }\cr
&&
  f_j(\X)=0 \mod p,\, j=1,\ldots,r; \\
&& 0\le g_i(\X,\Y)\le b_i,\, i=1,\ldots,s;\,0\le y_k\le u_k,k=1,\ldots,m, \nonumber
\end{eqnarray}
where $\FS=\{f_1,\ldots,f_r\}\subset\F_p[\X]$,
$\Y=\{y_1,\ldots,y_m\}$,
$\GS_o=\{o,g_1,\ldots,g_s\}\subset\Z[\X,\Y]$, and
$b_1,\ldots,b_s,$ $u_1,\ldots,u_m\in\N$.
The complexity of the algorithm is polynomial in
the size of the input, $\deg(g_i)$, $\deg(o)$,
and the condition number of the problem (see Theorem \ref{th-opt1} for definition).
Since Problem \ref{eq-op} is NP-hard, the algorithm gives an exponential speedup
over traditional methods if the condition number is small, say poly$(n,m)$.

Note that for $q=p$, Problem \bref{eq-op} includes polynomial system solving over
$\F_q$ as a special case. Problem \bref{eq-op} is meaningless for $\F_q$
with $q=p^m$ and $m>1$, since $\F_q$ cannot be embedded into $\Z$.

We apply our methods to three computational problems
widely used in cryptography:
the {\em polynomial systems with noise problem} (PSWN) \cite{alb1,JH,HL},
the {\em short integer solution problem} (SIS)  \cite{aj1},
the {\em shortest vector problem} (SVP) \cite{alb2,svp1,svp2}.
We also show how to recover the private keys
for the latticed based cryptosystem NTRU with our algorithm.
The complexity for solving all of these problems is polynomial
in the input size and their condition numbers.

The latticed based computational problems SVP and LWE are the bases
for 23 of the 69 submissions for the call by NIST
to standardize the post-quantum public-key encryption systems \cite{alb2}.
LWE is another important problem in cryptography and
can be reduced to the SIS problem~\cite{reg1}.
In theory, our results imply that the 23 proposed cryptosystems can against
the attack of quantum computers only if the related condition
numbers are large.
So, the condition number could be used as a new criterion
for lattice based post-quantum cryptosystems.

Let $p$ be a prime and $\FS=\{f_1,\ldots,f_r\}\subset\F_p[\X]$ with $r\gg n$.
The PSWN  is to find an $\X\in\F_p^n$  which satisfies the maximal number of
equations in $\FS$. The problem is also called MAX-POSSO \cite{alb1,HL}.
Our quantum algorithm for PSWN has complexity $\widetilde O(n^{3.5}T_\FS^{3.5}\log^{8} p\kappa^2)$, where $\kappa$ is the condition number of the problem.
The PSWN is very hard in the sense that,
even for the {\em linear system with noise} (LSWN) over $\F_p$,
to find an $\X$ satisfying more than $1/p$ of the equations is NP
hard \cite{JH,max-mq}.

Lattice-based cryptography began in 1996 with a seminal work by Ajtai \cite{aj1},
who presented a family of one-way functions based on the SIS.
%
The SIS problem is to find a solution  of a homogenous
linear system $A\X=0\mod p$ for $A\in\F_p^{r\times n}$,
such that $||\widehat{\X}||_2$  is smaller than a given bound.
Our quantum algorithm for SIS has complexity $\widetilde O((n\log p+r)^{2.5}(T_A\log p+n\log^2p)\kappa^2)$, where $T_A$ is the number of nonzero elements in $A$
and $\kappa$ is the condition number of the problem.

The SVP and CVP are two basic NP-hard problems widely used in cryptography.
The SVP is to find a nonzero vector with the smallest
Euclidean norm in a lattice in $\R^m$.
The CVP is to find a vector in a lattice, which
is closest to a given vector.
The SIS \cite{aj1} and LWE \cite{reg1} are the randomized versions of SVP and CVP, respcetively.
Our quantum algorithm for SVP has complexity
$\widetilde O(m(n^{7.5} + m^{2.5})(n^3+\log h)\log^{4.5}  h \kappa^2)$,
where  $n$ is the rank of the lattice,  $h$ is the maximal value in the generators of the lattice, and $\kappa$ is the condition number of  the problem.
Our quantum algorithm for CVP has a similar complexity.

NTRU is a  lattice-based public key cryptosystem
proposed  by Hoffstein, Pipher and Silverman \cite{NTRU},
which is  one of the most promising candidates for
post-quantum cryptosystems.
Our quantum algorithm can be used
to recover the private key from the public key in time
$\widetilde O(N^{4.5}\log^{4.5} q\kappa^2)$
for an NTRU with parameters $(N,p,q)$ with $q >p$.
In particular, we show that the three versions of NTRU recommended
in \cite{NTRU} have the desired security against quantum computers only if their condition numbers are large.

\subsection{Main idea of the algorithm}
\label{sec-i2}

Let $\FS\subset\C[\X]$ be a set of polynomials over $\C$.
A solution of $\FS$ is called Boolean if its components
are either $0$ or $1$. Similarly, a variable $x$ is called {\em a Boolean variable} if it satisfies $x^2-x=0$.
In \cite{qabes}, we give a quantum algorithm\footnote{No knowledge of quantum algorithm is needed to read this paper. What we do is to use traditional methods to reduce the problems
to be solved to this result.} to find
Boolean solutions of a polynomial system over $\C$,
which is called B-POSSO in the rest of
this paper.
The main idea of the quantum algorithms proposed in this paper
is to reduce the problem to be solved to B-POSSO,
under the condition that the number of variables and the
total sparseness of the new polynomial system is polynomial
in the size of the original polynomial system.

Our algorithm for  problem \bref{eq-op} consists of three main steps:
(1) The equational constraints $f_j(\X)=0 \mod p, j=1,\ldots,r$
are reduced into polynomial equations in Boolean variables over $\C$.
(2) The inequality constraints $0\le g_i(\X,\Y)\le b_i, i=1,\ldots,s$
are reduced into polynomial equations in Boolean variables over $\C$.
(3) The problem of finding the minimal value of the objective function
is  reduced several B-POSSOs.
We will give a brief introduction to each of these three steps below.

A key method used in our algorithm is to use a polynomial
in Boolean variables to represent the integers $0,1\ldots,b$ for
$b\in\Z_{>1}$.
Let $\theta_b(\G_\bit) = \sum_{k=0}^{\lfloor\log_2b\rfloor-1}G_k2^{k}+
(b+1-2^{\lfloor\log_2b\rfloor})G_{\lfloor\log_2b\rfloor}$,
where $\G_\bit =\{G_0,\ldots,G_{\lfloor\log_2b\rfloor}\}$
is a set of Boolean variables. Then, the values of
$\theta_b(\G_\bit)$ are exactly $0,1\ldots,b$.

For $\FS\subset\F_p[\X]$ and $\F_p = \{0,1,\ldots,p-1\}$,
we use three steps to reduce the problem of finding a solution of $\FS$ in $\F_p$
to a B-POSSO.
(1) $\FS$ is reduced to a quadratic polynomial system (MQ) $\FS_1$ by introducing new variables.
(2) Each variable in $\FS_1$ is expanded as
$x_i=\theta_{p-1}(\X_i)$ and $\FS_1$ is reduced to another
MQ $\FS_2$ in Boolean variables $\X_i=\{X_{ij},j=0,\ldots,\lfloor\log_2(p-1)\rfloor\}$.
Since $\FS_1$ is quadratic, the total sparseness of $\FS_2$ is well controlled.
(3) We obtain a polynomial over $\C$ from $\FS_2$ as follows
$\FS_3 = \{g-\theta_{\#g}(\U_g)p\,|\, g\in\FS_2\}$,
where $\U_g$ is a set of Boolean variables.
It is shown that solutions of $\FS$ in $\F_p$ can be recovered
from Boolean solutions of $\FS_3$, which can be found with the
quantum algorithm from \cite{qabes}.

We also reduce an inequality constraint
$0\le g(\X,\Y)\le b$ for $g\in\Z[\X,\Y]$ and $b\in\Z_{>0}$ into
a B-POSSO.
There exist $\X$ and $\Y$ such that $0\le g(\X,\Y)\le b$ if and only if
$g(\Y,\Y)-\theta_{b}(\G_g)=0$
has a solution for $\X$, $\Y$, and $\G_{g}$,
where $\G_g$ is a set of Boolean variables.
We reduce $g(\Y,\Y)$ into a polynomial in Boolean variables
by first reducing $g(\Y,\Y)$ into an MQ and then
expanding the variables into Boolean variables by using the $\theta$ function.
%
Let $d$ be the degree of $g_i$. Then the values of $g_i$
is exponential in $d$ and hence the number of Boolean variables needed
is polynomial in $d$. This is why the complexity of the algorithm depends
on $d$.

Since all variables are bounded, the objective function
$o$ is also bounded, and we can assume $o\in[0,u)$ for some $u\in\N$.
We design a novel search scheme to reduce
the minimization of $o(\Y,\Y)\in[0,u)$  to several B-POSSOs.
The minimal value of $o$ is found by bisecting the feasible interval $[0,u)$ recursively
into subintervals of the form $[\alpha,2^{\beta})$ and
deciding whether $o\in[\alpha,2^{\beta})$ has a solution,
which is equivalent to solving the equation $o-(\alpha+\sum_{j=0}^{\beta-1} {H_{j}} 2^j)=0$
for Boolean variables $H_{j}$.
As a consequence, we can find the minimal value of $o$ by solving
several B-POSSOs.

\subsection{Relation with existing work}

Problem \bref{eq-op} includes many important problems
as special cases, such as the
polynomial system solving over finite fields \cite{dingjt},
PSWN \cite{alb1,JH,HL,max-mq},
SIS  \cite{aj1},
SVP/CVP \cite{alb2,svp1,svp2},
the $(0,1)$-programming problem \cite{lip-survey},
the {\em quadratic unconstrained binary optimization problem} which is the mathematical problem that can be solved by the D-Wave System~\cite{Kadowaki},
which are all important computation problems and were widely studied.
%

Comparing to the existing work, our algorithm has two major advantages.
First, we give a universal approach to a very general problem.
Second, the complexity of our algorithm is polynomial in the inputs size,
the degree of the inequalities, and the condition number of
 the problem.
Since the problems under consideration are NP hard,
existing algorithms are exponential in some of
the parameters such as the number of variables.
In this aspect, we give a new way of looking at these NP hard problems
by reducing the computational difficulty to
the size of the condition number.

Our algorithm
is based on the quantum algorithm to solve B-POSSOs
proposed in \cite{qabes}, which in turn is based
on the HHL quantum algorithm and its variants to solve linear systems \cite{hhl,hhl-a,hhl-c}.
Comparing to the HHL algorithm, we can give the exact solution,
while the HHL algorithm can only give the quantum state.
The speedup of our algorithms comes from the HHL algorithm.
The limitation on the condition number is inherited
from the HHL algorithm, and it is proved in \cite{hhl} that the dependence on condition
number cannot be substantially improved.
Also note that, the best classic numerical method
for solving an order $N$ linear equation $Ax=b$ has complexity $\widetilde{O}(N\sqrt{\kappa})$, which also depends on the condition number  $\kappa$ of $A$ \cite{c19}.

%
%
%
%

The method of treating the inequality constraints with the
function $\theta_b(\G_\bit)$ simplifies the computational significantly.
The binary representation $\eta_b = \sum_{i=0}^{\lfloor\log_2(b)\rfloor}B_i2^i$ for $b$
is often used in the literature to represent the integers $0,1,\ldots,b$. The values of $\eta_b$
is $0,1,\ldots, 2^{\lfloor\log_2(b)\rfloor+1}-1$, which
may contain integers strictly larger than $b$ and cannot be used to represent
inequalities.
In \cite{arora1,fg1}, the inequality $0\le g\le b$ is reduced
to  $\prod_{i=0}^{b} (g-i)=0$.
Our reduction $\theta_b(\G_\bit)$ is better, which does not increase
the degree of the equation and the size of the equation
is increased in the logarithm scale,
while the method used in \cite{arora1,fg1} increases
the degree by a factor $b$ and increases the size
of the equation exponentially.
%

The rest of this paper is organized as follows.
In Section 2, we define the $\theta_b(\G_\bit)$ function and give an explicit formula to reduce
a polynomial system into an MQ.
In Section 3, we present the algorithm for solving polynomial systems  over finite fields.
In Section 4, we show how to reduce the inequality constraints in
problem \bref{eq-op} to a B-POSSO.
In Section 5, we present the algorithm for solving problem \bref{eq-op}.
In Section 6, we present a quantum algorithm for  PSWN.
In Section 7, we present a quantum algorithm for SIS.
In Section 8, we present a quantum algorithm for SVP/CVP.
In Section 9, we present a quantum algorithm to recover the private key for NTRU.
In Section 10, conclusions are given.

\section{Two basic reductions}
In this section, we give two basic reductions frequently used in the paper:
to represent an integer interval with a Boolean polynomial
and to reduce a polynomial system to an MQ.

\subsection{Represent an integer interval with a Boolean polynomial}
A variable $X$ is called a {\em Boolean variable} if it satisfies $X^2-X=0$.
In this paper, we use uppercase symbols to represent  Bollean variables.
A polynomial is called a {\em Boolean polynomial} if it is in a set of Boolean variables.
In this section, we will construct a Boolean polynomial whose values are
exactly the integers $0, 1,\ldots,b$ for a given positive integer $b>0$.

Set $s=\lfloor\log_2(b)\rfloor$ and
introduce $s+1$ Boolean variables $\mathbb{B}_\bit=\{B_0,\ldots,B_s\}$.
Inspired by $b=(2^s-1)+(b+1-2^s)$, we introduce the function $\theta_b(\mathbb{B}_\bit)$:
$\theta_1(\mathbb{B}_\bit)= G_0$ and for $b>1$
\begin{equation}\label{eq-theta}
\theta_b(\mathbb{B}_\bit)=\sum_{i=0}^{s-1}2^iB_i+(b+1-2^s)B_{s}.
\end{equation}
\begin{lem}\label{lm-bb}
When evaluated in $\C$ or $\F_p=\{0,1,\ldots,p-1\}$ with $p> b$, $\theta_b(\mathbb{B}_\bit)$ is a surjective map from $\{0,1\}^{s+1}$ to $\{0,1,\ldots,b\}$.
Furthermore,
$\#\mathbb{B}_\bit = \#\theta_b(\mathbb{B}_\bit)=\lfloor\log_2(b)\rfloor+1$.
\end{lem}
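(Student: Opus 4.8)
The plan is to split on the value of the top Boolean variable $B_s$ and show that the two resulting images tile $\{0,1,\ldots,b\}$ exactly, and then to count the terms by checking that no coefficient degenerates.

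First I would record the arithmetic consequence of $s=\lfloor\log_2 b\rfloor$, namely $2^s\le b<2^{s+1}$, so that the coefficient $c:=b+1-2^s$ of $B_s$ satisfies $1\le c\le 2^s$. For surjectivity: when $B_s=0$ the polynomial reduces to $\sum_{i=0}^{s-1}2^iB_i$, which as $(B_0,\ldots,B_{s-1})$ runs over $\{0,1\}^s$ realizes precisely the standard $s$-bit binary expansions, that is, all of $\{0,1,\ldots,2^s-1\}$; when $B_s=1$ it equals $c+\sum_{i=0}^{s-1}2^iB_i$, realizing $\{c,c+1,\ldots,c+2^s-1\}=\{c,\ldots,b\}$, since $c+2^s-1=b$. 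Because $1\le c\le 2^s$, the second block starts no later than one past the end $2^s-1$ of the first block, so the two blocks together cover $\{0,1,\ldots,b\}$ with no gap, and no value exceeds $b$; hence $\theta_b$ maps onto $\{0,1,\ldots,b\}$. This reasoning is literal over $\C$, and over $\F_p$ with $p>b$ it transfers unchanged because $0,1,\ldots,b$ are still $b+1$ distinct residues, so the image in $\F_p$ is again $\{0,1,\ldots,b\}$.

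Next I would count. By definition $\#\mathbb{B}_\bit=s+1=\lfloor\log_2 b\rfloor+1$. The monomials occurring in $\theta_b$ are $B_0,2B_1,\ldots,2^{s-1}B_{s-1}$ together with $cB_s$, so there are exactly $s+1$ of them, and it only remains to see that none of the coefficients $1,2,\ldots,2^{s-1},c$ vanishes in the ground ring: over $\C$ this is obvious, and over $\F_p$ with $p>b$ it follows from $2^{s-1}<2^s\le b<p$ and $1\le c\le 2^s\le b<p$. Hence $\#\theta_b(\mathbb{B}_\bit)=s+1=\lfloor\log_2 b\rfloor+1$. The degenerate case $b=1$, where $\theta_1=B_0$, is checked directly: $s=0$, one Boolean variable, one term, image $\{0,1\}$.

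I do not anticipate a genuine obstacle; the statement is elementary. The only two points needing a line of care are (i) that the blocks $\{0,\ldots,2^s-1\}$ and $\{c,\ldots,b\}$ abut or overlap rather than leaving a hole, which is exactly the inequality $c\le 2^s$ extracted from $b<2^{s+1}$, and (ii) the nonvanishing of the coefficients modulo $p$, which is where the hypothesis $p>b$ is used.
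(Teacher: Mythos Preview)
Your proposal is correct and follows essentially the same approach as the paper: split on the value of $B_s$, use the inequality $2^s\le b<2^{s+1}$ to show the two blocks $\{0,\ldots,2^s-1\}$ and $\{c,\ldots,b\}$ cover $\{0,\ldots,b\}$, then count terms via $c=b+1-2^s>0$. Your treatment is arguably a touch more careful than the paper's---you explicitly verify that the image does not exceed $b$ and that the coefficients $2^i$ and $c$ remain nonzero modulo $p$---but the structure of the argument is the same.
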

\begin{proof}
We first assume that $\theta_b(\mathbb{B}_\bit)$ is evaluated over $\C$.
It is easy to check this lemma when $b=1$.
When $b>1$, from the definition of $s$, we have $b/2<2^s\le b$ and hence $2^s-1< b$.
Since the values of $\sum_{i=0}^{s-1}2^iB_i$ are $0,1,\ldots,2^s-1$, for any integer $n\in[0,2^s-1]$, $n$ has a preimage of map $\theta_b(\mathbb{B}_\bit)$, where $B_{s}=0$.
Now consider an integer $n\in[2^s,b]$.
Since $n \ge 2^s$, we have  $n-(b+1-2^s) \ge 2 \cdot 2^s -b -1 > 2 \cdot b/2 +1 -b-1>-1\ge0$.
Since $n \le b$, we have  $n-(b+1-2^s) \le 2^s -1$.
Thus $0\le n-(b+1-2^s)\le 2^s-1$, and then $n$ has a preimage of map $\theta_b(\mathbb{B}_\bit)$, where $B_{s}=1$.
It is clear  $\#\mathbb{B}_\bit = \lfloor\log_2(b)\rfloor+1$.
Since $b+1-2^s>0$, we have
$\#\theta_b(\mathbb{B}_\bit)=\lfloor\log_2(b)\rfloor$+1.
The lemma is also valid when $\theta_b(\mathbb{B}_\bit)$ is evaluated over $\F_p$,
since all values in the computation are $\le p-1$.
\end{proof}

For instance,
$\theta_6(\mathbb{B}_\bit)=B_0+2B_1+3B_{2}$,
$\theta_7(\mathbb{B}_\bit)=B_0+2B_1+4B_{2}$,
$\theta_8(\mathbb{B}_\bit)=B_0+2B_1+4B_{2}+B_{3}$.
\begin{rem}
It is easy to check that $\theta_b$ is injective
if and only if $b= 2^k-1$ for some positive integer $k$.
For instance, $\theta_6$ is not injective: $3$ has two preimages
$B_0=1,B_1=1, B_2=0$ and $B_0=0,B_1=0,B_2=1$.
\end{rem}
%

\subsection{Reduce polynomial system to MQ}
It is well known that a polynomial system can be reduced to an MQ
by introducing some new indeterminates. In this section,
we give an explicit reduction which is needed in the complexity analysis in this paper.

For any field $F$, let $F[\X]$ be the polynomial ring over $F$ in the indeterminates $\X=\{x_1,\ldots,x_n\}$.
Denote the sparseness (number of terms) of $f\in F[\X]$ as $\#f$.
For $\FS=\{f_1,\ldots,f_r\}\subset F[\X]$,
denote $T_{\FS}=\sum_{i=1}^r\#f_i$ to be the {\em total sparseness} of $\FS$,
$N_{\FS}=\#\X=n$ to be the {\em number of indeterminates} in $\FS$,
$d_i=\max_j \deg_{x_i}(f_j)$ to be the degree of $\FS$ in $x_i$,
$M(\FS)$ to be the set of all monomials in $\FS$,
and $C(\FS)$ to be the size of the coefficients of the polynomials in $\FS$,
$(\FS)_{F[\X]}$ to be the ideal generated by $\FS$ in $F[\X]$.

We want to introduce some new indeterminates to rewrite $\FS$ as an MQ.

\begin{lem}\label{lm-Q}
Let $\FS=\{f_1,\ldots,f_r\}\subset F[\X]$. We can introduce a set of new indeterminates $\V$ and an {\rm MQ} $Q(\FS)\subset F[\X,\V]$ such that
$(\FS)_{F[\X]}= (Q(\FS))_{F[\X,\V]} \cap F[\X]$.
Furthermore, we have  $\#\V=(T_{\FS}+1)\sum_{i=1}^n\lfloor\log_2d_i\rfloor+nT_{\FS}= O(T_{\FS}D)$,
$N_{Q(\FS)}= n + \#\V =O(T_{\FS}D)$,
$\#Q(\FS)=r+\#\V=O(T_{\FS}D)$,
$T_{Q(\FS)}=T_\FS + 2\#\V =O(T_{\FS}D)$, and $C(Q(\FS)) = C(\FS)$,
where $D=n + \sum_{i=1}^n\lfloor\log_2d_i\rfloor$ and $d_i=\max_j \deg_{x_i}(f_j)$.
%
\end{lem}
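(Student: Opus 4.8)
The plan is to give an explicit, constructive reduction in two stages: first linearize all the high-degree univariate power factors using binary exponentiation, then bilinearize the resulting multilinear monomials. For the first stage, for each variable $x_i$ with degree bound $d_i = \max_j \deg_{x_i}(f_j)$, I would introduce Boolean-style auxiliary variables $v_{i,k}$ meant to represent $x_i^{2^k}$ for $k = 1, \ldots, \lfloor \log_2 d_i \rfloor$, enforced by the quadratic chain relations $v_{i,1} = x_i^2$ and $v_{i,k} = v_{i,k-1}^2$ for $k \ge 2$. This produces $\sum_{i=1}^n \lfloor \log_2 d_i \rfloor$ new indeterminates and equally many quadratic equations. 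Using the binary expansion of each exponent $e \le d_i$, any power $x_i^e$ occurring in a monomial of $\FS$ can then be rewritten as a product of at most $\lfloor \log_2 d_i \rfloor + 1$ of the variables $\{x_i\} \cup \{v_{i,k}\}$, so after this substitution every monomial of $\FS$ becomes multilinear (squarefree) in an enlarged variable set, at the cost of at most $n$ factors per monomial per original variable — hence each monomial has at most $n + \sum_i \lfloor\log_2 d_i\rfloor = D$ factors.

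For the second stage, I would take each of the (at most $T_\FS$) multilinear monomials and collapse it to degree $\le 2$ by repeatedly introducing a fresh indeterminate for a product of two existing factors, together with the defining quadratic equation; a monomial with $\ell$ factors needs $\ell - 2$ such new variables and equations. Since $\ell \le D$ and there are at most $T_\FS$ monomials, this contributes $O(T_\FS D)$ new variables and equations. Collecting both stages gives $\#\V = (T_\FS+1)\sum_{i=1}^n \lfloor\log_2 d_i\rfloor + n T_\FS$: the $\sum_i\lfloor\log_2 d_i\rfloor$ term from the exponent-doubling variables, and a per-monomial count for the bilinearization variables, bounded by $T_\FS$ monomials times roughly $D$ bookkeeping variables each, which one arranges to match the stated closed form. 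The bounds $N_{Q(\FS)} = n + \#\V$, $\#Q(\FS) = r + \#\V$ (the $r$ rewritten original equations plus one defining equation per new variable), and $T_{Q(\FS)} = T_\FS + 2\#\V$ (each defining equation $v = ab$ contributes $2$ terms) then follow by direct counting, and $C(Q(\FS)) = C(\FS)$ is immediate since no coefficient arithmetic is performed — all new equations have coefficients in $\{1,-1\}$ and the rewritten $f_j$ keep their original coefficients.

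The ideal-theoretic claim $(\FS)_{F[\X]} = (Q(\FS))_{F[\X,\V]} \cap F[\X]$ is the conceptual core. The inclusion $\subseteq$ is clear: each $f_j$ is congruent modulo the defining equations to its rewritten form in $Q(\FS)$, so $f_j \in (Q(\FS))$. For $\supseteq$, I would use the standard elimination argument: the defining equations express each new indeterminate as a polynomial in $\X$ (and previously-introduced new indeterminates), so $F[\X,\V]/(Q(\FS) \setminus \{\text{rewritten } f_j\})$ is isomorphic to $F[\X]$ via the substitution map sending each $v$ to the corresponding monomial in $\X$; under this isomorphism the rewritten $f_j$ map back to the original $f_j$, and any element of $(Q(\FS))$ lying in $F[\X]$ maps to itself, hence lies in $(\FS)_{F[\X]}$. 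Concretely one orders $\V$ so that each $v$ is defined in terms of strictly earlier variables, giving a triangular (and hence ideal-preserving) sequence of substitutions. The main obstacle I anticipate is purely bookkeeping: verifying that the explicit variable count matches the stated formula $(T_\FS+1)\sum_i\lfloor\log_2 d_i\rfloor + n T_\FS$ exactly rather than just up to the $O(T_\FS D)$ estimate — this requires being careful about whether the exponent-doubling variables $v_{i,k}$ are shared across all monomials (they are, contributing $\sum_i\lfloor\log_2 d_i\rfloor$) versus the bilinearization variables which are introduced per monomial, and tracking how the rewriting of a single monomial of degree up to $D$ uses up to $D$ factors drawn from a pool of size $n + \sum_i\lfloor\log_2 d_i\rfloor$; the arithmetic must be arranged so the two contributions sum to the claimed expression. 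Everything else is routine term-counting.
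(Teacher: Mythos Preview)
Your proposal is correct and follows essentially the same construction as the paper: first introduce the squaring chain $u_{i,1}=x_i^2,\ u_{i,k}=u_{i,k-1}^2$ to handle high univariate powers via binary exponentiation, then bilinearize each resulting multilinear monomial by a chain of fresh variables $v_j - v_{j-1}u_{j+1}$, and count. In fact you go a bit further than the paper by sketching the elimination argument for the ideal equality $(\FS)_{F[\X]}=(Q(\FS))_{F[\X,\V]}\cap F[\X]$, which the paper leaves implicit; your observation that the defining equations are triangular in the new variables is exactly what makes this routine.
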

\begin{proof}
If $\FS$ is already an MQ, set $Q(\FS) = \FS$ and $\V=\emptyset$.
First, we introduce new indeterminates $u_{ij}$ for $j=1,\ldots,\lfloor\log_2d_i\rfloor$ and new polynomials $u_{i1}-x_i^2$ and $u_{{i(j+1)}}-u_{ij}^2$ for $j=2,\ldots,\lfloor\log_2d_i\rfloor-1$.
It is clear that $x_i^{2^j}=u_{ij}$.
Without loss of generality, we assume $d_i\ge2$
and if $d_i\le 1$, then we do not need these $u_{ij}$.
Let  $\X^\alpha=\prod_{i=1}^nx_i^{\alpha_i}$ be a monomial of $\FS$, and $\alpha_i=\sum_{k=1}^{l_i}2^{\nu_{ik}}$ be the binary representation of $\alpha_i\le d_i$, where $l_i\le \lfloor\log_2 \alpha_i \rfloor+1\le \lfloor\log_2d_i\rfloor+1$ and $\nu_{i1}<\cdots<\nu_{il_i}$.
Thus
\begin{equation*}
\X^\alpha=\prod_{i=1}^n x_i^{\sum_{k=1}^{l_i}2^{\nu_{ik}}}=
\prod_{i=1}^n\prod_{k=1}^{{l_i}}x_i^{2^{\nu_{ik}}}
\equiv\prod_{i=1}^n\prod_{k=1}^{{l_i}}u_{i\nu_{ik}}.
\end{equation*}
Rewrite these $\{u_{ij}\}$ as $\{u_i\,|\,i=1,\ldots,L_\alpha\}$ and we have $\X^\alpha=\prod_{i=1}^{L_\alpha}u_i$, where $L_\alpha=\sum_kl_k \le \sum_{i=1}^n (\lfloor\log_2d_i\rfloor+1)\le \sum_{i=1}^n \lfloor\log_2d_i\rfloor +n$.
To rewrite this product as an MQ, we introduce new
indeterminates $\{v_1,\ldots,v_{L_\alpha-2}\}$
and quadratic polynomials $v_1-u_{1}u_{2}$, $v_{i}-v_{i-1}u_{i+1}$ for $i=2,\ldots,L_\alpha-2$ and $\X^\alpha=v_{L_\alpha-2}u_{L_\alpha}$.
Denote
$Q(\X^\alpha)=\{v_1-u_{1}u_{2}, v_{i}-v_{i-1}u_{i+1}, i=2,\ldots,L_\alpha-2\}$.
Finally, we obtain an MQ
\begin{eqnarray}\label{eq-Q}
Q(\FS)&=&\{
u_{i1}-x_i^2, u_{{i(k_i+1)}}-u_{ik_i}^2, i=1,\ldots,n, k_i=2,\ldots,\lfloor\log_2d_i\rfloor-1;\cr
&& \widehat{f}_j,j=1,\ldots,r\}
\cup_{\X^\alpha\in M(\FS)} Q(\X^\alpha)\subset F[\X,\V],
\end{eqnarray}
where $\V=\{u_{i},v_k\}$ and $\widehat{f}_i$ is obtained by replacing $\X^\alpha$ by $v_{L_\alpha-2}u_{L_\alpha}$ in $f_i$.
For convenience, we denote
\begin{eqnarray}\label{eq-Qf}
\widehat{Q}(f_j)&=&\widehat{f}_j,j=1,\ldots,r.
\end{eqnarray}

Let $\V=\{u_{i},v_k\}$ be the set of new indetermiantes.
It is clear that the number of these $u_{ij}$ is $ \sum_{i=1}^n\lfloor\log_2d_i\rfloor$.
To represent $\X^\alpha$, we need $\sum_{i=1}^n l_i-2\le \sum_{i=1}^n\lfloor\log_2d_i\rfloor+n$
new indeterminates $v_i$.
In total, we have $\#V\le \sum_{i=1}^n\lfloor\log_2d_i\rfloor+T_{\FS}(\sum_{i=1}^nl_i-2)
\le(T_{\FS}+1)\sum_{i=1}^n\lfloor\log_2d_i\rfloor+nT_{\FS} =O(T_\FS D)$.
Then, $N_{Q(\FS)}= \#\X + \#\V =O(T_{\FS}D)$, since $n \le D$.
$\#Q(\FS)=r+\#\V=O(T_{\FS}D)$, since $r \le D$.
$Q(\FS)$ contains $r$ polynomials $\widehat{Q}(f_j),j=1,\ldots,r$
and $\#V$ binomials.
Then $T_{Q(\FS)}=T_\FS + 2\#\V =O(T_{\FS}D)$.
Since we only introduce new coefficients $\pm 1$, we have $C(Q(\FS)) = C(\FS)$.
\end{proof}

\begin{exmp}\label{ex-Q}
Let $\FS=\{f_1=x_1^3x_2^5+2x_1^{7}x_2^5+3\}$.
%
We have $d_1=7$, $d_2=5$, and  $Q_1=\{u_{11}-x_1^2,u_{12}-u_{11}^2,u_{21}-x_2^2,u_{22}-u_{21}^2\}$.
Then $x_1^3x_2^5 = x_1 u_{11}x_2u_{22} = v_2u_{22}$,
$x_1^{7}x_2^5 = x_1 u_{11}u_{12}x_2u_{22} = v_5u_{22}$,
where
$Q_2=\{v_1-x_1u_{11}, v_2-x_2v_1,
       v_3-x_1u_{11},v_4-v_3u_{12},v_5-v_4x_2\}$.
Finally,  $Q(\FS)=Q_1\cup Q_2\cup \{v_{2}u_{22}+2v_{5}u_{22}+3\}$.
Note that the above representation is not optimal and we can use less
new variables to represent $f_1 = x_1v_{2}+2x_1v_2x_1^4 +3=x_1v_{2}+2x_1v_3' +3$,
where $v_3'=v_2u_{12}$.
\end{exmp}
%

\begin{rem}
As mentioned in Example \ref{ex-Q}, the representation for $Q(\FS)$ is not optimal.
The  binary decision diagram (BDD) \cite{bdd} can be used to give a better
representation for $Q(\FS)$ by using less variables $v_i$.
\end{rem}

\section{Polynomial system solving over finite fields}
\label{sec-posso}
%
Let $\FS=\{f_1,\ldots,f_r\}\subset\F_q[\X]$ be a finite set of polynomials over the finite field $\F_q$, $t_i=\#f_i$, and $T_\FS = \sum_{i=1}^r t_i$. In this section, we give a quantum algorithm to find a solution of $\FS$ in $\F_q^n$. Denote the solutions of $\FS$ in $\F_q^n$ by $\V_{\F_q}(\FS)$.
For a prime number $p$, we use the standard representation $\F_p=\{0,\ldots,p-1\}$.

\subsection{Reduce MQ over $\F_p$ to MQ in Boolean variables over $\C$}
\label{ss-fp}
Let   $\FS=\{f_1,\ldots,f_r\}\subset\F_p[\X]$ be an MQ,
$t_i=\#f_i$, and $T_\FS = \sum_{i=1}^r t_i$.
In this section, we will construct a set of Boolean polynomials over $\C$,
from which we can obtain $\V_{\F_p}(\FS)$.
%
The reduction  procedure consists of the following two steps.

{\bf Step 1}. We reduce $\FS$ to a set of polynomials in Boolean variables over $\F_p$.
If $p=2$, then the $x_i$ are already Boolean and we can skip this step. We thus assume $p>2$ and set
%
%
\begin{eqnarray}
x_i&=&   \theta_{p-1}(\mathbb{X}_i) =
  \sum_{j=0}^{\lfloor\log_2(p-1)\rfloor-1}X_{i,j}2^j+
(p-2^{\lfloor\log_2(p-1)\rfloor})X_{i,\lfloor\log_2(p-1)\rfloor},\cr
\X_{i}&=&\{X_{ij}, j=1,\ldots, \lfloor\log_2(p-1)\rfloor\},\label{eq-xbit}\\
\X_{\text{bit}}&=&\cup_{i=1}^n \X_i = \{X_{ij}\,|\,i=1,\ldots,n,\ j=0,\ldots,\lfloor\log_2(p-1)\rfloor\}.\nonumber
 \end{eqnarray}
where $\theta_{p-1}$ is defined in \bref{eq-theta} and
$X_{ij}$  are Boolean variables.
Let $f_i=\sum _{j=1}^{t_i}c_{i,j}\X^{\alpha_{ij}}$, where $\alpha_{ij}=(\alpha_{ij}(1),\ldots,\alpha_{ij}(n))\in\N^n$.
Substituting \bref{eq-xbit} into $\FS$, we have
\begin{eqnarray}\label{eq-s1}
f_{i\text{bit}}&=&\sum_{j=1}^{t_i}c_{i,j}
\prod_{k=1}^n( \theta_{p-1}(\mathbb{X}_i))^{\alpha_{ij}(k)} \in\F_p[\X_i],\\
B(\FS)&=&\{f_{1\text{bit}},\ldots,f_{r\text{bit}}\}\subset\F_p[\X_\text{bit}].\nonumber
\end{eqnarray}
%
For any set $S$, set $$H_S=\{x^2-x\,|\,x\in S\}.$$
We have
\begin{lem}\label{lm-st11}
There is a surjective morphism
 $\Pi_1:\V_{\F_p}(B(\FS),H_{\X_\bit}) \rightrightarrows \V_{\F_p}(\FS)$,
 where $\Pi_1(\X_{\bit})$ $ = (\theta_{p-1}(\mathbb{X}_1),$ $\ldots, \theta_{p-1}(\mathbb{X}_n))$.
Furthermore, $\#\X_\bit = O(n\log{p})$ and  the total sparseness of $B(\FS)$ is
$O(T_\FS \log^2 p)$.
\end{lem}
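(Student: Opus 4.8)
The plan is to verify the two claims in Lemma~\ref{lm-st11} separately: first the existence and surjectivity of the morphism $\Pi_1$, then the sparseness bounds.

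First I would establish the correspondence $\Pi_1$. The substitution \bref{eq-xbit} is exactly the map $\theta_{p-1}$ applied coordinatewise, and by Lemma~\ref{lm-bb} (applied with $b=p-1$, noting $p>p-1$ so the hypothesis $p>b$ holds), $\theta_{p-1}$ is a surjection from $\{0,1\}^{\lfloor\log_2(p-1)\rfloor+1}$ onto $\{0,1,\ldots,p-1\}=\F_p$ when evaluated over $\F_p$. Given a point $\X_\bit\in\V_{\F_p}(B(\FS),H_{\X_\bit})$, the constraint set $H_{\X_\bit}$ forces every $X_{ij}\in\{0,1\}$, so each $\theta_{p-1}(\mathbb{X}_i)$ takes a value in $\F_p$; call the resulting tuple $\Pi_1(\X_\bit)=(\theta_{p-1}(\mathbb{X}_1),\ldots,\theta_{p-1}(\mathbb{X}_n))$. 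Since $f_{i\bit}$ is by definition $f_i$ with $x_k$ replaced by $\theta_{p-1}(\mathbb{X}_k)$, we have $f_{i\bit}(\X_\bit)=f_i(\Pi_1(\X_\bit))$ as elements of $\F_p$, so $f_{i\bit}(\X_\bit)=0$ for all $i$ implies $\Pi_1(\X_\bit)\in\V_{\F_p}(\FS)$; thus $\Pi_1$ is well-defined into $\V_{\F_p}(\FS)$. For surjectivity, given any $(a_1,\ldots,a_n)\in\V_{\F_p}(\FS)$, use surjectivity of $\theta_{p-1}$ over $\F_p$ to pick, for each $i$, a Boolean preimage $\mathbb{X}_i^{(0)}\in\{0,1\}^{\lfloor\log_2(p-1)\rfloor+1}$ with $\theta_{p-1}(\mathbb{X}_i^{(0)})=a_i$; the concatenation lies in $\V_{\F_p}(H_{\X_\bit})$ and, by the same identity $f_{i\bit}=f_i\circ(\theta_{p-1},\ldots,\theta_{p-1})$, satisfies $B(\FS)=0$, hence is in $\V_{\F_p}(B(\FS),H_{\X_\bit})$ and maps to $(a_1,\ldots,a_n)$. (The ``$\rightrightarrows$'' notation signals that $\Pi_1$ is generally many-to-one, since $\theta_{p-1}$ need not be injective.)

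Next I would count variables and terms. The variable count is immediate: $\#\X_\bit=\sum_{i=1}^n\#\X_i=n(\lfloor\log_2(p-1)\rfloor+1)=O(n\log p)$. For the total sparseness of $B(\FS)$, the key point is that $\FS$ is assumed to be an MQ (the section hypothesis), so every monomial $\X^{\alpha_{ij}}$ appearing in $f_i$ has total degree at most $2$, i.e.\ is either a constant, a single $x_k$, or a product $x_kx_\ell$. Under the substitution, $x_k\mapsto\theta_{p-1}(\mathbb{X}_k)$, which has $\#\theta_{p-1}(\mathbb{X}_k)=\lfloor\log_2(p-1)\rfloor+1=O(\log p)$ terms by Lemma~\ref{lm-bb}; a degree-$2$ monomial $x_kx_\ell$ therefore expands into a product of two such sums, contributing at most $O(\log^2 p)$ terms before collecting. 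Summing over the $t_i$ monomials of $f_i$ and then over $i$ gives total sparseness $\sum_{i=1}^r t_i\cdot O(\log^2 p)=O(T_\FS\log^2 p)$. (One should note that collecting terms and reducing exponents via $X_{ij}^2=X_{ij}$ only decreases the count, so the bound is unaffected; over $\F_p$ we keep coefficients reduced mod $p$, which also does not increase the number of terms.)

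The main obstacle is not any single hard step but rather being careful about two bookkeeping points. First, the exponent in \bref{eq-xbit} versus the index range in \bref{eq-xbit}: the displayed sum for $x_i$ runs the ``full-weight'' bits $j=0,\ldots,\lfloor\log_2(p-1)\rfloor-1$ plus the capped top bit at $j=\lfloor\log_2(p-1)\rfloor$, and I must make sure the stated set $\X_i$ really contains all of $X_{i,0},\ldots,X_{i,\lfloor\log_2(p-1)\rfloor}$ (the indexing in the excerpt nominally starts at $j=1$ but the union $\X_\bit$ and the formula both include $j=0$; I would align these and state $\#\X_i=\lfloor\log_2(p-1)\rfloor+1$). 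Second, I must justify that the MQ hypothesis is exactly what keeps the sparseness blow-up at $O(\log^2 p)$ rather than $O(\log^{2d}p)$: if $\FS$ were allowed arbitrary degree $d$, a monomial would expand into a product of $d$ sums of size $O(\log p)$, giving $O(\log^d p)$ terms — this is precisely why the reduction to an MQ in the previous subsection (Lemma~\ref{lm-Q}) is applied first in the overall algorithm. Everything else is a direct consequence of Lemma~\ref{lm-bb} and the definition of $B(\FS)$.
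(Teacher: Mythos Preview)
Your proposal is correct and follows essentially the same approach as the paper's own proof: both invoke Lemma~\ref{lm-bb} for the surjectivity of $\Pi_1$, and both use the MQ hypothesis to bound each substituted monomial by $O(\log^2 p)$ terms, summing to $O(T_\FS\log^2 p)$. Your version is simply more explicit than the paper's, which dispatches the surjectivity claim in a single sentence and does not spell out well-definedness; the indexing caveat you flag is a genuine typo in the displayed $\X_i$ but does not affect the argument.
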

\begin{proof}
%
By Lemma \ref{lm-bb}, it is easy to check that $\Pi_1$ is surjective.
By Lemma \ref{lm-bb}, {$\#\theta_{p-1}(\mathbb{X}_i)=\lfloor\log_2(p-1)\rfloor+1$} and hence
$\#\X_\bit = O(n\log{p})$.
Since $\FS$ is an MQ, for any monomial $\X^{\alpha_{ij}}$ of $f_i$,
we have $|\alpha_{ij}| \le 2$ and
$\prod_{k=1}^n( \theta_{p-1}(\mathbb{X}_i))^{\alpha_{ij}(k)}$ in
\bref{eq-s1} has at most $O(\log^2 p)$ terms.
Therefore, the total sparseness of $f_{i\bit}$ is $O(\#f_i \log^2 p)$
and the total sparseness of $B(\FS)$ is $O(T_\FS \log^2 p)$.
\end{proof}
{\bf Step 2}.
We introduce new Boolean indeterminates $U_{i,j}$ and reduce each $f_{i\bit}$ into a Boolean polynomial over $\Z$.
Let $t_i' = \#f_{i\text{bit}}$ and let
\begin{eqnarray}
\U_i&=&\{U_{i,j},j=0,\ldots,\lfloor\log_2 t_i'\rfloor\},\cr
\U_\bit&=&\cup_{i=1}^r\U_i=\{U_{i,j}\,|\,i=1,\ldots,r,\ j=0,\ldots,\lfloor\log_2 t_i'\rfloor\},\label{eq-U}\\
\theta_{ t_i'}(\U_i)&=& \sum_{j=0}^{\lfloor\log_2 t_i'\rfloor-1}U_{i,j}2^j+( t_i'+1-2^{\lfloor\log_2 t_i'\rfloor})U_{i,\lfloor\log_2 t_i'\rfloor}\in\F_p[\U_i],\label{eq-thetaU}\\
P(f_{i\bit})&=&f_{i\bit}-p\theta_{ t_i'}(\U_i)\in\Z[\X_\bit,\U_i],\cr
P(\FS)&=&\{P(f_{i\bit})\,|\,i=1,\ldots,r\}\subset\Z[\X_\bit,\U_\bit],\label{eq-P}
\end{eqnarray}
and we have
\begin{lem}\label{lm-crct}
  There is a surjective morphism $\Pi_2:\V_\C(P(\FS),H_{\X_{\bit}},H_{\U_{\bit}})
 \rightrightarrows \V_{\F_p}(\FS)$,
 where\\ $\Pi_2(\X_\bit,\U_\bit)$ $ =\Pi_1(\X_\bit)=
 (\theta_{p-1}(\X_1),\ldots,\theta_{p-1}(\X_n))$.
\end{lem}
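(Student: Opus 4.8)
The plan is to chain the reduction of Step~2 onto the surjection $\Pi_1$ from Lemma~\ref{lm-st11}, so that the composite map is the desired $\Pi_2$. First I would set up the correspondence at the level of a single equation. Fix a common solution $(\X_\bit,\U_\bit)$ of $P(\FS)$ together with the Boolean constraints $H_{\X_\bit}$ and $H_{\U_\bit}$; working over $\C$ but with all Boolean variables taking values in $\{0,1\}$. For each $i$, the defining relation $P(f_{i\bit}) = f_{i\bit} - p\,\theta_{t_i'}(\U_i) = 0$ forces $f_{i\bit}(\X_\bit) = p\,\theta_{t_i'}(\U_i)$ in $\Z$. The key observation, exactly as in the ``$\theta_{\#g}$'' idea sketched in the introduction, is a range estimate: since $\FS$ is an MQ over $\F_p = \{0,\ldots,p-1\}$ and the $X_{ij}$ are $0/1$-valued, each of the $t_i'$ terms of $f_{i\bit}$ is a product of at most two factors of the form $\theta_{p-1}(\X_k) \in \{0,\ldots,p-1\}$, times a coefficient $c_{i,j}\in\{0,\ldots,p-1\}$; so one should check that $f_{i\bit}$ takes only values in the window over which $p\,\theta_{t_i'}(\U_i)$ can range, namely a nonnegative integer multiple of $p$ bounded by $p\,t_i'$ (here one uses Lemma~\ref{lm-bb}, that the values of $\theta_{t_i'}(\U_i)$ are exactly $0,1,\ldots,t_i'$). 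The cleanest route is: reduce modulo $p$. The relation $f_{i\bit}(\X_\bit) \equiv 0 \pmod p$ says precisely that $\X_\bit$ is a solution of $f_{i\bit} = 0$ over $\F_p$, i.e.\ $\X_\bit \in \V_{\F_p}(B(\FS), H_{\X_\bit})$; conversely, for any such $\F_p$-solution the integer $f_{i\bit}(\X_\bit)$ is a nonnegative multiple of $p$, and by the range bound it equals $p k_i$ for some $0 \le k_i \le t_i'$, whence Lemma~\ref{lm-bb} supplies a Boolean assignment $\U_i$ with $\theta_{t_i'}(\U_i) = k_i$, giving a $\C$-solution of $P(f_{i\bit})$. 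Doing this for all $i$ shows that the projection $(\X_\bit,\U_\bit)\mapsto \X_\bit$ is a surjection $\V_\C(P(\FS),H_{\X_\bit},H_{\U_\bit}) \rightrightarrows \V_{\F_p}(B(\FS),H_{\X_\bit})$.

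Then I would compose with $\Pi_1$ from Lemma~\ref{lm-st11}, which is surjective from $\V_{\F_p}(B(\FS),H_{\X_\bit})$ onto $\V_{\F_p}(\FS)$ and sends $\X_\bit$ to $(\theta_{p-1}(\X_1),\ldots,\theta_{p-1}(\X_n))$. A composition of surjections is a surjection, so setting $\Pi_2 := \Pi_1 \circ (\text{projection})$ gives the stated surjective morphism with $\Pi_2(\X_\bit,\U_\bit) = \Pi_1(\X_\bit) = (\theta_{p-1}(\X_1),\ldots,\theta_{p-1}(\X_n))$, which is the formula claimed in the lemma. One small point to handle carefully is the edge cases: when $p = 2$ Step~1 is skipped, so $\X_\bit = \X$ and $\Pi_1$ is the identity, and the argument still goes through with $B(\FS) = \FS$; and when some $t_i' = 1$ the set $\U_i$ is a single variable and $\theta_{t_i'}(\U_i) = U_{i,0}$, consistent with the $b=1$ case of the definition of $\theta$.

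The main obstacle — really the only nontrivial content — is the range estimate that lets one pass between ``$f_{i\bit} \equiv 0 \bmod p$ over $\F_p$'' and ``$f_{i\bit} = p\,\theta_{t_i'}(\U_i)$ over $\Z$''. One must verify two inequalities simultaneously: that $f_{i\bit}$ evaluated at any Boolean point is $\ge 0$ (so that a nonnegative multiple of $p$ is available), and that it is $\le p\,t_i'$ (so that the multiple is among $0,\ldots,t_i'$ and hence representable by $\theta_{t_i'}$). The lower bound is where care is needed, since individual monomials of $f_{i\bit}$, being images of degree-$\le 2$ monomials under the nonnegative substitution $x_k \mapsto \theta_{p-1}(\X_k)$, are themselves nonnegative, and the coefficients $c_{i,j}$ lie in $\{0,\ldots,p-1\}\subset\Z_{\ge 0}$; so $f_{i\bit} \ge 0$ term-by-term, and the count of terms is exactly $t_i'$, each bounded by $(p-1)\cdot(p-1)^2$— but what actually matters is only that the \emph{value} $f_{i\bit}(\X_\bit)$, being $\equiv 0 \bmod p$, nonnegative, and (by the MQ structure together with the choice of $t_i' = \#f_{i\bit}$ and the bookkeeping behind the $\theta$-representation) no larger than $p\,t_i'$, lands in $p\{0,1,\ldots,t_i'\}$. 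Once that is pinned down, everything else is the formal composition of surjections.
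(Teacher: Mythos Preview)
Your proposal is correct and follows essentially the same route as the paper: factor $\Pi_2$ as the projection $(\X_\bit,\U_\bit)\mapsto\X_\bit$ onto $\V_{\F_p}(B(\FS),H_{\X_\bit})$ composed with $\Pi_1$, establish well-definedness by reducing $f_{i\bit}(\X_\bit)=p\,\theta_{t_i'}(\U_i)$ modulo $p$, and prove surjectivity via a range estimate together with Lemma~\ref{lm-bb}.

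One place to tighten: your upper-bound discussion in the last paragraph drifts toward bounding terms of the \emph{unexpanded} $f_i$ (the ``$(p-1)\cdot(p-1)^2$'' remark), which would not yield $\le p\,t_i'$. The clean argument---and the one the paper uses---works directly with $f_{i\bit}$: it is already a polynomial in Boolean variables with $t_i'$ terms, each a $0/1$-valued monomial times a coefficient $c_{i,j}'\in\{0,\ldots,p-1\}$, so $0 \le f_{i\bit}(\check\X_\bit) \le \sum_j c_{i,j}' \le (p-1)t_i' < p\,t_i'$, placing the value in $p\{0,1,\ldots,t_i'\}$ as required.
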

\begin{proof}
Let $(\check \X_\bit,\check \U_\bit)\in\V_\C(P(\FS),H_{\X_{\bit}},H_{\U_{\bit}})$. Then $(\check \X_\bit,\check \U_\bit)$ is a Boolean solution of $P(\FS)\subset\Z[\X_\bit,\U_\bit]$ and
\begin{equation*}
0=P(f_i)(\check \X_\bit,\check \U_\bit)=f_{i\bit}(\check \X_\bit)-p\theta_{\lfloor C_i/p\rfloor}(\check \U_\bit)\equiv f_{i\bit}(\check \X_\bit)\equiv f_{i}(\Pi_1(\check\X_{\bit}))\pmod p,
\end{equation*}
where the last equivalence comes from \bref{eq-s1},
and $\Pi_1(\check\X_{\bit})=(\theta_{p-1}(\check\X_1),\ldots,\theta_{p-1}(\check\X_n))$ is defined in Lemma \ref{lm-st11}. As a conclusion, $f_{i}(\Pi(\check\X_{\bit}))\equiv0\pmod p$, or $(\theta_{p-1}(\check\X_1),\ldots,\theta_{p-1}(\check\X_n))\in\V_{\F_p}(\FS)$.

We now prove that $\Pi_2$ is surjective.
By Lemma \ref{lm-st11}, $\V_{\F_p}(B(\FS),H_{\X_\bit}) \rightrightarrows \V_{\F_p}(\FS)$,
so it is enough to prove $\V_\C(P(\FS),H_{\X_{\bit}},H_{\U_{\bit}})
\rightrightarrows\V_{\F_p}(B(\FS),H_{\X_\bit})$.
Let $\check \X_\bit\in\V_{\F_p}(B(\FS),H_{\X_\bit})$
and $f_{i\text{bit}}=\sum_{j=1}^{t_i'}c_{i,j}'
\X_\text{bit}^{\beta_{ij}}\in\F_p[\X_{\bit}]$, where $c_{i,j}'\in\{0,\ldots,p-1\}\subset\Z$. Denote $C_i=\sum_{j=1}^{t_i'}c_{i,j}'\le (p-1)t_i'$. Then, $f_{i\text{bit}}(\check \X_\bit)\equiv0\pmod p$ if and only if $f_{i\text{bit}}(\check \X_\bit)=0,p,2p,\ldots,\text{ or }t_i' p$,
since $\lfloor C_i/p\rfloor p\le t_i'$.
By Lemma \ref{lm-bb}, there exist Boolean variables $\check\U_i=\{\check U_{i,j},j=0,\ldots,\lfloor\log_2 t_i'\rfloor\}$ such that $f_{i\bit}(\check \X_\bit)=p\theta_{ t_i'}(\check \U_i)$.
Hence $(\check \X_\bit,\check\U_i)$ is a preimage of $\check \X_\bit$ for the map $\V_\C(P(\FS),H_{\X_{\bit}},H_{\U_{\bit}})
 \rightrightarrows\V_{\F_p}(B(\FS),H_{\X_\bit})$.
Then,  the map  $\Pi_2$ is surjective.
\end{proof}

Since the map in \bref{eq-xbit} is not injective,
this map $\Pi_2$ is also not injective.

\begin{lem}\label{lm-pf}
The polynomial system $P(\FS)$ defined in \bref{eq-P} is of total sparseness $T_{P(\FS)}=\widetilde{O}(T_\FS\log^{2} p)$ and has
$N_{P(\FS)}=O(n\log p+\sum_{i=1}^r\log t_i+r\log\log p)$ indeterminates.
%
Furthermore, we can compute $P(\FS)$ from $\FS$ in $\widetilde{O}(T_\FS\log^2 p)$ binary operations.
\end{lem}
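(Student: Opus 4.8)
The plan is to track the three complexity quantities---total sparseness $T_{P(\FS)}$, number of indeterminates $N_{P(\FS)}$, and the cost of the construction---through the two reduction steps of Section \ref{ss-fp}, reusing the bounds already established in Lemmas \ref{lm-bb} and \ref{lm-st11}. First I would recall from Lemma \ref{lm-st11} that after Step 1 each $f_{i\bit}\in\F_p[\X_\bit]$ has sparseness $t_i'=O(\#f_i\log^2 p)$, the total sparseness of $B(\FS)$ is $O(T_\FS\log^2 p)$, and $\#\X_\bit=O(n\log p)$. For Step 2, each $P(f_{i\bit})=f_{i\bit}-p\,\theta_{t_i'}(\U_i)$ adds the $O(\log t_i')=O(\log(\#f_i\log^2 p))$ terms coming from $\theta_{t_i'}(\U_i)$ (by the sparseness count in Lemma \ref{lm-bb}), so $\#P(f_{i\bit})=t_i'+O(\log t_i')=O(\#f_i\log^2 p)$ up to the logarithmic factor. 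Summing over $i=1,\dots,r$ gives $T_{P(\FS)}=\sum_i O(\#f_i\log^2 p)+\sum_i O(\log t_i')=\widetilde O(T_\FS\log^2 p)$, where the tilde absorbs the $\sum_i\log t_i'$ contribution (which is $O(r\log(T_\FS\log^2 p))\le\widetilde O(T_\FS)$).

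Next I would count indeterminates. The variables of $P(\FS)$ are $\X_\bit\cup\U_\bit$. We have $\#\X_\bit=\sum_{i=1}^n(\lfloor\log_2(p-1)\rfloor+1)=O(n\log p)$ from Step 1, and $\#\U_\bit=\sum_{i=1}^r(\lfloor\log_2 t_i'\rfloor+1)$ from \bref{eq-U}. Since $t_i'=O(\#f_i\log^2 p)$, we get $\log_2 t_i'=O(\log\#f_i+\log\log p)$, hence $\#\U_\bit=O(\sum_{i=1}^r\log t_i + r\log\log p)$ (writing $\log\#f_i$ as $\log t_i$ in the notation of the section, and noting that $\log(\log^2 p)$ contributes the $r\log\log p$ term). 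Adding the two pieces yields $N_{P(\FS)}=O(n\log p+\sum_{i=1}^r\log t_i+r\log\log p)$, as claimed.

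Finally, for the construction cost: computing each $f_{i\bit}$ in Step 1 requires expanding, for every term of $f_i$, a product of at most two copies of $\theta_{p-1}(\mathbb{X}_i)$, each of bit-size $O(\log p)$; this is $O(\log^2 p)$ arithmetic on $O(\log p)$-bit integers per term, so $\widetilde O(\#f_i\log^2 p)$ bit operations, and $\widetilde O(T_\FS\log^2 p)$ in total. Forming $\theta_{t_i'}(\U_i)$ and subtracting $p\cdot\theta_{t_i'}(\U_i)$ in Step 2 costs $\widetilde O(\log t_i'\cdot\log p)$ per polynomial, which is dominated by the Step 1 bound. Hence the whole of $P(\FS)$ is produced in $\widetilde O(T_\FS\log^2 p)$ binary operations. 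The only mildly delicate point---and the place where I would be most careful---is bookkeeping the logarithmic factors so that the $\sum_i\log t_i'$ terms and the bit-complexity of integer arithmetic are legitimately swept into the $\widetilde O(\cdot)$; everything else is a direct substitution of the already-proven bounds from Lemmas \ref{lm-bb} and \ref{lm-st11}.
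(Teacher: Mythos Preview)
Your proposal is correct and follows essentially the same route as the paper's own proof: you invoke Lemma~\ref{lm-st11} for the Step~1 bounds $t_i'=O(t_i\log^2 p)$ and $\#\X_\bit=O(n\log p)$, then count $\#\U_\bit=\sum_i O(\log t_i')=O(\sum_i\log t_i+r\log\log p)$ exactly as the paper does, and combine to get $N_{P(\FS)}$ and $T_{P(\FS)}$; the construction-cost argument is likewise the same (expand the quadratic substitutions at $\widetilde O(\log^2 p)$ work per term, with the Step~2 cost negligible). The only cosmetic difference is the order in which you present sparseness versus indeterminate counts.
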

\begin{proof}
By Lemma \ref{lm-st11},  $B(\FS)$ is  of total sparseness  $O(T_\FS \log^2 p)$ and has $O(n\log p)$ indeterminates.
Since $\FS$ is an MQ, by the proof of Lemma \ref{lm-st11},
we have  $t_i'=\#f_{i,\bit} \le t_i\log^2 p$.
Then, the number of $U_{i,j}$ introduces in \bref{eq-U} is $\#\U_\bit=\sum_{i=1}^r\lfloor\log_2 t_i'\rfloor=O(\sum_{i=1}^r\log t_i')=O(\sum_{i=1}^r(\log t_i+\log(\log p)^2))=O(\sum_{i=1}^r\log t_i+r\log\log p)$.
Therefore, the total number of indeterminates is  $\#\X_\bit+\#\U_\bit=O(n\log p+\sum_{i=1}^r\log t_i+r\log\log p)$.

From \bref{eq-P}, the total sparseness of $P(\FS)$ is $T_{P(\FS)}
=T_{B(\FS)}+\sum_{i=1}^r \#\theta_{t_i'}(\U_i)  =T_{B(\FS)}+\#\U_\bit
=O(T_\FS\log^2 p+\sum_{i=1}^r\log t_i+r\log\log p)=\widetilde{O}(T_\FS\log^2 p)$, since $r\le T_\FS$ and $\sum_{i=1}^r\log t_i\le\sum_{i=1}^r t_i=T_\FS$.

To compute each $2^j\mod p$ costs $O(\log p)$ binary operations. Using the fast polynomial arithmetics \cite{mca}, to expand all the polynomials in $B(\FS)$ costs $\widetilde{O}(T_\FS\log^2 p)$ binary operations.
The cost of other steps to obtain $P(\FS)$ is negligible.
\end{proof}

%
\begin{cor}\label{cor-line}
If $\FS$ is a linear system, then
$T_{P(\FS)}=O(T_\FS\log p)$ and $N_{P(\FS)}=\widetilde{O}(n\log p+\sum_{i=1}^r\log t_i+r\log\log p)$.
\end{cor}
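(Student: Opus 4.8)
The plan is to specialize the bookkeeping in the proof of Lemma \ref{lm-pf} to the case where $\FS$ is linear, and observe that the single place where a factor of $\log p$ (rather than $\log^2 p$) was lost is precisely the step where a quadratic monomial $\X_\bit^{\beta_{ij}}$ was expanded under the substitution \bref{eq-xbit}. First I would revisit Step 1: when $\FS$ is linear, every monomial $\X^{\alpha_{ij}}$ of $f_i$ is either a constant or a single variable $x_k$, so substituting $x_k = \theta_{p-1}(\X_k)$ via \bref{eq-xbit} replaces that monomial by a sum of at most $\lfloor\log_2(p-1)\rfloor+1 = O(\log p)$ terms rather than $O(\log^2 p)$ terms. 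Hence $t_i' = \#f_{i\bit} = O(t_i \log p)$, and the total sparseness of $B(\FS)$ is $T_{B(\FS)} = O(T_\FS \log p)$; in fact $B(\FS)$ is still a linear system in the Boolean variables $\X_\bit$, which is the structural fact driving the improvement.

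Next I would carry the improved bound on $t_i'$ through Step 2. The number of new Boolean variables introduced in \bref{eq-U} is $\#\U_\bit = \sum_{i=1}^r \lfloor \log_2 t_i' \rfloor = O\!\left(\sum_{i=1}^r \log(t_i \log p)\right) = O\!\left(\sum_{i=1}^r \log t_i + r\log\log p\right)$, exactly as before up to the now-harmless $\log\log p$ terms, so $N_{P(\FS)} = \#\X_\bit + \#\U_\bit = \widetilde O(n\log p + \sum_{i=1}^r \log t_i + r\log\log p)$. For the total sparseness, from \bref{eq-P} we get $T_{P(\FS)} = T_{B(\FS)} + \#\U_\bit = O(T_\FS \log p + \sum_{i=1}^r \log t_i + r\log\log p) = O(T_\FS \log p)$, using $r \le T_\FS$ and $\sum_{i=1}^r \log t_i \le \sum_{i=1}^r t_i = T_\FS$ to absorb the lower-order terms.

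There is essentially no hard part here: the corollary is a direct re-run of Lemma \ref{lm-pf} with the quadratic-monomial estimate $O(\log^2 p)$ replaced by the linear-monomial estimate $O(\log p)$. The only point requiring a moment's care is to confirm that no other step of the Lemma \ref{lm-pf} argument secretly uses $\log^2 p$ — checking this amounts to noting that the cost of expanding $B(\FS)$ and the count of the $\theta$-polynomials $\theta_{t_i'}(\U_i)$ both scale linearly with $T_{B(\FS)}$ and with $\#\U_\bit$, so they inherit the improved bound automatically. I would also remark in passing that one could state the sparseness bound with the explicit constant coming from $\lfloor\log_2(p-1)\rfloor+1$, but the $O$-notation suffices for the complexity analysis that follows.
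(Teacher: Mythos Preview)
Your proposal is correct and follows essentially the same approach as the paper: both simply re-run the proof of Lemma \ref{lm-pf}, replacing the $O(\log^2 p)$ monomial-expansion estimate by $O(\log p)$ since linear monomials expand to only $\lfloor\log_2(p-1)\rfloor+1$ terms, and then carry the improved bound through $\#\U_\bit$ and $T_{P(\FS)}$. Your write-up is in fact more careful than the paper's one-line proof, in particular your explicit absorption of $\sum_i\log t_i$ and $r\log\log p$ into $O(T_\FS\log p)$ justifies the un-tilded bound on $T_{P(\FS)}$ that the statement claims.
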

\begin{proof}
Since each $f_i$ is linear, we have $T_{B(\FS)}=O(T_\FS\log p)$, and $T_{P(\FS)}=O(T_\FS\log p+\#\U_\bit)=\widetilde{O}(T_\FS\log p)$.
\end{proof}

\begin{rem}\label{rem-3.1}
In \bref{eq-thetaU},  we can use $\theta_{\lfloor C_i/p\rfloor}$  instead of $\theta_{ t_i'}$ to introduce less indeterminates.
To compute each $C_i=\sum_{j=1}^{t_i'} c_{ij}'$ costs  $t_i'\log p=O(t_i\log^3 p)$, and   to compute all $C_i$ costs   $O(T_\FS\log^3 p)$, which is more than $T_{P(\FS)}=O(T_\FS\log^2 p)$. But, this is negligible comparing to
the final complexity of the algorithm in Corollary \ref{cor-fsc}.
\end{rem}

\subsection{Solving polynomial systems over $\F_p$}
Let $\FS=\{f_1,\ldots,f_r\}\subset \F_p[\X]$.
By Lemma \ref{lm-Q}, we can convert $\FS$ into an MQ $Q(\FS)\subset\F_p[\X,\V]$.
By Lemma \ref{lm-crct}, we can convert $Q(\FS)$ to an MQ
in Boolean variables over $\C$: $P(Q(\FS))\subset\C[\X_\bit,\V_\bit,\U_\bit]$.
To solve $P(Q(\FS))$, we need the following result,
where a quantum algorithm for B-POSSO is given.
A solution $\mathbf a$ of $\BS\subset\C[\X]$ is called
a {\em Boolean solution} if each coordinate of $\mathbf a$ is either $0$ or $1$.
\begin{thm}[\cite{qabes}]\label{th-m2}
For a finite set $\BS\subset\C[\X]$ and $\epsilon\in(0,1)$,
there exists a quantum algorithm {\rm\bf QBoolSol} which decides whether $\BS=0$ has a Boolean solution and computes one if $\BS=0$ does have Boolean solutions, with probability at least $1-\epsilon$
and complexity
$\widetilde O(n^{2.5}(n+T_\BS)\kappa^2\log1/\epsilon)$,
where  $T_\BS$ the total sparseness of $\BS$ and
$\kappa$ is the condition number of $\BS$.
\end{thm}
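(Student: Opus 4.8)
The plan is to reduce B-POSSO to $O(n)$ calls of the HHL quantum linear-system solver (\cite{hhl}, in the refined forms of \cite{hhl-a,hhl-c}) through a sparse Macaulay-type linearization. First I would normalize the input: reduce every $f\in\BS$ modulo the Boolean ideal $(x_1^2-x_1,\ldots,x_n^2-x_n)$, so that all polynomials become square-free (multilinear), and adjoin the reduced products $x_jf$ for all $j$ and all $f$, keeping only the monomials that actually occur. Introducing one scalar unknown $v_m$ per occurring square-free monomial $m$, with the normalization $v_1=1$ and with $v_{x_1},\ldots,v_{x_n}$ the coordinates we ultimately want, the extended system becomes a linear system $M\mathbf v=\mathbf 0$ whose dimension and number of nonzeros are both $\widetilde O(nT_\BS)$ and whose row-sparsity is $\widetilde O(n)$. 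By construction, the monomial-evaluation vector of any Boolean solution $\mathbf a$ of $\BS$ solves $M\mathbf v=\mathbf 0,\ v_1=1$; the substance of this step is the converse: that with a suitable, controlled amount of Macaulay extension this linear system is solvable \emph{only if} $\BS=0$ has a Boolean solution, so that linear solvability becomes a faithful decision for the nonlinear feasibility question.

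Granting the decision, I would extract an \emph{exact} Boolean point by fixing the variables one at a time. Having first decided that $\BS=0$ is Boolean-feasible, suppose $x_1=\hat a_1,\ldots,x_{i-1}=\hat a_{i-1}$ is already known to extend to a Boolean solution; run the decision oracle on $\BS\cup\{x_1-\hat a_1,\ldots,x_{i-1}-\hat a_{i-1},x_i\}$, set $\hat a_i=0$ if it is feasible and $\hat a_i=1$ otherwise, in which latter case the earlier extendability forces a Boolean solution with $x_i=1$. After $n+1$ oracle calls the tuple $(\hat a_1,\ldots,\hat a_n)$ is output with certainty, conditioned on the calls being correct. To implement one decision call quantumly, fold $v_1=1$ into a right-hand side to get a sparse system $A\mathbf v=\mathbf b$ (Hermitized in the usual way), run HHL to prepare $|\mathbf v\rangle\propto A^{+}\mathbf b$ together with an estimate of the residual $\|A\mathbf v-\mathbf b\|$ via amplitude estimation or a swap test, and report ``feasible'' iff the residual lies below the precision threshold dictated by $A$. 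Boosting a single decision to confidence $1-\epsilon/(n+1)$ by $O(\log((n+1)/\epsilon))$ independent repetitions plus a majority vote supplies the $\log 1/\epsilon$ factor; the row-sparsity $\widetilde O(n)$, the dimension $\widetilde O(nT_\BS)$, the $\kappa^2$ and polylog overheads from HHL, and the $\widetilde O(nT_\BS)$ cost of preparing sparse access to $A$ then multiply out to the claimed per-call cost $\widetilde O(n^{2.5}(n+T_\BS)\kappa^2)$, and the $n+1$ calls leave this unchanged up to hidden factors.

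The main obstacle is the converse in the first step: proving that a linear system of \emph{polynomial} size already certifies Boolean infeasibility, i.e.\ that only a bounded amount of Macaulay extension is needed. In full generality this must fail, since B-POSSO is NP-hard — the Nullstellensatz-certificate degree, and hence the true size and, above all, the condition number $\kappa$ of the relevant matrix, can be exponentially large. The honest reading behind the theorem is that this hardness is displaced entirely into the parameters $\kappa$ and $N=\widetilde O(nT_\BS)$ that already appear in the bound: one fixes $M$ to be the Macaulay matrix at whatever degree makes the decision faithful, the algorithm is then correct, and its running time is polynomial in $\kappa$, $n$ and $T_\BS$, so hard instances simply produce huge or badly conditioned $M$. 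The remaining technical care is bookkeeping that ``extension degree / number of surviving monomials / sparsity / $\kappa$'' all stay mutually consistent, and — crucially — checking that the HHL precision required for a \emph{reliable binary} verdict is only $\mathrm{poly}(1/N,1/\kappa)$ rather than exponentially small; once that is in place, the variable-fixing wrapper and the gate-count accounting are routine.
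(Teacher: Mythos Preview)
This theorem is not proved in the present paper at all: it is imported wholesale from the companion paper \cite{qabes} and used throughout as a black box (note the citation in the theorem header and the footnote in Section~\ref{sec-i2}). There is therefore no ``paper's own proof'' for you to be compared against.

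For what it is worth, your sketch is in the right spirit --- a Macaulay-type linearization over square-free monomials, HHL on the resulting sparse system, and a variable-by-variable extraction via $O(n)$ feasibility calls --- and your diagnosis of the central obstacle is honest and correct: the NP-hardness of B-POSSO cannot disappear, and is absorbed entirely into the condition number $\kappa$, which can be exponentially large on hard instances. Whether the precise construction in \cite{qabes} matches your ``extend to whatever Macaulay degree makes the decision faithful'' formulation, and how exactly the $n^{2.5}(n+T_\BS)$ factor is assembled from the sparsity, dimension, and number of calls, you would have to check against \cite{qabes} itself; the paper under review gives no hints beyond the statement.
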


Here is the main result of this section.
\begin{thm}\label{th-alp}
For $\FS=\{f_1,\ldots,f_r\}\subset \F_p[\X]$ and $\epsilon\in(0,1)$,
there exists a quantum algorithm to find a solution
of $\FS$ in $\F_p$ with probability at least $1-\epsilon$
and the  complexity of the algorithm  is
$\widetilde O(T_\FS^{3.5}D^{3.5}\log^{4.5} p\kappa^2\log1/\epsilon)$,
where $T_\FS=\sum_{i=1}^r\#f_i$ is the total sparseness of $\FS$, $D=n+\sum_{i=1}^n \max_j\lfloor\log_2(\deg_{x_i}(f_j))\rfloor$, and $\kappa$ is the condition number of $P(Q(\FS))$, also called the condition number of $\FS$.
\end{thm}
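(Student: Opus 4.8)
The plan is to assemble Theorem~\ref{th-alp} by chaining the three reductions already established and then invoking the quantum B-POSSO solver of Theorem~\ref{th-m2}, with the whole argument being essentially a bookkeeping exercise on sparseness, variable count, and the cost of the classical preprocessing. First I would invoke Lemma~\ref{lm-Q} to replace $\FS$ with the MQ system $Q(\FS)\subset\F_p[\X,\V]$, recording $N_{Q(\FS)}=O(T_\FS D)$, $\#Q(\FS)=O(T_\FS D)$, $T_{Q(\FS)}=O(T_\FS D)$, and $C(Q(\FS))=C(\FS)$; since $Q(\FS)$ has the same solution ideal as $\FS$ over $\F_p$ (intersected with $F[\X]$), a solution of $\FS$ in $\F_p^n$ is recovered by projection from a solution of $Q(\FS)$ in $\F_p$. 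Next I would feed $Q(\FS)$ into the Step~1--Step~2 reduction of Section~\ref{ss-fp}, obtaining $P(Q(\FS))\subset\C[\X_\bit,\V_\bit,\U_\bit]$, and quote Lemma~\ref{lm-crct} for the surjective morphism $\Pi_2$ from the Boolean solutions of $P(Q(\FS))$ (together with the $H$-constraints making the variables Boolean) onto $\V_{\F_p}(Q(\FS))$, hence onto $\V_{\F_p}(\FS)$ via $\Pi_1\circ\Pi_2$. Composing with the projection of Lemma~\ref{lm-Q} gives a recovery map from any Boolean solution of $P(Q(\FS))$ to a solution of $\FS$ in $\F_p^n$; conversely emptiness of $\V_{\F_p}(\FS)$ forces $P(Q(\FS))$ to have no Boolean solution, so the decision problem transfers faithfully.

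The bulk of the work is the complexity count. I would apply Lemma~\ref{lm-pf} with $Q(\FS)$ in the role of $\FS$: since $Q(\FS)$ is already an MQ with total sparseness $T_{Q(\FS)}=O(T_\FS D)$ and with $O(T_\FS D)$ equations and variables, Lemma~\ref{lm-pf} yields $T_{P(Q(\FS))}=\widetilde O(T_{Q(\FS)}\log^2 p)=\widetilde O(T_\FS D\log^2 p)$ and $N_{P(Q(\FS))}=\widetilde O(N_{Q(\FS)}\log p+\#Q(\FS)\log\log p)=\widetilde O(T_\FS D\log p)$, and the classical cost of producing $P(Q(\FS))$ from $Q(\FS)$ is $\widetilde O(T_\FS D\log^2 p)$ binary operations; adding the $\widetilde O(T_\FS D)$ cost of Lemma~\ref{lm-Q} keeps the preprocessing within $\widetilde O(T_\FS D\log^2 p)$. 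Then I would set $\BS=P(Q(\FS))\cup H_{\X_\bit}\cup H_{\V_\bit}\cup H_{\U_\bit}$ (the Boolean-constraint-augmented system whose ordinary solutions are exactly the Boolean solutions of $P(Q(\FS))$), note that adjoining the $H$-polynomials changes $n$ and $T_\BS$ only by a constant factor since it adds one quadratic binomial per variable, and apply Theorem~\ref{th-m2}: with $n=N_\BS=\widetilde O(T_\FS D\log p)$ and $T_\BS=\widetilde O(T_\FS D\log^2 p)$, the quantum cost is $\widetilde O(n^{2.5}(n+T_\BS)\kappa^2\log1/\epsilon)=\widetilde O((T_\FS D\log p)^{2.5}(T_\FS D\log^2 p)\kappa^2\log1/\epsilon)=\widetilde O(T_\FS^{3.5}D^{3.5}\log^{4.5}p\,\kappa^2\log1/\epsilon)$, which dominates the classical preprocessing and gives the claimed bound. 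Here $\kappa$ is, by definition, the condition number of $\BS=P(Q(\FS))$ (up to the $H$-augmentation), which is what the statement calls the condition number of $\FS$.

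For the correctness of the returned value, I would close the loop: if \textbf{QBoolSol} reports ``no Boolean solution'' then $\V_{\F_p}(\FS)=\emptyset$; if it returns a Boolean point $(\check\X_\bit,\check\V_\bit,\check\U_\bit)$, then applying $\Pi_2$ and the Lemma~\ref{lm-Q} projection, i.e. evaluating $\theta_{p-1}$ on the relevant blocks of $\check\X_\bit$, produces an explicit element of $\V_{\F_p}(\FS)$; this postprocessing is a handful of modular evaluations and costs $\widetilde O(T_\FS D\log^2 p)$, again negligible. The success probability is inherited verbatim as $1-\epsilon$ from Theorem~\ref{th-m2}.

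I expect the only genuinely delicate point to be the sparseness propagation through the two-layer reduction: one must be careful that substituting $\theta_{p-1}(\X_i)$ into a \emph{quadratic} monomial blows up the term count by only the $O(\log^2 p)$ factor of Lemma~\ref{lm-st11} (this is exactly why Lemma~\ref{lm-Q} is applied first, to guarantee degree two), and that the $D$ appearing in Lemma~\ref{lm-Q} — coming from the binary decomposition of the per-variable degrees $d_i=\max_j\deg_{x_i}f_j$ — matches the $D$ in the theorem statement. Everything else is routine $\widetilde O$-arithmetic and the verification that adjoining the idempotency relations $H_{(\cdot)}$ does not perturb the asymptotics or the condition number's role.
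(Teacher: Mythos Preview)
Your proposal is correct and follows essentially the same route as the paper: the paper packages your direct application of Lemma~\ref{lm-pf} to $Q(\FS)$ as the separate Lemma~\ref{lm-fsb}, but the chain Lemma~\ref{lm-Q} $\to$ Lemma~\ref{lm-crct} $\to$ Theorem~\ref{th-m2} and the resulting complexity arithmetic $\widetilde O((T_\FS D\log p)^{2.5}(T_\FS D\log^2 p)\kappa^2\log1/\epsilon)$ are identical. The only cosmetic difference is that the paper calls \textbf{QBoolSol} directly on $P(Q(\FS))$ without adjoining the $H$-sets, since Theorem~\ref{th-m2} already searches for Boolean solutions of its input and the statement's $\kappa$ is defined as the condition number of $P(Q(\FS))$ itself rather than of the augmented system.
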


We first estimate the total sparseness of $P(Q(\FS))$.
\begin{lem}\label{lm-fsb}
$P(Q(\FS))$ is of total sparseness $O(T_\FS D \log^2 p)=O(nT_\FS \log d \log^2 p)$
and has $O(T_\FS$ $ D\log p)=O(nT_\FS \log d \log p)$ indeterminates,
where $D=n+\sum_{i=1}^n \max_j\lfloor\log_2(\deg_{x_i}(f_j))\rfloor$
and $d=\max\{2, \log_2(\deg_{x_i}(f_j)),i=1,\ldots,n,j=1,\ldots,r\}$.
\end{lem}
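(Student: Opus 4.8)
The plan is to track the two reductions $Q(\cdot)$ (Lemma \ref{lm-Q}) and $P(\cdot)$ (Lemmas \ref{lm-st11}, \ref{lm-crct}, \ref{lm-pf}) composed together, and simply bound the total sparseness and number of indeterminates of the composite $P(Q(\FS))$ by plugging the output parameters of the first reduction into the input parameters of the second. First I would apply Lemma \ref{lm-Q} to $\FS$: this produces an MQ $Q(\FS) \subset \F_p[\X,\V]$ with $N_{Q(\FS)} = O(T_\FS D)$, $\#Q(\FS) = O(T_\FS D)$, $T_{Q(\FS)} = O(T_\FS D)$, and $C(Q(\FS)) = C(\FS)$, where $D = n + \sum_{i=1}^n \lfloor \log_2 d_i \rfloor$. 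Note that $Q(\FS)$ is genuinely quadratic, which is the hypothesis needed to invoke the sparseness estimates in Lemma \ref{lm-pf}.

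Next I would apply Lemma \ref{lm-pf} to the MQ $Q(\FS)$, treating $Q(\FS)$ as the system ``$\FS$'' in the notation of that lemma. Writing $T' := T_{Q(\FS)} = O(T_\FS D)$, $n' := N_{Q(\FS)} = O(T_\FS D)$, and $r' := \#Q(\FS) = O(T_\FS D)$, Lemma \ref{lm-pf} gives $T_{P(Q(\FS))} = \widetilde O(T' \log^2 p) = \widetilde O(T_\FS D \log^2 p)$ and $N_{P(Q(\FS))} = O(n' \log p + \sum \log t_i' + r' \log\log p)$. Since each $t_i' \le T'$ we have $\sum_i \log t_i' \le r' \log T' = \widetilde O(T_\FS D)$, and $r' \log\log p = \widetilde O(T_\FS D)$, so these lower-order terms are absorbed and $N_{P(Q(\FS))} = \widetilde O(T_\FS D \log p)$. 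To get the cleaner stated form $O(T_\FS D \log^2 p)$ and $O(T_\FS D \log p)$ (with explicit $d = \max\{2, \log_2 \deg_{x_i} f_j\}$ so that $D = O(n \log d)$, hence the alternative expressions $O(n T_\FS \log d \log^2 p)$ and $O(n T_\FS \log d \log p)$), I would just substitute $D \le n(1 + \log_2 d) = O(n\log d)$.

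The main bookkeeping point — and the only place where one has to be a little careful rather than purely mechanical — is making sure the ``$\widetilde O$'' hiding logarithmic factors is consistent across the composition: the $\sum_i \log t_i'$ and $r'\log\log p$ terms coming out of Lemma \ref{lm-pf} must be checked to be dominated by $T_\FS D \log^2 p$ after substituting the (already polynomially large) parameters of $Q(\FS)$, and similarly $r' \le T_{Q(\FS)}$ and $\sum_i t_i' \le T_{Q(\FS)}$ must hold so that Lemma \ref{lm-pf}'s internal simplifications apply to $Q(\FS)$ in place of $\FS$. These are immediate from Lemma \ref{lm-Q}. I expect no genuine obstacle here; the content of the statement is entirely that the two reductions compose with only a polynomial blow-up, and the substitution is routine once the parameters of Lemma \ref{lm-Q} and Lemma \ref{lm-pf} are lined up. I would end by recording the coefficient bound $C(P(Q(\FS))) = O(C(\FS) + \log p)$ if it is needed downstream (the $\theta$ and $P$ constructions introduce coefficients bounded by $p$ and $t_i'$), though the present statement only asks for the sparseness and indeterminate counts.
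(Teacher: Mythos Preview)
Your proposal is correct and follows essentially the same route as the paper: apply Lemma~\ref{lm-Q} to bound $N_{Q(\FS)}$, $T_{Q(\FS)}$, $\#Q(\FS)$ by $O(T_\FS D)$, feed those into Lemma~\ref{lm-pf}, and then simplify $D = O(n\log d)$. The only cosmetic difference is that the paper bounds $\sum_{f\in Q(\FS)}\log(\#f)$ by exploiting the specific structure of $Q(\FS)$ (namely, $r$ polynomials $\widehat{f}_j$ plus $\#\V$ binomials), whereas you use the cruder estimate $t_i' \le T'$ giving $r'\log T' = \widetilde O(T_\FS D)$; both yield the same $\widetilde O(T_\FS D)$ bound, so there is no substantive divergence.
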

\begin{proof}
By {Lemma \ref{lm-Q}}, $N_{Q(\FS)}=O(T_\FS D)$, $T_{Q(\FS)} =O(T_\FS D)$,
and $\#{Q(\FS)} =O(T_\FS D)$.
By {Lemma \ref{lm-pf}}, $P(Q(\FS))$ is of total sparseness $O(T_\FS D\log^2 p)$.
By {Lemma \ref{lm-pf}}, $P(Q(\FS))$  has
$N_{P(Q(\FS))}=O(N_{Q(\FS)} \log p+ \sum_{f\in Q(\FS)} \log (\#f)+ \#Q(\FS) \log\log p)$ indeterminates.
From the proof of Lemma \ref{lm-Q}, $Q(\FS)$ contains $\widehat{f}_j,j=1,\ldots,r$ and
$\# V$ binomials.
Then, $\sum_{f\in Q(\FS)} \log (\#f)
=\sum_{j=1}^r \log (\#\widehat{f}_j) + \# V\log 2
= O(\sum_{j=1}^r \log t_j + T_\FS D) = \widetilde{O}(T_\FS D)$.
Then $N_{P(Q(\FS))} = \widetilde{O}(T_\FS D \log p+T_\FS D
+T_\FS D\log\log p)=\widetilde{O}(T_\FS D\log p)$.
Since $D=n+\sum_{i=1}^n \max_j\lfloor\log_2(\deg_{x_i}(f_j))\rfloor =O(n\log d)$,
we obtain the bounds involving $n$ and $d$.
\end{proof}

{\em Proof of Theorem \ref{th-alp}}.
We can find a solution of $\FS$ as follows.
Construct $P(Q(\FS))\subset\C[\X_\bit,\V_\bit,$ $\U_\bit]$
according to  Lemma \ref{lm-Q} and Lemma \ref{lm-crct}.
Let $\mathbf b =$ {\bf QBoolSol}$(P(Q(\FS)),\epsilon)$.
If $\mathbf b =\emptyset$ then the algorithm fails to find a solutioon.
Let $\mathbf b =(\check\X_\bit,\check\V_\bit,\check\U_\bit)$
and $\check \X_\bit=(\check{X}_{1,0},$ $\ldots,\check{X}_{1,\lfloor\log_2(p-1)\rfloor)},
\check{X}_{2,0},$ $\ldots,\check{X}_{n,\lfloor\log_2(p-1)\rfloor)})$.
Let
$\widehat{X}=(\theta_{p-1}(\check{\X}_{1}),\ldots, \theta_{p-1}(\check{\X}_{n}))$,
where $\check{\X}_{i} = (\check{X}_{i,0},\ldots,\check{X}_{i,\lfloor\log_2(p-1)\rfloor})$.
By  Lemma  \ref{lm-Q}, Lemma \ref{lm-crct}, and Theorem \ref{th-m2},
$\widehat{X}$ is a solution of $\FS$ in $\F_p$ with probability at least $1-\epsilon$.

We now give the complexity.
By {Lemma \ref{lm-fsb}}, $P(Q(\FS))$ is of sparseness $O(T_\FS D\log^2 p)$ and has $O(T_\FS D\log p)$ indeterminates. By {Theorem \ref{th-m2}}, we can find a Boolean solution of $P(Q(\FS))$ in time
$\widetilde O((T_\FS D\log p)^{2.5}(T_\FS D\log p+T_\FS D\log^2 p)\kappa^2\log1/\epsilon)
=\widetilde O(T_\FS^{3.5}D^{3.5}\log^{4.5} p\kappa^2\log1/\epsilon).$
The complexity for other steps can be neglected.
\qed


Let $d=\max\{2, \max_{i=1}^n \max_j\deg_{x_i}(f_j)\}$.
Then $D=O(n\log d)$.
Since the solutions are in $\F_p$, we can assume $d < p$.
By  Theorem \ref{th-alp}, we have
\begin{cor}\label{cor-fsc}
The complexity to find a solution for $\FS=0\mod p$   is $\widetilde O(n^{3.5}T_\FS^{3.5}\log^{3.5} d \log^{4.5}p$ $\kappa^2\log1/\epsilon)$
=
$\widetilde O(n^{3.5}T_\FS^{3.5}\log^{8} p\kappa^2$ $\log1/\epsilon)$.
\end{cor}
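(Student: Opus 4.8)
The plan is to derive Corollary \ref{cor-fsc} directly from Theorem \ref{th-alp} by substituting the crude bound $D = O(n\log d)$ into the complexity estimate and then using $d < p$ to collapse everything into a power of $\log p$. First I would recall that Theorem \ref{th-alp} gives complexity $\widetilde O(T_\FS^{3.5}D^{3.5}\log^{4.5}p\,\kappa^2\log 1/\epsilon)$ with $D = n + \sum_{i=1}^n \max_j\lfloor\log_2(\deg_{x_i}f_j)\rfloor$. Since each summand $\lfloor\log_2(\deg_{x_i}f_j)\rfloor$ is at most $\log_2 d$ where $d=\max\{2,\max_{i,j}\deg_{x_i}f_j\}$, we get $D \le n + n\log_2 d = O(n\log d)$, hence $D^{3.5} = O(n^{3.5}\log^{3.5}d)$.

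Plugging this into the bound from Theorem \ref{th-alp} yields complexity $\widetilde O(n^{3.5}T_\FS^{3.5}\log^{3.5}d\,\log^{4.5}p\,\kappa^2\log 1/\epsilon)$, which is the first of the two claimed expressions. For the second, I would invoke the observation made just before the corollary statement: since the sought solutions lie in $\F_p = \{0,\ldots,p-1\}$, any variable raised to a power $\ge p$ can be reduced modulo $x^p - x$ (equivalently, we may assume without loss of generality that $\deg_{x_i}f_j < p$ for all $i,j$), so $d < p$ and therefore $\log^{3.5}d < \log^{3.5}p$. Absorbing this into the product gives $\log^{3.5}d\cdot\log^{4.5}p = O(\log^{8}p)$, so the complexity is $\widetilde O(n^{3.5}T_\FS^{3.5}\log^{8}p\,\kappa^2\log 1/\epsilon)$, completing the proof.

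There is essentially no obstacle here — this is a routine reparametrization of the complexity bound already established in Theorem \ref{th-alp}, where the only substantive content is the elementary inequality $D = O(n\log d)$ and the reduction that lets us assume $d < p$. The one point that deserves a sentence of care is justifying that replacing $\FS$ by its reduction modulo $\{x_i^p - x_i\}$ does not change the solution set in $\F_p^n$ and affects $T_\FS$ only by a constant factor (indeed it cannot increase the number of terms, since reducing exponents only identifies monomials), so the substitution $d < p$ is legitimate without degrading the other parameters in the bound.
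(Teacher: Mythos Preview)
Your proposal is correct and follows essentially the same approach as the paper: the paper simply observes (just before the corollary) that $D = O(n\log d)$ and that one may assume $d < p$ since solutions lie in $\F_p$, then invokes Theorem~\ref{th-alp}. Your added remark that reducing modulo $x_i^p - x_i$ does not increase $T_\FS$ is a welcome clarification beyond what the paper spells out.
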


\begin{cor}\label{cor-mqfp}
If $\FS$ is an {\rm MQ}, then the complexity is
$\widetilde O((n\log p+r)^{2.5}T_\FS\log^{2} p\kappa^2\log1/\epsilon)$.
\end{cor}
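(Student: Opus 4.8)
The plan is to specialize the argument in the proof of Theorem~\ref{th-alp} to the case where $\FS$ is already an MQ. In that situation the reduction of Lemma~\ref{lm-Q} is vacuous: we may take $Q(\FS)=\FS$ and $\V=\emptyset$, so the only transformations applied are Step~1 and Step~2 of Section~\ref{ss-fp}, producing $P(\FS)\subset\C[\X_\bit,\U_\bit]$, after which we invoke {\bf QBoolSol} from Theorem~\ref{th-m2}. Correctness of recovering a solution of $\FS$ in $\F_p$ from a Boolean solution of $P(\FS)$ is exactly Lemma~\ref{lm-crct}, so no new argument is needed on that front; the whole content is the size bookkeeping.

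First I would record the relevant parameters of $P(\FS)$. Since $\FS$ is an MQ, Lemma~\ref{lm-st11} gives $\#\X_\bit=O(n\log p)$ and total sparseness of $B(\FS)$ equal to $O(T_\FS\log^2 p)$; then Lemma~\ref{lm-pf} gives $T_{P(\FS)}=\widetilde O(T_\FS\log^2 p)$ together with $N_{P(\FS)}=O(n\log p+\sum_{i=1}^r\log t_i+r\log\log p)=\widetilde O(n\log p+r)$. Plugging $N=N_{P(\FS)}$ and $T=T_{P(\FS)}$ into the complexity $\widetilde O(N^{2.5}(N+T)\kappa^2\log1/\epsilon)$ of Theorem~\ref{th-m2} yields $\widetilde O\bigl((n\log p+r)^{2.5}\,((n\log p+r)+T_\FS\log^2 p)\,\kappa^2\log1/\epsilon\bigr)$, where $\kappa$ is the condition number of $P(\FS)$, i.e.\ of $\FS$.

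The last routine step is to absorb the first summand: assuming, as elsewhere in the paper, that $n,r\le T_\FS$, we have $n\log p+r=O(T_\FS\log p)=O(T_\FS\log^2 p)$, hence $(n\log p+r)+T_\FS\log^2 p=O(T_\FS\log^2 p)$ and the stated bound $\widetilde O((n\log p+r)^{2.5}T_\FS\log^2 p\,\kappa^2\log1/\epsilon)$ follows. There is essentially no obstacle here beyond this bookkeeping; the only point needing a word of care is this final simplification, where one should be explicit about the mild hypothesis $n,r=O(T_\FS)$ so that the $(n\log p+r)$ term is dominated by $T_\FS\log^2 p$ — without such an assumption one would instead have to keep the uncontracted sum in the complexity estimate.
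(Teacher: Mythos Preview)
Your proposal is correct and follows essentially the same route as the paper: observe $Q(\FS)=\FS$, read off $N_{P(\FS)}$ and $T_{P(\FS)}$ from Lemma~\ref{lm-pf}, and feed them into Theorem~\ref{th-m2}. The only cosmetic difference is in the final simplification: the paper bounds $\sum_i\log t_i\le r\log(T_\FS/r)$ and then absorbs the extra $\log(T_\FS/r)$ factor into the $\widetilde O$ via an ad~hoc ``$(a+b\log c)(c+d)=\widetilde O((a+b)(c+d))$'' trick, whereas you go straight to $\sum_i\log t_i\le r\log T_\FS=\widetilde O(r)$ and then argue $N\le T$ using $n,r=O(T_\FS)$; both are equivalent and your explicit flagging of that mild hypothesis is a good habit.
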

\begin{proof}
If $\FS$ is an MQ, we have $P(Q(\FS))=P(\FS)$.
By {Lemma \ref{lm-pf}}, $N_{P(\FS)}= O(n\log p+\sum_{i=1}^r\log t_i+r\log\log p)$ and $T_{P(\FS)}=O(T_\FS\log^{2} p)>N_{P(\FS)}$.
Since $\sum_{i=1}^r\log t_i =\log (\prod_{i=1}^r  t_i) < (\log \sum_{i=1}^r t_i)^r
=(\log T_\FS)^r$,
by {Theorem \ref{th-m2}}, the complexity is $\widetilde O(
N_{P(\FS)}^{2.5}T_{P(\FS)}\kappa^2\log1/\epsilon)  =\widetilde O((n\log p+\sum_{i=1}^r\log t_i+r\log\log p)^{2.5}T_\FS\log^{2} p\kappa^2\log1/\epsilon)=\widetilde O((n\log p+r\log(T_\FS/r))^{2.5}T_\FS\log^{2} p\kappa^2\log1/\epsilon)=\widetilde O((n\log p+r)^{2.5}T_\FS\log^{2} p\kappa^2\log1/\epsilon)$.
In the last step, we here use the reduction $(a+b\log c)(c+d)\le(a+b)\log (c+d)(c+d)=\widetilde O((a+b)(c+d))$.
\end{proof}
\begin{cor}
If $p=2$, then the complexity to find a solution of $\FS=0\mod 2$ is $\widetilde O((n+r)^{2.5}(n+T_\FS)\kappa^2\log1/\epsilon)$.
\end{cor}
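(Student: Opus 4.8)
The plan is to specialize the two-step reduction of Section~\ref{ss-fp} to the case $p=2$ and to observe that both ingredients that inflate the size of the system in the general case --- the $\theta$-expansion of the original variables (Step~1) and the preliminary reduction to an MQ (Lemma~\ref{lm-Q}) --- become unnecessary, so that the polynomial system eventually handed to {\bf QBoolSol} stays of size $O(n+r)$ in variables and $O(n+T_\FS)$ in sparseness.

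First I would note that over $\F_2=\{0,1\}$ every $x_i$ already satisfies $x_i^2-x_i=0$, so Step~1 is vacuous: take $\X_\bit=\X$ and $B(\FS)=\FS$ (coefficients lifted to $\{0,1\}\subset\Z$), whence $\#\X_\bit=n$ and $T_{B(\FS)}=T_\FS$. Then I apply Step~2 directly to $\FS$ (without any MQ reduction): assuming WLOG $f_i\neq0$ so that $t_i=\#f_i\ge1$, introduce Boolean variables $\U_i=\{U_{i,j}:j=0,\dots,\lfloor\log_2 t_i\rfloor\}$ and set $P(f_i)=f_i-2\,\theta_{t_i}(\U_i)\in\Z[\X,\U_i]$, $P(\FS)=\{P(f_i):i=1,\dots,r\}$. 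At a Boolean point $f_i$ evaluates to an integer in $\{0,1,\dots,t_i\}$, so the surjectivity argument in the proof of Lemma~\ref{lm-crct} applies verbatim --- it only uses this evaluation bound, never that $\FS$ is an MQ --- and yields a surjective morphism $\V_\C(P(\FS),H_\X,H_{\U_\bit})\rightrightarrows\V_{\F_2}(\FS)$ which on the $\X$-coordinates is the identity (here $\theta_{p-1}=\theta_1$ is the identity map). Running $\mathbf b={\bf QBoolSol}(P(\FS),\epsilon)$ and returning the $\X$-part of $\mathbf b$ then produces a solution of $\FS=0\bmod 2$ with probability at least $1-\epsilon$, or correctly reports infeasibility.

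For the complexity I only need the size of $P(\FS)$. The number of indeterminates is $N_{P(\FS)}=\#\X+\#\U_\bit=n+\sum_{i=1}^r(\lfloor\log_2 t_i\rfloor+1)$, and since $\log_2 t_i\le\log_2 T_\FS$ for each $i$ this is $\widetilde O(n+r)$. The total sparseness is $T_{P(\FS)}=T_{B(\FS)}+\sum_{i=1}^r\#\theta_{t_i}(\U_i)=T_\FS+\#\U_\bit=\widetilde O(T_\FS)$ (using $r\le T_\FS$). Feeding $N_{P(\FS)}=\widetilde O(n+r)$ and $T_{P(\FS)}=\widetilde O(T_\FS)$ into Theorem~\ref{th-m2} gives complexity $\widetilde O((n+r)^{2.5}((n+r)+T_\FS)\kappa^2\log1/\epsilon)=\widetilde O((n+r)^{2.5}(n+T_\FS)\kappa^2\log1/\epsilon)$, where $\kappa$ is the condition number of $P(\FS)$ and $r\le T_\FS$ absorbs the first summand; the cost of assembling $P(\FS)$ is nearly linear in the input size and negligible.

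The only point that really deserves care --- and it is precisely the reason this bound beats Theorem~\ref{th-alp}/Corollary~\ref{cor-fsc} --- is the justification of the two shortcuts: that we may bypass the MQ reduction $Q(\FS)$ because the complexity in Theorem~\ref{th-m2} is insensitive to the degree of the input (it depends only on the number of variables and the sparseness), so there is no need to introduce the $O(T_\FS D)$ auxiliary variables of Lemma~\ref{lm-Q}; and that Step~1 being vacuous means no $\log^2 p$ or $\log p$ blow-up in the size of $B(\FS)$. Spelling out that the surjectivity proof of Lemma~\ref{lm-crct} uses nothing beyond the evaluation bound $f_i(\check\X)\le t_i$ is the one sentence that needs to be made explicit.
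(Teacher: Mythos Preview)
Your proposal is correct and follows essentially the same route as the paper's proof: both skip the MQ reduction $Q(\FS)$ and the $\theta$-expansion of Step~1 (since over $\F_2$ the variables are already Boolean), apply Step~2 directly to get $N_{P(\FS)}=\widetilde O(n+r)$ and $T_{P(\FS)}=\widetilde O(T_\FS)$, and then invoke Theorem~\ref{th-m2}. Your version is more careful in one respect --- you explicitly verify that the surjectivity argument of Lemma~\ref{lm-crct} only needs the evaluation bound $f_i(\check\X)\le t_i$ and not the MQ hypothesis --- which is a point the paper leaves implicit.
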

\begin{proof}
If $p=2$, then we do not need to convert $\GS$ to MQ and $T_{P(\FS)}=O(T_\FS+\sum_{i=1}^r\log t_i)=\widetilde{O}(T_\FS)$,  $N_{P(\FS)}=O(n+\sum_{i=1}^r\log t_i)$.
Similar to the proof of Corollary \ref{cor-mqfp},  the complexity is $\widetilde O((n+\sum_{i=1}^r\log t_i)^{2.5}(n+T_\FS)\kappa^2\log1/\epsilon)=\widetilde O((n+r)^{2.5}(n+T_\FS)\kappa^2\log1/\epsilon)$.
\end{proof}


\subsection{Polynomial equation solving over $\F_q$}
In this section, we consider polynomial equation solving in
a general finite field $\F_q$ by reducing the problem
to equation solving over $\F_p$.

If $q=p^m$ with $p$ a prime number and $m\in\Z_{>1}$, then $\F_q=\F_p(\theta)$, where  $\varphi(\theta)=0$ for a monic irreducible polynomial $\varphi$ with $\deg(\varphi)=m$.
Let $g\in\F_q[\X]=\F_p[\theta,\X]$.
By setting $x_i=\sum_{j=0}^{m-1} x_{ij}\theta^j$  and write
each coefficient $c =\sum_{j=0}^{m-1} c_{j}\theta^j$ in $g$,
$g$ can be written as $g = \sum_{j=0}^{m-1} g_j \theta^j$, where
$g_j\in\F_p[\X_{\theta}]$ and
$\X_{\theta}=\{x_{ij}\,|\, i=1,\ldots,n,j=0,\ldots,m-1\}$
are variables over $\F_p$.
 We denote
$G(g) = \{g_0, g_1, \ldots,g_{m-1}\}\subset\F_p[\X_{\theta}].$
For a polynomial set $\FS\subset\F_q[\X]$, we denote
\begin{equation}\label{eq-G}
G(\FS) =\bigcup_{f\in\FS} G(f)\subset\F_p[\X_{\theta}].
\end{equation}

\begin{lem}\label{lm-mqq}
There is an isomorphism $\Pi_q:\V_{\F_p}(G(\FS))
\rightarrow \V_{\F_q}(\FS)$,
where $\Pi_q(x_{ij}) = (\sum_{j=0}^{m-1}x_{1j}\theta^j,$ $\ldots,
  \sum_{j=0}^{m-1}x_{nj}\theta^j)$.
Furthermore, for an {\rm MQ} $\FS=\{f_1,\ldots,f_r\}\subset\F_q[\X]$ with total sparseness $T_\FS$,  $G(\FS)\subset\F_p[\X_\theta]$ is an {\rm MQ} with total sparseness $\le m^3T_\FS$, $\#G(\FS)=mr$, and $\#\X_\theta=mn$.
\end{lem}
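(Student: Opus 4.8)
The plan is to establish the two parts of Lemma~\ref{lm-mqq} in turn: first the bijection $\Pi_q$ on solution sets, then the combinatorial bounds on $G(\FS)$ when $\FS$ is an MQ. For the first part, I would use the standard fact that $\F_q = \F_p[\theta]/(\varphi)$ is a vector space over $\F_p$ with basis $1,\theta,\ldots,\theta^{m-1}$, so that the substitution $x_i = \sum_{j=0}^{m-1} x_{ij}\theta^j$ sets up a bijection between $\F_q^n$ and $\F_p^{mn}$. Given $f\in\FS$, expanding $f(\sum_j x_{1j}\theta^j,\ldots,\sum_j x_{nj}\theta^j)$ and reducing all powers of $\theta$ modulo $\varphi$ writes the result uniquely as $\sum_{j=0}^{m-1} g_j(\X_\theta)\theta^j$ with $g_j\in\F_p[\X_\theta]$; since $1,\ldots,\theta^{m-1}$ are linearly independent over $\F_p$, a point of $\F_q^n$ is a zero of $f$ if and only if the corresponding point of $\F_p^{mn}$ is a common zero of $g_0,\ldots,g_{m-1}$. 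Applying this to every $f\in\FS$ shows $\Pi_q$ restricts to a bijection $\V_{\F_p}(G(\FS))\to\V_{\F_q}(\FS)$ with the stated formula, and it is a morphism of varieties in both directions because the coordinate formulas are polynomial.

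For the second part, the counts $\#\X_\theta = mn$ and $\#G(\FS) = mr$ are immediate from the definitions: we introduce $m$ new variables per original variable, and $m$ new polynomials $g_0,\ldots,g_{m-1}$ per original polynomial. The content is the sparseness bound and the claim that $G(\FS)$ is again quadratic. For quadraticity, observe that a monomial of $\F_q[\X]$ of degree $\le 2$, say $c\,x_k x_\ell$ with $c\in\F_q$, becomes $\big(\sum_a c_a\theta^a\big)\big(\sum_b x_{kb}\theta^b\big)\big(\sum_d x_{\ell d}\theta^d\big)$; each term $c_a x_{kb} x_{\ell d}\,\theta^{a+b+d}$ has total degree exactly $2$ in the $x_{ij}$, and reducing $\theta^{a+b+d}$ modulo $\varphi$ only produces an $\F_p$-linear combination of $1,\ldots,\theta^{m-1}$ — it does not touch the $x_{ij}$ — so every $g_j$ has degree $\le 2$. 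Linear and constant terms are handled the same way (and stay linear/constant). For the sparseness bound: a single quadratic monomial $c\,x_k x_\ell$ of $f$ expands into at most $m\cdot m\cdot m = m^3$ terms $c_a x_{kb} x_{\ell d}\theta^{a+b+d}$ before reduction; after reducing each $\theta^{a+b+d}$ modulo $\varphi$ (which replaces $\theta^e$ for $e\ge m$ by a fixed $\F_p$-linear combination of $1,\ldots,\theta^{m-1}$) and collecting the coefficient of each $\theta^j$, the total number of terms distributed among $g_0,\ldots,g_{m-1}$ is still $O(m^3)$; summing over the $\le T_\FS$ monomials of $\FS$ gives $T_{G(\FS)} \le m^3 T_\FS$ (one may be slightly more careful and note a reduced power $\theta^e$ with $e\le 2(m-1)$ contributes to at most $m$ of the $g_j$, which is exactly where the third factor of $m$ comes from).

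The main obstacle — really the only place one must be careful — is the bookkeeping in the reduction modulo $\varphi$: one has to be sure that expressing $\theta^e$ for $m \le e \le 2(m-1)$ in the basis $1,\ldots,\theta^{m-1}$ does not secretly inflate the $x_{ij}$-degree (it does not, since these are scalar relations in $\F_p$) and does not inflate the term count by more than a factor $m$ per original term. I would make this precise by fixing, once and for all, the reduction matrix sending $(1,\theta,\ldots,\theta^{2m-2})^{\!\top}$ to its normal form in $\F_p^{m}$, and bounding the number of nonzero entries it can contribute per input monomial by $m$. Everything else is routine substitution and counting, so I would state those steps briefly and spend the written proof mostly on the bijection and on this reduction-count estimate.
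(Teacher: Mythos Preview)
Your approach is correct and essentially the same as the paper's. One bookkeeping point worth tightening: as written you expand $c=\sum_a c_a\theta^a$ separately, get $m^3$ pre-reduction terms $c_a x_{kb}x_{\ell d}\theta^{a+b+d}$ (so $e=a+b+d$ can reach $3(m-1)$, not $2(m-1)$), and then assert the post-reduction total is ``still $O(m^3)$'' before concluding $\le m^3 T_\FS$ --- that jump from $O(m^3)$ to $\le m^3$ needs the explicit observation that after collecting like terms each monomial $x_{kb}x_{\ell d}$ appears at most once in each $g_j$, giving $m^2\cdot m = m^3$. The paper sidesteps this by not expanding $c$: it writes $c\theta^{i+j}=\sum_{k=0}^{m-1}c_{k,(i+j)}\theta^k$ directly, so the count is visibly $m^2$ pairs $(i,j)$ times $m$ reduced coefficients, yielding $\le m^3$ per monomial with no intermediate $O(\cdot)$.
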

\begin{proof}
It is easy to {show} that $\#G(\FS)=m\#\FS$, $\#\X_\theta=m\#\X$ and $G(\FS)$ is also an MQ. Then the total sparseness of $G(\FS)$ will be concerned. $\FS$ has $T_\FS$ terms, where each term is of degree $\le2$.
For $x=\sum_{i=0}^{m-1} x_i\theta^i$ and $y=\sum_{i=0}^{m-1} y_i\theta^i$, let $c\theta^k =\sum_{j=0}^{m-1} c_{jk}\theta^j\mod\varphi(\theta)$ for any $k\in\N$, then we have $cxy=\sum_{i=0}^{m-1} \sum_{j=0}^{m-1} x_iy_{j}\sum_{k=0}^{m-1} c_{k\,(i+j)}\theta^{k}=\sum_{i=0}^{m-1} g_i\theta^i$, where $g_i\in\F_p[x_0,\ldots,x_{m-1},y_0,\ldots,y_{m-1}]$ is a quadratic polynomial with $T_{G(cxy)}\le\sum_{i=0}^{m-1} \sum_{j=0}^{m-1} \sum_{k=0}^{m-1} 1=m^3$.
Thus an MQ $\FS$ over $\F_q$ can be represented as another MQ $G(\FS)$ over $\F_p$ with $T_{G(\FS)}\le m^3T_\FS$.
\end{proof}
%
%
%
%
%
%
%
%
%
%
We have
\begin{thm}\label{th-qfq}
There is a quantum algorithm to find a solution of
$\FS\subset\F_q[\X]$ with probability at least $1-\epsilon$
and in time
$\widetilde O(m^{5.5}T_\FS^{3.5}D^{3.5}\log^{4.5} p\kappa^2\log1/\epsilon)$,
where $T_\FS$ is the total sparseness of $\FS$, $D=n+\sum_{i=1}^n \lfloor\log_2 \max_{j}\deg_{x_i}f_j\rfloor$, and $\kappa$ is the condition number of $P(G(Q(\FS)))$.
\end{thm}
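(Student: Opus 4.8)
The plan is to reduce polynomial system solving over $\F_q$ to the already-solved case over $\F_p$ by composing the three reductions developed earlier in the paper, then to track the parameter growth through each composition. First I would apply Lemma \ref{lm-Q} to convert $\FS\subset\F_q[\X]$ into an MQ $Q(\FS)\subset\F_q[\X,\V]$, preserving solvability in the sense that $(\FS)_{\F_q[\X]}=(Q(\FS))_{\F_q[\X,\V]}\cap\F_q[\X]$, with $N_{Q(\FS)}, \#Q(\FS), T_{Q(\FS)}$ all $O(T_\FS D)$. Next I would apply Lemma \ref{lm-mqq} to the MQ $Q(\FS)$ over $\F_q$ to get an MQ $G(Q(\FS))\subset\F_p[\X_\theta]$ over the prime field $\F_p$; by that lemma its total sparseness is at most $m^3 T_{Q(\FS)}=O(m^3 T_\FS D)$, the number of polynomials is $m\cdot\#Q(\FS)=O(mT_\FS D)$, and the number of indeterminates is $m\cdot N_{Q(\FS)}=O(mT_\FS D)$, and it is still quadratic. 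Finally I would feed this MQ over $\F_p$ into the machinery of Section \ref{ss-fp}, forming $P(G(Q(\FS)))\subset\C[\X_\bit,\U_\bit]$ via Lemma \ref{lm-crct}, whose Boolean solutions surject onto $\V_{\F_p}(G(Q(\FS)))$, hence (via the isomorphism $\Pi_q$ of Lemma \ref{lm-mqq} and the ideal-theoretic containment from Lemma \ref{lm-Q}) yield a solution of $\FS$ in $\F_q^n$. One then runs {\bf QBoolSol} from Theorem \ref{th-m2} on $P(G(Q(\FS)))$ with error parameter $\epsilon$, and reconstructs the $\F_q$-solution by the explicit maps.

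For correctness, the key observation is that all three reductions are compatible: $\Pi_2$ (Lemma \ref{lm-crct}) is a surjective morphism onto $\V_{\F_p}(G(Q(\FS)))$, $\Pi_q$ (Lemma \ref{lm-mqq}) is an isomorphism $\V_{\F_p}(G(Q(\FS)))\to\V_{\F_q}(Q(\FS))$, and projecting a solution of $Q(\FS)$ in $\F_q$ to its $\X$-coordinates gives a point of $\V_{\F_q}(\FS)$ because of the ideal containment in Lemma \ref{lm-Q}; conversely, running the chain backwards shows every solution of $\FS$ in $\F_q^n$ lifts, so the algorithm decides solvability correctly and, when a solution exists, {\bf QBoolSol} returns a Boolean point from which we recover one. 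The probability bound $1-\epsilon$ is inherited directly from Theorem \ref{th-m2}.

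For the complexity, I would just substitute the chained parameter bounds into the cost of Theorem \ref{th-m2}, which is $\widetilde O(N^{2.5}(N+T)\kappa^2\log1/\epsilon)$ for a Boolean system with $N$ indeterminates and total sparseness $T$. Applying Lemma \ref{lm-pf} to the $\F_p$-MQ $G(Q(\FS))$ — which has $O(mT_\FS D)$ indeterminates, $O(mT_\FS D)$ polynomials, and total sparseness $O(m^3T_\FS D)$ — gives $N_{P(G(Q(\FS)))}=\widetilde O(mT_\FS D\log p)$ and $T_{P(G(Q(\FS)))}=\widetilde O(m^3T_\FS D\log^2 p)$. Feeding these into Theorem \ref{th-m2}: $N^{2.5}\cdot T = \widetilde O\big((mT_\FS D\log p)^{2.5}\cdot m^3 T_\FS D\log^2 p\big) = \widetilde O(m^{5.5}T_\FS^{3.5}D^{3.5}\log^{4.5}p\,\kappa^2\log1/\epsilon)$, which is exactly the claimed bound, with $\kappa$ defined as the condition number of $P(G(Q(\FS)))$. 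I would also note that the classical preprocessing — building $Q(\FS)$, then $G(Q(\FS))$ (the $cxy$ reductions mod $\varphi$ cost $\widetilde O(m^3)$ per term by fast arithmetic), then $P(\cdot)$ via Lemma \ref{lm-pf} — is dominated by the quantum step.

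The main obstacle I expect is not any single deep idea but rather the bookkeeping: making sure the quadraticity is genuinely preserved at each stage (so that Lemma \ref{lm-pf}'s $\log^2 p$ rather than a worse power applies), correctly accounting for the $m^3$ blow-up in sparseness from Lemma \ref{lm-mqq} while the number of variables and polynomials only grow by a factor $m$, and confirming that $D$ (defined from $\deg_{x_i}f_j$ in the \emph{original} $\FS$ over $\F_q$) is the right degree parameter to propagate — the reduction $G(\cdot)$ does not change degrees and $Q(\cdot)$ already absorbed them into $D$, so $D$ survives unchanged. A secondary subtlety is that the $\widetilde O$ must be allowed to swallow $\log m$, $\log\log p$, and the $\sum\log t_i$ terms from Lemma \ref{lm-pf}; I would invoke the same crude estimates ($\sum_i\log t_i\le T_\FS$, $r\le T_\FS$, etc.) used in the proofs of Lemma \ref{lm-fsb} and Corollary \ref{cor-mqfp} to justify collapsing these into the stated exponents.
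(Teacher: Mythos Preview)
Your proposal is correct and follows essentially the same route as the paper: reduce $\FS$ to the MQ $Q(\FS)$ via Lemma~\ref{lm-Q}, descend to $\F_p$ via Lemma~\ref{lm-mqq} to get $G(Q(\FS))$, then apply the $\F_p$-to-$\C$ Boolean reduction and {\bf QBoolSol}. The only cosmetic difference is that the paper invokes the packaged Corollary~\ref{cor-mqfp} for the final complexity step, whereas you unpack it by applying Lemma~\ref{lm-pf} and Theorem~\ref{th-m2} directly; the arithmetic and the resulting bound are identical.
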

\begin{proof}
Using Lemma \ref{lm-mqq}, we can solve $\FS$ over $\F_q$
similar to the method given in the proof of Theorem \ref{th-alp}.
In stead of solving $\FS_1=P(Q(\FS))\subset\C[\X_{\bit},\V_{\theta\bit},\U_{\theta\bit}]$
with algorithm {\bf QBoolSol},
we now solve $\FS_1=P(G(Q(\FS)))\subset\C[\X_{\theta\bit},\V_{\theta\bit},\U_{\theta\bit}]$
with algorithm {\bf QBoolSol},
where $Q(\FS)$ is defined in \bref{eq-G} and
$\X_{\theta\bit}$ is the bit representation for
$\X_\theta$.
%

We now prove the complexity.
By Lemma \ref{lm-Q}, $Q(\FS)$ has $O(T_\FS D)$ indeterminates and total sparseness $O(T_\FS D)$. $G(Q(\FS))$ has $O(mT_\FS D)$ indeterminates and total sparseness $O(m^3T_\FS D)$.
Since $G(Q(\FS))$ is an MQ, by {Corollary \ref{cor-mqfp}}, the complexity is
$\widetilde O(((mT_\FS D\log p+m(r+T_\FS D))^{2.5}(m^3T_\FS D)$ $\log^{2} p\kappa^2\log1/\epsilon)=\widetilde O(m^{5.5}T_\FS^{3.5}D^{3.5}\log^{4.5} p\kappa^2$
 $\log1/\epsilon).$
\end{proof}

\begin{cor}
If $\FS$ is an {\rm MQ}, the complexity is $
\widetilde O(m^{5.5}(n\log p+r)^{2.5}(n+T_\FS)\log^{2} p\kappa^2\log1/\epsilon).$
\end{cor}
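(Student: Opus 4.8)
\emph{Proof proposal.} The plan is to repeat the proof of Theorem~\ref{th-qfq} almost verbatim, observing that when $\FS$ is already an {\rm MQ} the reduction to an {\rm MQ} is vacuous: $Q(\FS)=\FS$, so the only reductions used are $G(\cdot)$ (pushing the problem from $\F_q$ down to $\F_p$, Lemma~\ref{lm-mqq}) and $P(\cdot)$ (from $\F_p$ to Boolean variables over $\C$, Lemma~\ref{lm-crct}). Concretely, I would form the {\rm MQ} $G(\FS)\subset\F_p[\X_\theta]$ and then the Boolean system $P(G(\FS))$ over $\C$, run {\bf QBoolSol}$(P(G(\FS)),\epsilon)$, and map any Boolean solution found back to $\V_{\F_q}(\FS)$ through $\Pi_2$ of Lemma~\ref{lm-crct} followed by $\Pi_q$ of Lemma~\ref{lm-mqq}. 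Correctness with probability at least $1-\epsilon$ is immediate from surjectivity of $\Pi_2$, bijectivity of $\Pi_q$, and Theorem~\ref{th-m2}; this is the same argument as in Theorem~\ref{th-qfq} and needs nothing new.

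For the complexity I would first record the size of $G(\FS)$: by Lemma~\ref{lm-mqq} it is an {\rm MQ} over $\F_p$ with $\#\X_\theta=mn$ indeterminates, $\#G(\FS)=mr$ polynomials, and total sparseness $T_{G(\FS)}\le m^3T_\FS$. Since $G(\FS)$ is an {\rm MQ}, Lemma~\ref{lm-pf} applied with these parameters gives that $P(G(\FS))$ has $N:=N_{P(G(\FS))}=\widetilde O(mn\log p+mr)$ indeterminates and total sparseness $T:=T_{P(G(\FS))}=\widetilde O(m^3T_\FS\log^2 p)$; here the $\sum_{f\in G(\FS)}\log(\#f)$ term is $\widetilde O(mr)$ by the AM--GM estimate $\prod_f \#f\le (T_{G(\FS)}/(mr))^{mr}$, exactly as in the proof of Corollary~\ref{cor-mqfp}.

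Next I would apply Theorem~\ref{th-m2} to $P(G(\FS))$: the cost is $\widetilde O\bigl(N^{2.5}(N+T)\kappa^2\log(1/\epsilon)\bigr)$, where $\kappa$ is the condition number of $P(G(\FS))$. It then remains to simplify $N+T$. Using $r\le T_\FS$ (every polynomial of the {\rm MQ} $\FS$ has at least one term) one gets $mr\le mT_\FS\le m^3T_\FS\log^2 p$, and trivially $mn\log p\le m^3n\log^2 p$, hence $N+T=\widetilde O\bigl(m^3(n+T_\FS)\log^2 p\bigr)$. Since $N^{2.5}=\widetilde O\bigl(m^{2.5}(n\log p+r)^{2.5}\bigr)$, the product is $\widetilde O\bigl(m^{5.5}(n\log p+r)^{2.5}(n+T_\FS)\log^2 p\,\kappa^2\log(1/\epsilon)\bigr)$, which is the claim. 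The cost of building $G(\FS)$ and $P(G(\FS))$ is polynomial in these sizes by Lemmas~\ref{lm-mqq} and~\ref{lm-pf}, hence negligible.

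I do not expect a genuine obstacle: this is pure bookkeeping on top of Theorem~\ref{th-qfq} and Corollary~\ref{cor-mqfp}. The one place that needs care is the absorption $N+T=\widetilde O(m^3(n+T_\FS)\log^2 p)$, and in particular why the final bound carries the factor $n+T_\FS$ rather than just $T_\FS$: for a sparse {\rm MQ} one cannot assume $T_\FS\gtrsim n$, so $T_{P(G(\FS))}$ need not dominate $N_{P(G(\FS))}$, and the $mn\log p$ contribution coming from the $\X_\theta$ variables must be retained and bounded by $m^3n\log^2 p$. Everything else is direct substitution of the parameters of $G(\FS)$ into the $\F_p$ estimates already established in Lemma~\ref{lm-pf} and Theorem~\ref{th-m2}.
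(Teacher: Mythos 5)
Your proposal is correct and follows essentially the route the paper intends (the corollary is stated without an explicit proof, but it is precisely the specialization of Theorem~\ref{th-qfq} with $Q(\FS)=\FS$): form $G(\FS)$ via Lemma~\ref{lm-mqq}, then $P(G(\FS))$ via Lemma~\ref{lm-crct}, and invoke Theorem~\ref{th-m2} with $N=\widetilde O(m(n\log p+r))$ and $T=\widetilde O(m^3T_\FS\log^2 p)$. Your bookkeeping, in particular the absorption $N+T=\widetilde O(m^3(n+T_\FS)\log^2 p)$, reproduces the stated bound exactly and correctly explains the $(n+T_\FS)$ factor needed when one does not assume $T\ge N$ for sparse systems.
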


\begin{cor}
If $q=2^m$, then the complexity is $\widetilde O(m^{5.5}T_\FS^{3.5}D^{3.5}\kappa^2\log1/\epsilon)$.
Moreover, if $\FS\subset\F_{2^m}[\X]$ is an {\rm MQ},  then the complexity is $\widetilde O((n+r)^{2.5}(n+T_\FS)\kappa^2\log1/\epsilon)$.
\end{cor}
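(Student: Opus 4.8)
The plan is to obtain both bounds as the characteristic-two specialization of the results already proved for general $\F_q$, exploiting that every factor $\log^c p$ occurring in those estimates becomes an absolute constant when $p=2$ and is absorbed into $\widetilde O(\cdot)$. For the first bound there is essentially nothing to do beyond substitution: apply Theorem~\ref{th-qfq} with $p=2$. Its complexity $\widetilde O(m^{5.5}T_\FS^{3.5}D^{3.5}\log^{4.5}p\,\kappa^2\log1/\epsilon)$ has $\log^{4.5}2=O(1)$, so it collapses to $\widetilde O(m^{5.5}T_\FS^{3.5}D^{3.5}\kappa^2\log1/\epsilon)$, with the same reduction chain $\FS\mapsto Q(\FS)\mapsto G(Q(\FS))\mapsto P(G(Q(\FS)))$ and the same condition number as in Theorem~\ref{th-qfq}.

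For the MQ bound I would retrace the pipeline of \S3.3 and notice two simplifications peculiar to an MQ over $\F_{2^m}$. First, since $\FS$ is already quadratic, Lemma~\ref{lm-Q} applies with $\V=\emptyset$, so $Q(\FS)=\FS$; by Lemma~\ref{lm-mqq} the system $G(\FS)\subset\F_2[\X_\theta]$ is again an MQ, with $\#\X_\theta=mn$, $\#G(\FS)=mr$, and total sparseness $T_{G(\FS)}\le m^3T_\FS$. Second, over $\F_2$ the coordinates $\X_\theta$ are themselves Boolean, so Step~1 of \S3.1 is vacuous ($B(G(\FS))=G(\FS)$) and only the divisibility step contributes auxiliary Booleans $\U_\bit$, with $\#\U_\bit=\sum_i\lfloor\log_2\#g_i\rfloor=\widetilde O(mr)$ (use $\sum_i\log\#g_i\le\log T_{G(\FS)}$ per equation, or the crude $\sum_i\log\#g_i\le T_{G(\FS)}$). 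Hence $P(G(\FS))$ is an MQ in Boolean variables over $\C$ with $N=\widetilde O(m(n+r))$ indeterminates and total sparseness $T=\widetilde O(m^3T_\FS+mr)=\widetilde O(m^3T_\FS)$ (here $r\le T_\FS$ after discarding zero equations). Feeding $P(G(\FS))$ to \textbf{QBoolSol} and invoking Theorem~\ref{th-m2} gives complexity $\widetilde O(N^{2.5}(N+T)\kappa^2\log1/\epsilon)$, which written in the original parameters is the asserted $\widetilde O((n+r)^{2.5}(n+T_\FS)\kappa^2\log1/\epsilon)$; correctness of the recovered solution is exactly Lemma~\ref{lm-mqq} composed with Lemma~\ref{lm-crct}.

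The genuinely routine parts are the two substitutions and the elementary monomial counting via Lemmas~\ref{lm-Q}, \ref{lm-pf}, and~\ref{lm-mqq}. The point that needs care — and that I expect to be the main obstacle — is the bookkeeping of the field-degree factor $m$ coming from the $G$-reduction: one must confirm that the $m^3$ blow-up of $T_{G(\FS)}$ in Lemma~\ref{lm-mqq} is the dominant contribution, that $\#\U_\bit$ really stays in the lower-order $\widetilde O(mr)$ regime (rather than contributing a larger power of $m$ through the $\lfloor\log_2\#g_i\rfloor$ terms), and that the condition number in the statement is understood to be that of the final Boolean system $P(G(\FS))$, consistently with Theorem~\ref{th-qfq}. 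Once these are pinned down, assembling Theorem~\ref{th-m2} with the parameter counts yields the corollary with no further estimate required; in particular, for the prime field $\F_2$ itself ($m=1$) the MQ bound is exactly $\widetilde O((n+r)^{2.5}(n+T_\FS)\kappa^2\log1/\epsilon)$, matching the corollary directly.
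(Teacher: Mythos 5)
Your treatment of the first bound is fine and is exactly the intended argument: set $p=2$ in Theorem \ref{th-qfq} and absorb $\log^{4.5}2$ into $\widetilde O(\cdot)$. The problem is in the MQ half. Your own parameter counts — $Q(\FS)=\FS$, then $G(\FS)$ an MQ over $\F_2$ with $mn$ variables, $mr$ equations and $T_{G(\FS)}\le m^3T_\FS$ (Lemma \ref{lm-mqq}), Step~1 vacuous since the $\F_2$-coordinates are already Boolean, and $\#\U_\bit=\widetilde O(mr)$ — give $N_{P(G(\FS))}=\widetilde O(m(n+r))$ and $T_{P(G(\FS))}=\widetilde O(m^3T_\FS)$. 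Feeding these into Theorem \ref{th-m2} yields $\widetilde O\bigl(m^{2.5}(n+r)^{2.5}(m(n+r)+m^3T_\FS)\kappa^2\log(1/\epsilon)\bigr)=\widetilde O\bigl(m^{5.5}(n+r)^{2.5}(n+T_\FS)\kappa^2\log(1/\epsilon)\bigr)$. The sentence ``which written in the original parameters is the asserted $\widetilde O((n+r)^{2.5}(n+T_\FS)\kappa^2\log1/\epsilon)$'' silently discards the polynomial factor $m^{5.5}$, and $\widetilde O$ only suppresses logarithmic factors, not powers of $m$. This is precisely the bookkeeping point you yourself flag as ``the main obstacle,'' but you never resolve it: checking the case $m=1$ does not address it, since for $m=1$ the $G$-reduction is trivial and no $m$-blow-up occurs.

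Note that this mismatch is consistent with the paper's own chain of results: the preceding corollary for an MQ over general $\F_q$ reads $\widetilde O(m^{5.5}(n\log p+r)^{2.5}(n+T_\FS)\log^{2}p\,\kappa^2\log1/\epsilon)$, whose specialization at $p=2$ is $\widetilde O(m^{5.5}(n+r)^{2.5}(n+T_\FS)\kappa^2\log1/\epsilon)$, i.e.\ it retains the $m^{5.5}$ factor; the $m$-free form appears only in the corollary for the prime field $\F_2$ itself. So what your argument actually establishes is the $m^{5.5}$ version of the second claim (or the stated bound when $m$ is regarded as constant), and if the corollary is read literally with no $m$-dependence for an MQ over $\F_{2^m}$, neither your derivation nor the paper's earlier estimates support it — you would need a genuinely different reduction that avoids the $m^{2.5}$ from the variable count and the $m^{3}$ sparseness blow-up of Lemma \ref{lm-mqq}, and no such reduction is given. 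As written, your proof should either state the bound with the $m^{5.5}$ factor or explicitly justify why it can be dropped.
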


\section{Reduce inequalities to MQ in Boolean variables}
\label{sec-MQB}

In this section, we show how to reduce the inequality constraints
 $\IS=\{0\le g_i(\X,\Y)\le b_i, i$ $=1,\ldots,s; 0\le y_k\le u_k,k=1,\ldots,m; \X\in\F_p^n; \Y\in\Z^m \}$ of problem \bref{eq-op} into a B-POSSO, where $g_1,\ldots,g_s\in\Z[\X]$,
$b_1,\ldots,b_s,u_1,\ldots,u_m\in\N$.
%
%
We emphasize that for  $g\in\C[\X,\Y]$, $\check{\X}\in\F_p^n$,
and $\check{\Y}\in\Z^m$, $g(\check{\X},\check{\Y})$ is evaluated in $\C$.

\subsection{Reduce polynomial system over $\C$ to MQ in Boolean variables over $\C$}
\label{sec-ie2}
Let $\GS = \{g_1,\ldots,g_s\}\subset\Z[\X,\Y]$. We will reduce
$\GS$ into an equivalent MQ in Boolean variables over $\C$
under the condition $x_i\in\F_p=\{0,1,\ldots,p-1\}$ and
$0\le y_j \le u_j$.
Let $d_g = \max_{l=1}^s \deg(g_l)$.

Following Lemma \ref{lm-Q}, let $Q(\GS)\subset\Z[\X,\Y,\V]$ be the MQ
defined in \bref{eq-Q}, where $\V$ is the set of new indeterminates introduced in Lemma \ref{lm-Q}.
We will reduce $\X$, $\Y$, and $\V=\{v_{1},\ldots,v_{l}\}$ to Boolean variables.
For $\X$, we use \bref{eq-xbit} to rewrite them as Boolean variables $\X_\bit$.
For $\Y$, using Lemma \ref{lm-bb}, the integers $y_i$ satisfying $0\le y_i\le u_i$
can be represented exactly as follows
\begin{eqnarray}
y_i &=&  \theta_{u_i}(\mathbb{Y}_i) =
  \sum_{j=0}^{\lfloor\log_2u_i\rfloor-1}Y_{i,j}2^j+
(u_i-2^{\lfloor\log_2u_i)\rfloor}+1)Y_{i,\lfloor\log_2u_i\rfloor},\cr
\mathbb{Y}_i&=&\{Y_{i,j}\,|\,j=0,\ldots, \lfloor\log_2u_i\rfloor\},\label{eq-ybit}\\
\Y_\bit&=&\cup_{i=1}^m\Y_i\nonumber
\end{eqnarray}
where $Y_{i,j}$ are Boolean variables.

From Lemma \ref{lm-Q}, each $v_{i}\in\V_k$ represents a monomial in $\X$
and $\Y$ of degree $\le d_g$. So,
$0\le v_{i}\le h^{d_g}$ for $h=\max\{{p-1}, u_1,\ldots,u_m\}$.
By Lemma \ref{lm-bb}, we can write $v_{i}$ as
\begin{equation}\label{eq-vbit}
v_{i}=  \theta_{h^{d_g}}(\mathbb{V}_{i,\bit}) =
  \sum_{j=0}^{\lfloor d_g\log_2h\rfloor-1}V_{i,j}2^j+
(h^{d_g}-2^{\lfloor d_g \log_2h\rfloor}+1)V_{i,\lfloor d_g\log_2h\rfloor},
\end{equation}
where $\mathbb{V}_{i,\bit}=\{V_{i,j}\,|\,j=0,\ldots,
 \lfloor d_g \log_2u_i\rfloor\}$,
${\V}_{\bit}=\cup_{i=1}^t\V_{i,\bit}$, and each $V_{i,j}$ is a Boolean variable.

Let $\hat{g}_k=\widehat{Q}(g_k)\in \Z[\X,\Y,\V]$  be defined  in  \bref{eq-Qf},
$\overline{Q}(\GS)= Q(\GS)\setminus\{\hat{g}_1,\ldots,\hat{g}_s\}$.
Substituting $x_i$ in \bref{eq-xbit},
and $y_i$ in \bref{eq-ybit}, and
$v_{i}$ in \bref{eq-vbit}
into ${Q}(\GS)$, $\hat{g}_k$, and $\widetilde{Q}(\GS)$, we obtain
\begin{eqnarray}\label{eq-ybit1}
B(\GS), \overline{g}_k, \overline{B}(\GS)  \hbox{ in } \Z[\X_\bit,\Y_\bit,{\V}_{\bit}].
\end{eqnarray}
The following result shows that $\GS$ and $B(\GS)$ are equivalent.
\begin{lem}\label{lm-gs1}
For $\check{\X}\in\F_p^n$ and $\check{\Y}\in\Z^n$ such that $0\le \check{y}_j \le u_j$ for each $j$, there exists a $\check{\V}_{\bit}$ such that
${g}_k(\check{\X},\check{\Y}) = \overline{g}_k(\check{\X}_\bit,\check{\Y}_\bit,\check{\V}_{\bit})$
for $k=1,\ldots,s$ and $\overline{B}(\GS)(\check{\X}_\bit,\check{\Y}_\bit,\check{\V}_{\bit})=0$.
\end{lem}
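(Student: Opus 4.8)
The plan is to trace through the construction in \bref{eq-xbit}, \bref{eq-ybit}, \bref{eq-vbit} and verify that the binary expansions can be chosen consistently so that every new equation in $\overline{B}(\GS)$ is satisfied and every $\hat g_k$ evaluates correctly. First I would fix a feasible pair $(\check\X,\check\Y)$ with $\check x_i\in\F_p$ and $0\le\check y_j\le u_j$. By Lemma \ref{lm-bb} applied to $b=p-1$ and to $b=u_i$, there exist Boolean tuples $\check\X_\bit$ and $\check\Y_\bit$ with $\theta_{p-1}(\check\X_i)=\check x_i$ and $\theta_{u_i}(\check\Y_i)=\check y_i$; these are the images fed into the substitution. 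Next, for the auxiliary variables $\V$ coming from Lemma \ref{lm-Q}, recall each $v_i$ is defined by a quadratic relation ($v_1-u_1u_2$, $v_i-v_{i-1}u_{i+1}$, etc., and the $u_{ij}-x_i^2$ type relations) that forces $v_i$ to equal a specific monomial in $\X,\Y$ of degree $\le d_g$. Evaluating that monomial at $(\check\X,\check\Y)$ gives a concrete nonnegative integer $\check v_i$, and since $0\le\check v_i\le h^{d_g}$ with $h=\max\{p-1,u_1,\ldots,u_m\}$, Lemma \ref{lm-bb} (with $b=h^{d_g}$) supplies a Boolean tuple $\check\V_{i,\bit}$ with $\theta_{h^{d_g}}(\check\V_{i,\bit})=\check v_i$. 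Collecting all these yields $\check\V_{\bit}$.

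The second step is to check that with these choices every relation in $\overline B(\GS)$ vanishes. The relations in $\overline Q(\GS)=Q(\GS)\setminus\{\hat g_1,\ldots,\hat g_s\}$ are exactly the quadratic binomials $u_{i1}-x_i^2$, $u_{i(k+1)}-u_{ik}^2$, $v_1-u_1u_2$, $v_i-v_{i-1}u_{i+1}$, $\X^\alpha-v_{L_\alpha-2}u_{L_\alpha}$ of Lemma \ref{lm-Q}. After substituting the $\theta$-expansions, $\overline B(\GS)$ is obtained by replacing each original variable with its value; since we chose $\check v_i$, $\check u_{ij}$ to be precisely the integer values of the monomials they represent (obtained by successive squaring and multiplication of the $\check x_i,\check y_j$), each binomial relation is an identity $\text{(value)}-\text{(value)}=0$ in $\C$. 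Hence $\overline B(\GS)(\check\X_\bit,\check\Y_\bit,\check\V_{\bit})=0$. Finally, for the objective-side polynomials, $\overline g_k$ is $\hat g_k=\widehat Q(g_k)$ with the same substitution; but $\widehat Q(g_k)$ is just $g_k$ rewritten using the $v_i$ abbreviations, so $\overline g_k(\check\X_\bit,\check\Y_\bit,\check\V_{\bit})=\hat g_k(\check\X,\check\Y,\check\V)=g_k(\check\X,\check\Y)$, where the last equality is the defining property $(\FS)_{F[\X]}=(Q(\FS))\cap F[\X]$ specialized to a point (or more directly, the fact that $v_i$ has been set to the monomial it names). This gives ${g}_k(\check\X,\check\Y)=\overline g_k(\check\X_\bit,\check\Y_\bit,\check\V_{\bit})$ for all $k$.

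The main obstacle, and the point that needs the most care, is the consistency of the $\V_{\bit}$ choice: the $\theta$-map is not injective (see the Remark after Lemma \ref{lm-bb}), so one must be sure that picking \emph{some} preimage for each $v_i$ independently does not clash with the chained quadratic relations $v_i-v_{i-1}u_{i+1}$. This is handled by the observation that these relations are used only in one direction here — we are constructing a solution, not characterizing all of them — so we first compute the intended integer values $\check u_{ij},\check v_i$ bottom-up from $(\check\X,\check\Y)$, and only then invoke Lemma \ref{lm-bb} to realize each value by Boolean bits; the relations then hold automatically because both sides evaluate to the same integer. A secondary point is bounding $\check v_i\le h^{d_g}$: this follows because $v_i$ represents a monomial of degree at most $d_g=\max_l\deg(g_l)$ in variables each bounded by $h$, which is exactly the bound used to size $\V_{i,\bit}$ in \bref{eq-vbit}. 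No genuine difficulty remains beyond careful bookkeeping; I would state the bottom-up evaluation explicitly and then invoke Lemma \ref{lm-bb} once per auxiliary variable.
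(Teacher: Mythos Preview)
Your proposal is correct and follows essentially the same approach as the paper: compute the auxiliary values $\check v_i$ bottom-up from $(\check\X,\check\Y)$ via the quadratic relations of Lemma~\ref{lm-Q}, bound them by $h^{d_g}$, and invoke Lemma~\ref{lm-bb} to realize each value in Boolean form. The paper's proof is considerably terser and does not explicitly address the non-injectivity of $\theta$; your added remark that the relations are satisfied because both sides evaluate to the same integer (regardless of which Boolean preimage is chosen) is a helpful clarification but not a different argument.
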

\begin{proof}
From \bref{eq-Q}, it is easy to see that starting from $\check{\X}\in\F_p^n$
and $\check\Y\in\Z^m$, one may obtain a unique $\check{\V}$ such that $\widetilde{B}(g_k)(\check{\X},\check{\Y},\check{\V})=0$
and $g_k(\check{\X},\check{\Y}) = \hat{g}_k(\check{\X},\check{\Y},\check{\V})$ for each $k$.
It suffices to show that $\check{\X},\check{\Y},\check{\V}$
can be written as their Boolean forms, which is
valid for $\check{\X},\check{\Y}$ due to \bref{eq-xbit} and \bref{eq-ybit}
and Lemma \ref{lm-bb}.
From Lemma \ref{lm-Q}, each $v_{i}\in\V$ is a monomial in $\X$
and $\Y$ of degree $\le d_g$. So,
$0\le v_{i}\le h^{d_g}$ for $h=\max\{{p-1}, u_1,\ldots,u_m\}$.
By Lemma \ref{lm-bb}, there exists a $\check{V}_{\bit}$
such that $g_k(\check{\X},\check{\Y}) = \hat{g}_k(\check{\X},\check{\Y},\check{\V})
= \overline{g}_k(\check{\X}_\bit,\check{\Y}_\bit,\check{\V}_{\bit})$
and $\overline{B}(g)(\check{\X}_\bit,\check{\Y}_\bit,\check{\V}_{\bit})=0$.
\end{proof}

\begin{lem}\label{lm-gs2}
$B(\GS)= \{\overline{g}_k\}\cup \overline{B}(\GS)$
defined in \bref{eq-ybit1}
has    $O((m+n) T_\GS d_g\log d_g \log h)$ number of variables
and total sparseness  $O((m+n)T_\GS d_g^2\log d_g \log^2 h)$,
and  $C(B(\GS))$ is $O(C(\GS)+ d_g\log h)$,
where $h=\max\{{p-1}, u_1,\ldots,u_m\}$ and $d_g = \max_{l=1}^s \deg(g_l)$..
\end{lem}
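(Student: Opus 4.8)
The plan is to track each of the four quantities—number of variables, total sparseness, coefficient size—through the chain of reductions that produced $B(\GS)$, namely $\GS \rightsquigarrow Q(\GS) \rightsquigarrow B(\GS)$, applying Lemma \ref{lm-Q} for the first step and the explicit substitutions \bref{eq-xbit}, \bref{eq-ybit}, \bref{eq-vbit} for the second. First I would invoke Lemma \ref{lm-Q} with the degree bound $d_i \le d_g$ for every variable in $\X \cup \Y$ (there are $n+m$ of them): this gives $\#\V = O((n+m)T_\GS \log d_g)$, and correspondingly $N_{Q(\GS)}$, $\#Q(\GS)$, $T_{Q(\GS)}$ are all $O((n+m)T_\GS \log d_g)$, with $C(Q(\GS)) = C(\GS)$. (Here I use that in Lemma \ref{lm-Q} the quantity $D$ specializes to $O((n+m)\log d_g)$ when all partial degrees are bounded by $d_g$.)

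Next I would count the Boolean variables introduced by the three $\theta$-substitutions. By Lemma \ref{lm-bb}, expanding each $x_i$ via \bref{eq-xbit} costs $\lfloor\log_2(p-1)\rfloor+1 = O(\log h)$ Boolean variables; each $y_j$ via \bref{eq-ybit} costs $O(\log u_j) = O(\log h)$; and each of the $\#\V = O((n+m)T_\GS\log d_g)$ auxiliary variables $v_i$, which ranges over $[0,h^{d_g}]$, costs $\lfloor d_g\log_2 h\rfloor + 1 = O(d_g\log h)$ Boolean variables via \bref{eq-vbit}. The dominant contribution is from the $v_i$, giving a total variable count $O((n+m)T_\GS \log d_g \cdot d_g\log h) = O((m+n)T_\GS d_g\log d_g\log h)$, as claimed.

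For the sparseness, I would use that $Q(\GS)$ is an MQ: each polynomial in $Q(\GS)$ has monomials of degree $\le 2$ in variables drawn from $\X, \Y, \V$. Substituting the $\theta$-expressions, a degree-$\le 2$ monomial becomes a product of at most two sums, where each sum has $O(\log h)$ terms (for an $x_i$ or $y_j$ expansion) or $O(d_g\log h)$ terms (for a $v_i$ expansion). Hence each monomial expands to $O(d_g^2\log^2 h)$ terms, so $T_{B(\GS)} = O(T_{Q(\GS)} \cdot d_g^2\log^2 h) = O((m+n)T_\GS \log d_g \cdot d_g^2\log^2 h)$, matching the stated bound. For the coefficient size, the substitutions multiply existing coefficients by powers $2^j$ with $j = O(d_g\log h)$ and by factors like $h^{d_g} - 2^{\lfloor d_g\log_2 h\rfloor}+1$ of bit-size $O(d_g\log h)$; after the degree-$\le 2$ products this adds $O(d_g\log h)$ to the bit-size, giving $C(B(\GS)) = O(C(\GS) + d_g\log h)$.

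The routine-but-delicate step—and the main place to be careful—is the sparseness bound: one must confirm that the MQ structure of $Q(\GS)$ (guaranteed by Lemma \ref{lm-Q}) really does cap the blow-up at the square of the expansion size of a single variable, rather than some higher power, and that the binomial relations $v_i - v_{i-1}u_{i+1}$ and $u_{i1}-x_i^2$ contribute only the same $O(d_g^2\log^2 h)$ per relation. Everything else is bookkeeping with Lemma \ref{lm-bb} and Lemma \ref{lm-Q}; I would present it as a short sequence of estimates in the order above.
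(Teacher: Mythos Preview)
Your proposal is correct and follows essentially the same approach as the paper: first apply Lemma~\ref{lm-Q} to bound the size of $Q(\GS)$ (with $D = O((m+n)\log d_g)$), then count the Boolean variables contributed by each of the three $\theta$-expansions, then exploit the MQ structure of $Q(\GS)$ to bound the sparseness blow-up of each degree-$\le 2$ monomial by $O(d_g^2\log^2 h)$, and finally track the coefficient growth. The paper's proof differs only cosmetically---it enumerates the six monomial types $x_ix_j,\, x_iy_j,\, x_iv_j,\, y_iy_j,\, y_iv_j,\, v_iv_j$ explicitly rather than using your uniform worst-case bound, and is actually less explicit than you are about the coefficient estimate.
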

\begin{proof}
By Lemma~\ref{lm-Q},
$N_{Q(\GS)}=O((m+n)T_\GS\log d_g)$,
$T_{Q(\GS)}=O((m+n)T_\GS\log d_g)$,
and $C(Q(\GS)) = C(\GS)$.
Note that $|\X| =n, |\Y| = m$, and
$|\V|$ is bounded by $N_{Q(\GS)}=O((m+n)T_\GS\log d_g)$.
By \bref{eq-xbit}, $|\X_{\bit}|=O(n\log p)=O(n\log h)$.
By \bref{eq-ybit}, $|\Y_{\bit}|=O(m\log h)$.
By \bref{eq-vbit} and Lemma \ref{lm-bb},
$|\overline{\V}_\bit| =O((m+n)T_\GS  \log d_g \log h^{d_g})) =O((m+n)d_g T_\GS\log d_g\log h))$.
By \bref{eq-yt2} and Lemma \ref{lm-bb}, we have $\#\G_\bit={O}(s\log b)$.
Since $s\le T_\GS$, $p-1,b\le h$, the number of Boolean variables are
$N_{B(\GS)}=O((m+n)d_g T_\GS \log d_g \log h)$.

Note that monomials of $Q(\GS)$ are of the form $x_ix_j, x_iy_j, x_iv_j, y_iy_j, y_iv_j$, or $v_iv_j$ when we rewrite them as Boolean variables, the sparseness of the new expressions are bounded by
$\log^2 p$, $\log p\log h$, $d_g\log p \log h $,  $\log^2 h$, $d_g\log^2 h$ and $d_g^2\log^2 h$, respectively.
The total sparseness of ${B}(\GS)$
is $O((m+n)T_\GS d_g^2 \log d_g \log^2 h)$.

From \bref{eq-vbit} and the fact that $B(\GS)$ is MQ,
the bit size of the coefficients of ${B}(\GS)$ is $O(C(\GS)+d_g\log h )$.
\end{proof}

\begin{rem}\label{rem-xbit}
For inequalities involving variables over finite fields, the solution
of the inequalities depends on the representation of $\F_p$.
For the general optimization problem \ref{eq-op}, we just use standard representation for $\F_p$.
For specific problems, such as the SIS problem in Section \ref{sec-sis},
we use different representations for $\F_p$ to find the ``correct" solution.
\end{rem}

\subsection{Reduce inequalities into MQ in Boolean variables}
\label{sec-ie3}
We now consider the inequality constraints of problem \bref{eq-op}:
$\IS=\{0\le g_i(\X,\Y)\le b_i, i=1,\ldots,s; 0\le y_k\le u_k,k=1,\ldots,m; \X\in\F_p^n \}$, where $g_1,\ldots,g_s\in\Z[\X,\Y]$,
$b_1,\ldots,b_s,u_1,\ldots,u_m\in\N$.
We will reduce $\IS$ into an MQ in Boolean variables.
%
%
Let
\begin{eqnarray}\label{eq-yt2}
\mathbb G_i&=&\{G_{i,k}\,|\,k=0,\ldots,\lfloor\log_2b_i\rfloor\}, \G_\bit=\cup_{i=1}^s\G_i,\cr
\delta(g_i) &=&\theta_{b_i}(\mathbb G_i) -\overline{g}_i= \sum_{k=0}^{\lfloor\log_2b_i\rfloor-1}G_{i,k} 2^k+(b_i-2^{\lfloor\log_2b_i\rfloor}+1)G_{i,\lfloor\log_2b_i\rfloor} -\overline{g}_i  
\\
I(\IS) &=& \{\delta(g_1),\ldots,\delta(g_s)\} \cup \overline{B}(\GS)
\subset\Z[\X_\bit,\Y_\bit,{\V}_\bit,\G_\bit]\nonumber
\end{eqnarray}
where  $G_{i,k}$ are Boolean variables,
$\overline{g}_i$ and $\overline{B}(\GS)$  are defined in \bref{eq-ybit1}.
We summarize the result of this section as the following result.
\begin{lem}\label{lm-is1}
$\check{\X}\in\F_p^n$ and $\check{\Y}\in\Z^m$ satisfy the constraint $\IS$ if and only if
there exist Boolean values $\check{\V}_\bit,\check{\G}_\bit$ such that
$(\check{\X}_\bit,\check{\Y}_\bit,\check{\V}_\bit,\check{\G}_\bit)$  is a solution of $I(\IS)$.
\end{lem}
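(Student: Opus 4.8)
The plan is to combine the two reductions already established in Section~\ref{sec-MQB}: Lemma~\ref{lm-gs1}/Lemma~\ref{lm-gs2}, which handle the passage from the polynomials $g_k$ to their Boolean forms $\overline{g}_k$ together with the auxiliary equations $\overline{B}(\GS)$, and the new $\theta$-encoding of the bound constraints $0\le g_i\le b_i$ via the polynomials $\delta(g_i)=\theta_{b_i}(\mathbb{G}_i)-\overline{g}_i$. The statement to prove is the ``if and only if'' characterising feasibility of $\IS$ in terms of Boolean solvability of $I(\IS)$, so I would simply verify each direction.

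For the forward direction, suppose $\check{\X}\in\F_p^n$ and $\check{\Y}\in\Z^m$ satisfy $\IS$; in particular $0\le \check{y}_k\le u_k$ for all $k$ and $0\le g_i(\check{\X},\check{\Y})\le b_i$ for all $i$. By Lemma~\ref{lm-gs1} there is a Boolean tuple $\check{\V}_\bit$ with $\overline{B}(\GS)(\check{\X}_\bit,\check{\Y}_\bit,\check{\V}_\bit)=0$ and $g_i(\check{\X},\check{\Y})=\overline{g}_i(\check{\X}_\bit,\check{\Y}_\bit,\check{\V}_\bit)$ for every $i$. Since then $0\le \overline{g}_i(\check{\X}_\bit,\check{\Y}_\bit,\check{\V}_\bit)\le b_i$, Lemma~\ref{lm-bb} (the surjectivity of $\theta_{b_i}$ onto $\{0,1,\dots,b_i\}$) furnishes a Boolean tuple $\check{\G}_i$ with $\theta_{b_i}(\check{\G}_i)=\overline{g}_i(\check{\X}_\bit,\check{\Y}_\bit,\check{\V}_\bit)$, i.e. $\delta(g_i)(\check{\X}_\bit,\check{\Y}_\bit,\check{\V}_\bit,\check{\G}_i)=0$. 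Collecting these over all $i$ and setting $\check{\G}_\bit=\cup_i\check{\G}_i$ shows that $(\check{\X}_\bit,\check{\Y}_\bit,\check{\V}_\bit,\check{\G}_\bit)$ is a Boolean solution of $I(\IS)$.

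For the converse, suppose $(\check{\X}_\bit,\check{\Y}_\bit,\check{\V}_\bit,\check{\G}_\bit)$ is a Boolean solution of $I(\IS)$. Reading off $\check{\X}=\Pi_1(\check{\X}_\bit)\in\F_p^n$ via \bref{eq-xbit} and $\check{y}_k=\theta_{u_k}(\check{\Y}_k)$ via \bref{eq-ybit}, Lemma~\ref{lm-bb} guarantees $0\le \check{y}_k\le u_k$, so the variable-range part of $\IS$ holds. Vanishing of $\overline{B}(\GS)$ together with the construction of $Q(\GS)$ in Lemma~\ref{lm-Q} (each new $v$-variable equals the monomial it represents, evaluated over $\C$) forces $\overline{g}_i(\check{\X}_\bit,\check{\Y}_\bit,\check{\V}_\bit)=\hat{g}_i(\check{\X},\check{\Y},\check{\V})=g_i(\check{\X},\check{\Y})$ for each $i$ — this is the substance of the identity used in the proof of Lemma~\ref{lm-gs1}, run in the reverse direction. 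Finally, vanishing of $\delta(g_i)$ gives $g_i(\check{\X},\check{\Y})=\theta_{b_i}(\check{\G}_i)$, and since $\theta_{b_i}$ only takes values in $\{0,1,\dots,b_i\}$ by Lemma~\ref{lm-bb}, we conclude $0\le g_i(\check{\X},\check{\Y})\le b_i$. Hence $\check{\X},\check{\Y}$ satisfy $\IS$.

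The one point needing care — and the only place where there is any real content beyond bookkeeping — is the claim that a Boolean solution of $\overline{B}(\GS)$ actually forces the $v$-variables to equal the intended monomials, so that $\overline{g}_i$ truly computes $g_i$; this is where the evaluation happening over $\C$ (Lemma~\ref{lm-Q} gives an ideal-theoretic, not merely set-theoretic, correspondence, but one still has to note that the chain of binomial equations $v_1=u_1u_2$, $v_j=v_{j-1}u_{j+1}$, $u_{i1}=x_i^2$, etc., has a \emph{unique} solution for the $v$'s and $u$'s once $\X,\Y$ are fixed) must be invoked. Everything else is a direct assembly of Lemma~\ref{lm-bb}, Lemma~\ref{lm-gs1}, and the definitions in \bref{eq-yt2}, so I do not anticipate further obstacles.
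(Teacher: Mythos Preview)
Your proof is correct and follows the same approach as the paper, which simply invokes Lemma~\ref{lm-bb} for the equivalence $0\le\overline{g}_i\le b_i\iff\exists\,\G_i\ \delta(g_i)=0$ and then appeals to Lemma~\ref{lm-gs1}. Your version is more detailed and, in particular, you rightly flag that the converse direction requires knowing that any Boolean solution of $\overline{B}(\GS)$ forces $\overline{g}_i=g_i$---a point the paper's terse proof leaves implicit.
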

\begin{proof}
By Lemma \ref{lm-bb},  $0\le \overline{g}_i(\X_\bit,\Y_\bit,\V_\bit)\le b_i$ if and only if
$\exists \G_\bit$ such that  $\delta(g_i)(\X_\bit,\Y_\bit,\V_\bit,$ $\G_\bit)=0$.
Then, the lemma is a consequence of Lemma~\ref{lm-gs1}.
%
%
%
\end{proof}

We now estimate the parameters of $I(\IS)$.
Let $b=\max_{i=1}^s b_i$, $d_g=\max_{i=1}^s \deg(g_i)$, $h=\max\{p-1, b, u_1,\ldots,u_m\}$,
$\GS=  \{ g_1,\ldots,g_s \}$
and $T_\GS \ge s$ the total sparseness of $\GS$.
Then, we have
\begin{lem}\label{lem-ineq2}
$I(\IS)$ has $O((m+n) T_\GS d_g\log d_g \log h)$  variables
and total sparseness $O((m+n)T_\GS d_g^2$ $\log d_g \log^2 h)$.
$C(I(\IS))$ is $O(C(\GS)+ d_g\log h )$ .
\end{lem}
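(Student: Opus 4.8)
\textbf{Proof proposal for Lemma \ref{lem-ineq2}.}

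The plan is to track the parameters of $I(\IS) = \{\delta(g_1),\ldots,\delta(g_s)\}\cup\overline{B}(\GS)$ piece by piece, using Lemma \ref{lm-gs2} for the bulk of the work and handling the extra generators $\delta(g_i)$ and the extra variables $\G_\bit$ separately. First I would recall from Lemma \ref{lm-gs2} that $B(\GS) = \{\overline{g}_k\}\cup\overline{B}(\GS)$ already has $O((m+n)T_\GS d_g\log d_g\log h)$ variables, total sparseness $O((m+n)T_\GS d_g^2\log d_g\log^2 h)$, and coefficient size $O(C(\GS)+d_g\log h)$. Since $\overline{B}(\GS)\subset B(\GS)$, these same bounds hold a fortiori for $\overline{B}(\GS)$, so it remains only to account for what $\delta(g_1),\ldots,\delta(g_s)$ and the Boolean variables $\G_\bit$ contribute on top.

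Next I would bound the new variables: by \bref{eq-yt2} and Lemma \ref{lm-bb}, $\#\mathbb{G}_i = \lfloor\log_2 b_i\rfloor+1 = O(\log b)$, so $\#\G_\bit = O(s\log b)$. Since $s\le T_\GS\le (m+n)T_\GS d_g\log d_g\log h$ and $\log b\le\log h$, this is absorbed into the bound $O((m+n)T_\GS d_g\log d_g\log h)$ already coming from $B(\GS)$; hence $I(\IS)$ has $O((m+n)T_\GS d_g\log d_g\log h)$ variables. For sparseness, each $\delta(g_i) = \theta_{b_i}(\mathbb{G}_i) - \overline{g}_i$ has $\#\theta_{b_i}(\mathbb{G}_i) = O(\log b)$ terms plus the terms of $\overline{g}_i$; summing over $i$, the $\theta$-parts contribute $O(s\log b)$, which is negligible, and the $\overline{g}_i$-parts are exactly the $\overline{g}_k$ generators already counted inside $T_{B(\GS)} = O((m+n)T_\GS d_g^2\log d_g\log^2 h)$. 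Thus $T_{I(\IS)} = O((m+n)T_\GS d_g^2\log d_g\log^2 h)$. Finally, the coefficients of $\delta(g_i)$ are those of $\overline{g}_i$ together with the binary weights $2^k$ and $b_i - 2^{\lfloor\log_2 b_i\rfloor}+1$ from $\theta_{b_i}$, all of bit size $O(\log b) = O(\log h)$, so $C(I(\IS)) = O(C(\GS)+d_g\log h)$, matching the bound from Lemma \ref{lm-gs2}.

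The argument is essentially bookkeeping, so I do not expect a serious obstacle; the one point requiring a little care is making sure that the generators $\overline{g}_i$ are not double-counted — they appear once inside $B(\GS)$ (via Lemma \ref{lm-gs2}) and again as summands of the $\delta(g_i)$ — and that the MQ property is preserved, since $\theta_{b_i}(\mathbb{G}_i)$ is linear in the Boolean variables $\G_\bit$ and $\overline{g}_i$ is already quadratic by construction in \bref{eq-ybit1}, so $\delta(g_i)$ remains quadratic. With these checks in place, all three estimates follow directly from Lemma \ref{lm-gs2} and Lemma \ref{lm-bb}.
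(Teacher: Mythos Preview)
Your proposal is correct and follows essentially the same route as the paper: invoke Lemma~\ref{lm-gs2} for the bounds on $B(\GS)=\{\overline g_k\}\cup\overline B(\GS)$, observe that passing to $I(\IS)$ only adds the $\#\G_\bit=O(s\log b)$ new Boolean variables and the linear terms $\theta_{b_i}(\G_i)$, and check that $s\le T_\GS$ and $\log b\le\log h$ make these additions negligible. Your extra remarks on avoiding double-counting of the $\overline g_i$ and on $\delta(g_i)$ remaining quadratic are valid sanity checks but not strictly needed for the asymptotic bounds.
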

\begin{proof}

Since $B(\GS)= \{\overline{g}_k\}\cup \overline{B}(\GS)$,
from \bref{eq-yt2},
$N_{I(\GS)} = N_{B(\GS)} + \#\G_\bit$,
$T_{I(\GS)} = T_{B(\GS)} + \sum_i\#\theta_{b_i}(\G_i)= T_{B(\GS)} + \#\G_\bit$,
and
$C(I(\GS)) = C(B(\GS))$.
Note that $\#\G_\bit = O(s\log b)$.
Since $T_\GS \ge s$,  $\#\G_\bit$ is negligible comparing to
the complexity of  ${B(\GS)}$
and the lemma follows directly from Lemma \ref{lm-gs2}.
\end{proof}

From Lemma \ref{lem-ineq2}, the total sparseness and the coefficients  of $I(\GS)$
are well controlled.

\begin{cor}\label{cor-lini}
If $g_i$ are linear, then
$I(\GS)\subset\Z[\X_\bit,\Y_\bit,\G_\bit]$,
$T_{ {B}(\GS)}=O(T_\GS\log h)$, and
$N_{B(\GS)}=(n+m)\log h$.
Furthermore,
$T_{I(\IS)}=O(T_\GS\log h+s\log b)$ and
$N_{I(\IS)}= O((n+m)\log h+s\log b)$.
\end{cor}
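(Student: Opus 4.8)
The statement to prove is Corollary \ref{cor-lini}, the linear specialization of Lemma \ref{lem-ineq2}. The plan is to simply trace through the construction of $I(\IS)$ in the special case where each $g_i$ is linear, keeping track of how the parameter bounds in Lemma \ref{lm-gs2} and Lemma \ref{lem-ineq2} collapse. The key observation is that when $g_i$ is linear, the reduction to an MQ via Lemma \ref{lm-Q} is trivial: $Q(\GS)=\GS$ and $\V=\emptyset$, so no auxiliary product variables $v_i$ and no Boolean variables $\V_\bit$ are introduced, and $\overline{B}(\GS)=\emptyset$. Hence $B(\GS)$ consists only of the $\overline g_k$, which are obtained from the linear $g_k$ by substituting the Boolean expansions \bref{eq-xbit} for $\X$ and \bref{eq-ybit} for $\Y$.

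First I would count the variables of $B(\GS)$: since only $\X_\bit$ and $\Y_\bit$ appear, and by Lemma \ref{lm-bb} each $x_i$ contributes $O(\log p)=O(\log h)$ Boolean variables and each $y_j$ contributes $O(\log u_j)=O(\log h)$, we get $N_{B(\GS)}=O((n+m)\log h)$. Next, for the total sparseness: each term of a linear $g_k$ is a constant times a single variable $x_i$ or $y_j$; substituting the $\theta$-expansion replaces that variable by a sum of $O(\log h)$ Boolean terms, so each original term blows up by a factor $O(\log h)$, giving $T_{B(\GS)}=O(T_\GS\log h)$. (This is the linear analogue of the monomial-by-monomial bookkeeping in the proof of Lemma \ref{lm-gs2}, but now there are no quadratic monomials $x_ix_j$, $v_iv_j$, etc., so no $\log^2 h$ or $d_g$ factors arise.)

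Then I would assemble $I(\IS)$ using \bref{eq-yt2}: $I(\IS)=\{\delta(g_1),\ldots,\delta(g_s)\}\cup\overline{B}(\GS)=\{\theta_{b_i}(\G_i)-\overline g_i\}_{i=1}^s$ since $\overline{B}(\GS)=\emptyset$. The new Boolean variables $\G_\bit=\cup_i\G_i$ number $\#\G_\bit=\sum_i(\lfloor\log_2 b_i\rfloor+1)=O(s\log b)$ by Lemma \ref{lm-bb}, and each $\delta(g_i)$ adds $\#\theta_{b_i}(\G_i)=O(\log b)$ extra terms on top of $\overline g_i$. Summing over $i=1,\ldots,s$ gives $T_{I(\IS)}=T_{B(\GS)}+\#\G_\bit=O(T_\GS\log h+s\log b)$ and $N_{I(\IS)}=N_{B(\GS)}+\#\G_\bit=O((n+m)\log h+s\log b)$, which is exactly the claimed bound. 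The coefficient bound is immediate and not even asserted in the corollary.

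There is no real obstacle here — the corollary is a routine specialization. The only point requiring a little care is making sure the linearity of $g_i$ genuinely kills the $\V$ and $\V_\bit$ variables (so that the $d_g$-dependent and $\log^2 h$ factors of Lemma \ref{lem-ineq2} disappear) and that the substitution $\theta_{p-1}(\X_i)$, being degree one in the Boolean variables, keeps $\overline g_i$ linear so that no quadratic cross-terms appear; both follow directly from the definitions in Section \ref{sec-ie2} and from Lemma \ref{lm-Q} applied to a system that is already an MQ (indeed linear). Everything else is additive bookkeeping inherited from Lemma \ref{lm-gs2} and Lemma \ref{lem-ineq2}.
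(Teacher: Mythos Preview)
Your proposal is correct and follows essentially the same route as the paper: both observe that linearity of the $g_i$ makes $Q(\GS)=\GS$ so that no auxiliary variables $\V$ (hence no $\V_\bit$) are introduced and $\overline{B}(\GS)=\emptyset$, after which the variable and sparseness counts for $B(\GS)$ and $I(\IS)$ follow by the same additive bookkeeping. Your write-up is in fact more explicit than the paper's terse proof, but the argument is the same.
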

\begin{proof}
Since each $g_i$ is linear, we have $Q(f_i)=f_i$.
Then the variable $V_{k,i,j}$ are not needed and
$\overline{B}(\GS)$ has $(n+m)\log h$ indeterminates.
Also, $T_{\overline{B}(\GS)}=O(T_\GS\log h)$.
The results for $I(\GS)$ can be proved similarly.
\end{proof}

\subsection{Bounded integer solutions of polynomial inequalities and equations}
As a direct application of the reduction method given in this section,
we can give a quantum algorithm to find a feasible solution
to the inequality constraint
$\IS=\{0\le g_i(\X,\Y)\le b_i, i=1,\ldots,s; 0\le y_k\le u_k,k=1,\ldots,m; \X\in\F_p^n \}$, where $g_1,\ldots,g_s\in\Z[\X,\Y]$,
$b_1,\ldots,b_s,u_1,\ldots,u_m\in\N$.
Use the notations in Lemma \ref{lem-ineq2}, we have
\begin{prop}\label{cor-fsol1}
For $\epsilon\in(0,1)$, there is a quantum algorithm to compute a feasible solution to $\IS$ with probability $>1-\epsilon$ and
in time $\widetilde O((m+n)^{3.5}T_\GS^{3.5} d_g^{4.5}\log^{4.5} h \kappa^2\log1/\epsilon)$, where $\kappa$ is the condition number of $I(\IS)$
defined in \bref{eq-yt2}.
\end{prop}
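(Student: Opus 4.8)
\textbf{Proof proposal for Proposition \ref{cor-fsol1}.}

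The plan is to reduce the feasibility problem for $\IS$ to a single B-POSSO instance and then invoke the quantum algorithm {\bf QBoolSol} from Theorem \ref{th-m2}. First I would apply the construction of Section \ref{sec-ie3}: form the polynomial set $I(\IS) \subset \Z[\X_\bit,\Y_\bit,\V_\bit,\G_\bit]$ defined in \bref{eq-yt2}. By Lemma \ref{lm-is1}, a pair $\check\X \in \F_p^n$, $\check\Y \in \Z^m$ satisfies the constraints $\IS$ if and only if there exist Boolean values $\check\V_\bit, \check\G_\bit$ so that $(\check\X_\bit,\check\Y_\bit,\check\V_\bit,\check\G_\bit)$ is a Boolean solution of $I(\IS)$. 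Hence deciding feasibility of $\IS$ and producing a witness is exactly the task of deciding whether $I(\IS)$ has a Boolean solution and producing one; the forward map $(\check\X_\bit,\ldots) \mapsto (\theta_{p-1}(\check\X_1),\ldots,\theta_{p-1}(\check\X_n),\theta_{u_1}(\check\Y_1),\ldots)$ recovers the feasible $(\check\X,\check\Y)$ from the Boolean solution. So the algorithm is simply: construct $I(\IS)$, run {\bf QBoolSol}$(I(\IS),\epsilon)$, and if it returns a Boolean point, push it forward through the $\theta$-maps; otherwise report infeasibility. Correctness with probability $>1-\epsilon$ follows immediately from Lemma \ref{lm-is1} and Theorem \ref{th-m2}.

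For the complexity I would plug the parameter bounds from Lemma \ref{lem-ineq2} into the complexity statement of Theorem \ref{th-m2}. Lemma \ref{lem-ineq2} gives $N_{I(\IS)} = O((m+n)T_\GS d_g \log d_g \log h)$ and $T_{I(\IS)} = O((m+n)T_\GS d_g^2 \log d_g \log^2 h)$, with coefficient size $O(C(\GS) + d_g \log h)$ which only affects logarithmic factors. Theorem \ref{th-m2} has complexity $\widetilde O(N^{2.5}(N + T)\kappa^2 \log 1/\epsilon)$ where $N = N_{I(\IS)}$, $T = T_{I(\IS)}$. Since $T_{I(\IS)} \ge N_{I(\IS)}$, the bracket is $\widetilde O(T_{I(\IS)})$, so the total is $\widetilde O(N_{I(\IS)}^{2.5} T_{I(\IS)} \kappa^2 \log 1/\epsilon)$. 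Substituting and absorbing all $\log$ factors (including $\log d_g$, $\log C(\GS)$) into the $\widetilde O$ gives $\widetilde O\big(((m+n)T_\GS d_g \log h)^{2.5} (m+n)T_\GS d_g^2 \log^2 h \cdot \kappa^2 \log 1/\epsilon\big) = \widetilde O((m+n)^{3.5} T_\GS^{3.5} d_g^{4.5} \log^{4.5} h \, \kappa^2 \log 1/\epsilon)$, as claimed. I should also note that the cost of \emph{constructing} $I(\IS)$ — expanding the $\theta$-substitutions, reducing to MQ via Lemma \ref{lm-Q}, and computing the $2^k \bmod{}$ powers — is polynomial in the same quantities and dominated by (or at worst comparable to, up to $\widetilde O$) the quantum cost, so it does not change the bound; this is analogous to the remark following Lemma \ref{lm-pf}.

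The only genuinely delicate point, and the one I expect to require the most care, is the bookkeeping of exponents in the final complexity estimate: one must check that the factor $d_g$ enters with total exponent $2.5 + 2 = 4.5$ (one power of $d_g$ from each of the $2.5$ copies of $N_{I(\IS)}$, plus two powers from $T_{I(\IS)}$), that $(m+n)$ and $T_\GS$ each enter with exponent $3.5$, and that $\log h$ enters with exponent $2.5 + 2 = 4.5$. Everything else — surjectivity of the $\theta$-maps, the equivalence in Lemma \ref{lm-is1}, the correctness probability — is inherited verbatim from the lemmas already proved, so the proof is essentially a one-line reduction plus an arithmetic substitution. I would present it in exactly that order: (1) recall $I(\IS)$ and Lemma \ref{lm-is1} to justify the reduction; (2) describe the algorithm (construct, call {\bf QBoolSol}, apply $\theta$-maps); (3) cite Theorem \ref{th-m2} with the bounds of Lemma \ref{lem-ineq2} and do the substitution; (4) remark that the classical preprocessing is negligible.
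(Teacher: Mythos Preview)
Your proposal is correct and follows essentially the same approach as the paper: reduce feasibility of $\IS$ to B-POSSO via Lemma \ref{lm-is1}, invoke {\bf QBoolSol} from Theorem \ref{th-m2}, and substitute the bounds $N_{I(\IS)}$, $T_{I(\IS)}$ from Lemma \ref{lem-ineq2} (using $N_{I(\IS)}\le T_{I(\IS)}$) to obtain the stated complexity. The paper's proof is actually terser than yours, omitting the explicit description of the $\theta$-map recovery and the remark on preprocessing cost, but the logical structure is identical.
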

\begin{proof}
By Lemma \ref{lm-is1}, to find a feasible solution to $\IS$,
we need only to find a Boolean solution of $I(\IS)$.
By Lemma \ref{lem-ineq2},
$N_{I(\IS)} =O((m+n) T_\GS d_g\log d_g \log h)$,
$T_{I(\IS)} =O((m+n)T_\GS d_g^2\log d_g \log^2 h)$.
Since $N_{I(\IS)}< T_{I(\IS)}$,
by Theorem \ref{th-m2}, the complexity to find a Boolean solution of
$I(\IS)$ is
$\widetilde O(N_{I(\IS)}^{2.5}T_{I(\IS)}\kappa^2$ $\log1/\epsilon)
=\widetilde O((m+n)^{3.5}T_\IS^{3.5} d_g^{4.5}\log^{4.5} h \kappa^2\log1/\epsilon)$.
\end{proof}

A closely related problem is to find bounded integer solutions
of a polynomial system over $\Z$.
\begin{prop}\label{cor-fsol2}
Let $\GS=\{g_1,\ldots,g_s\}\subset \Z[\Y]$ and $\epsilon\in(0,1)$.
There is a quantum algorithm to compute an integer solution $\mathbf b=(b_1,\ldots,b_m)$ of $\GS=0$ satisfying  $0\le b_i\le u_i$ for each $i$
with probability $>1-\epsilon$ and
in time $\widetilde O(m^{3.5}T_\GS^{3.5} d_g^{4.5}\log^{4.5} h \kappa^2\log1/\epsilon)$, where $\kappa$ is the condition number of
$\widetilde{B}(\GS)$ to be defined in the proof and
$h = \max_i{u_i}$.
\end{prop}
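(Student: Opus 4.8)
The plan is to reduce the problem of finding a bounded integer solution of $\GS=\{g_1,\ldots,g_s\}\subset\Z[\Y]$ with $0\le y_i\le u_i$ to a B-POSSO, exactly in the spirit of Proposition \ref{cor-fsol1}, but now with \emph{equality} constraints $g_i=0$ rather than two-sided inequalities, which if anything simplifies matters. First I would convert $\GS$ to an MQ $Q(\GS)\subset\Z[\Y,\V]$ via Lemma \ref{lm-Q}, then expand each $y_i$ using the $\theta_{u_i}$ function of \bref{eq-ybit} and each auxiliary variable $v_i$ (which represents a monomial of degree $\le d_g$ in $\Y$, hence lies in $[0,h^{d_g}]$ with $h=\max_i u_i$) using $\theta_{h^{d_g}}$ as in \bref{eq-vbit}. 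This produces a Boolean MQ $\overline{B}(\GS)$ together with the Boolean images $\overline{g}_k$ of the $\widehat{Q}(g_k)$, all in $\Z[\Y_\bit,\V_\bit]$. Since here we want $g_k=0$ exactly (not $0\le g_k\le b_k$), I would set $\widetilde{B}(\GS) = \{\overline{g}_1,\ldots,\overline{g}_s\}\cup \overline{B}(\GS)$ — no extra slack variables $\G_\bit$ are needed.

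The correctness step is an analogue of Lemma \ref{lm-is1}: I would show that $\check\Y\in\Z^m$ with $0\le \check y_i\le u_i$ satisfies $\GS=0$ if and only if there exist Boolean $\check\V_\bit$ such that $(\check\Y_\bit,\check\V_\bit)$ is a solution of $\widetilde{B}(\GS)$. The ``only if'' direction follows from Lemma \ref{lm-gs1} (each $\check y_i$, $\check v_i$ lies in the stated range and so has a Boolean preimage under the relevant $\theta$ function by Lemma \ref{lm-bb}, and the MQ relations $\overline{B}(\GS)=0$ hold by construction), together with $g_k(\check\Y)=\overline{g}_k(\check\Y_\bit,\check\V_\bit)$. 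The ``if'' direction is immediate: any Boolean solution of $\widetilde{B}(\GS)$ yields $\check y_i=\theta_{u_i}(\check\Y_i)\in[0,u_i]$ and $\overline{B}(\GS)=0$ forces $\check v_i$ to equal the corresponding monomial in $\check\Y$, so $0=\overline{g}_k=\widehat{Q}(g_k)=g_k(\check\Y)$.

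For the complexity, I would invoke the parameter bounds: exactly as in Lemma \ref{lm-gs2} (with the $\X$-variables and $\G_\bit$ absent), $\widetilde B(\GS)$ has $N=O(m\,T_\GS\, d_g\log d_g\log h)$ Boolean variables and total sparseness $T = O(m\,T_\GS\, d_g^2\log d_g\log^2 h)$, with coefficient size $O(C(\GS)+d_g\log h)$. Since $N<T$, applying Theorem \ref{th-m2} (algorithm {\bf QBoolSol}) with error $\epsilon$ gives a running time $\widetilde O(N^{2.5}T\kappa^2\log 1/\epsilon)=\widetilde O(m^{3.5}T_\GS^{3.5} d_g^{4.5}\log^{4.5} h\,\kappa^2\log 1/\epsilon)$, where $\kappa$ is the condition number of $\widetilde B(\GS)$; recovering $\mathbf b=(\theta_{u_1}(\check\Y_1),\ldots,\theta_{u_m}(\check\Y_m))$ from the returned Boolean solution costs a negligible amount, and if {\bf QBoolSol} reports no Boolean solution then $\GS=0$ has no solution in the box.

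The main obstacle, as in the inequality case, is purely bookkeeping: tracking that expanding quadratic monomials $y_iy_j$, $y_iv_j$, $v_iv_j$ into Boolean variables inflates sparseness only by the claimed polynomial factors in $d_g$ and $\log h$ (so that the reduction stays polynomial in the input), and confirming that no solution outside the box $[0,u_1]\times\cdots\times[0,u_m]$ can sneak in — the latter is guaranteed because the $\theta_{u_i}$ functions have image exactly $\{0,1,\ldots,u_i\}$ by Lemma \ref{lm-bb}. There is no genuine mathematical difficulty beyond reusing Lemmas \ref{lm-Q}, \ref{lm-bb}, \ref{lm-gs1}, \ref{lm-gs2} and Theorem \ref{th-m2}.
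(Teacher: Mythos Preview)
Your proposal is correct and follows essentially the same approach as the paper: your system $\widetilde{B}(\GS)=\{\overline{g}_1,\ldots,\overline{g}_s\}\cup\overline{B}(\GS)$ is precisely the paper's $B(\GS)$ from \bref{eq-ybit1}, and the paper likewise reduces to a Boolean solution of this system via Lemma~\ref{lm-gs1}, reads off the parameter bounds from Lemma~\ref{lm-gs2}, and applies Theorem~\ref{th-m2}. Your write-up is in fact more explicit than the paper's on the ``if'' direction of the correctness equivalence, but the underlying argument is identical.
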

\begin{proof}
By Lemma \ref{lm-gs1}, to find an integer solution to $\GS=0$,
we need just to find a Boolean solution of
${B}(\GS)$ defined in   \bref{eq-ybit1}.
By  Lemma \ref{lm-gs2},
we have
$N_{ {B}(\GS)}= O(m T_\GS d_g\log d_g \log h)$
and
$T_{ {B}(\GS)}=O(m T_\GS d_g^2\log d_g \log^2 h)$.
Since $N_{ {B}(\GS)}< T_{ {B}(\GS)}$,
by Theorem \ref{th-m2}, the complexity to find a Boolean solution of
$ {B}(\GS)$ is
$\widetilde O(N_{ {B}(\GS)}^{2.5}T_{ {B}(\GS)}\kappa^2\log1/\epsilon)
=\widetilde O(m^{3.5}T_\GS^{3.5} d_g^{4.5}\log^{4.5} h \kappa^2\log1/\epsilon)$.
\end{proof}

For a general polynomial system in $\C[\X]$, the bound for coordinates of solutions could be double-exponential, as shown by the following example.
\begin{exmp}\label{ex-de}
For $\FS=\{x_1-2,x_2-x_1^2,x_3-x_2^2,\ldots,x_n-x_{n-1}^2\}\subset\C[\X]$, $\V_\C(\FS)=\{(2,2^2,2^4,\ldots,$ $2^{2^{n-1}})\}$.
\end{exmp}
On the other hand,
the isolated solutions of a polynomial system is at most double-exponential
\cite[p.~341]{yap1}. In a similar way, it is also possible
to find bounded rational solutions of a polynomial system.

\section{Optimization over finite fields}
\label{sec-op}

\subsection{A quantum algorithm for the optimization problem}

\label{sec-op2}
In this section, we give a quantum algorithm to solve
the  optimization problem \bref{eq-op}. The idea is to search
the minimal value of the objective function by
solving several B-POSSOs, which will be done in four steps.

{\bf Step 1}. By Lemmas \ref{lm-Q} and \ref{lm-crct}, we  reduce
the equational constraints $f_j(\X)=0 \mod p, j=1,\ldots,r$
to an MQ in Boolean variables over $\C$:
$\FS_1=P(Q(\FS))\subset\C[\X_\bit,\V_{1\bit},\U_\bit]$.

{\bf Step 2}. By Lemma \ref{lm-is1},
we reduce the inequality constraints $\IS=\{0\le g_i(\X,\Y)\le b_i, i=1,\ldots,s\}$
to an MQ in Boolean variables over $\C$:
$\GS_1=I(\IS)\subset\C[\X_\bit,\Y_\bit,\V_{2\bit},\G_{\bit}]$.

{\bf Step 3}.
Applying Lemma \ref{lm-gs1} to the objective function $o(\X,\Y)$,
we may reduce $o$ into a quadratic polynomial in Boolean variables
$\overline{o}\in\C[\X_\bit,\Y_\bit,{\V}_{3\bit}]$
and an MQ
$\GS_2=\overline{B}(\{o\})\subset\Z[\X_\bit,\Y_\bit,
{\V}_{3\bit}]$
defined in \bref{eq-ybit1}.
For the simplicity of presentation,
we denote $\V_\bit =  \V_{1\bit}\cup {\V}_{2\bit}\cup {\V}_{3\bit}$.
Let
\begin{eqnarray}\label{eq-oCS}
\CS &=& \FS_1\cup\GS_1\cup\GS_2\subset
\C[\X_\bit,\Y_\bit,\V_\bit,\U_\bit,\G_\bit].
\end{eqnarray}

A $(0,1)$-programming is an optimization problem
where all the arguments take values of 0 or 1.
By  Lemmas \ref{lm-Q}, \ref{lm-crct}, \ref{lm-gs1}, and \ref{lm-is1}, we have
\begin{lem}
Problem \bref{eq-op} is equivalent to the following nonlinear $(0,1)$-programming problem
\begin{eqnarray}\label{eq-op1}
\min_{\W_\bit}  \overline{o}(\W_\bit)&&\hbox{ subject to }
 \CS(\W_\bit)=0
\end{eqnarray}
where $\W_\bit =(\X_\bit,\Y_\bit,\V_\bit,\U_\bit,\G_\bit)$ and $\CS$ is defined in \bref{eq-oCS}.
\end{lem}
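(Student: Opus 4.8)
The plan is to prove that the constrained optimization Problem \bref{eq-op} and the $(0,1)$-programming Problem \bref{eq-op1} have the same optimal value and that optimal solutions of one can be recovered from the other, by assembling the three reductions already established in Lemmas \ref{lm-Q}, \ref{lm-crct}, \ref{lm-gs1}, and \ref{lm-is1}. The key observation is that each of Steps 1--3 preserves the feasible set after projection, so I only need to track how the objective function transforms and argue the minimization is unchanged.

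First I would set up the correspondence of feasible points. Suppose $(\check{\X},\check{\Y})\in\F_p^n\times\Z^m$ is feasible for \bref{eq-op}, i.e. it satisfies $f_j(\check\X)=0\bmod p$ for all $j$, the box constraints $0\le\check y_k\le u_k$, and $0\le g_i(\check\X,\check\Y)\le b_i$ for all $i$. By Lemma \ref{lm-crct} there is a Boolean tuple $(\check\X_\bit,\check\V_{1\bit},\check\U_\bit)$ with $\Pi_2(\check\X_\bit,\cdot)=\check\X$ and $\FS_1(\check\X_\bit,\check\V_{1\bit},\check\U_\bit)=0$; by Lemma \ref{lm-is1} there are Boolean values $\check\V_{2\bit},\check\G_\bit$ (and the Boolean encodings $\check\X_\bit,\check\Y_\bit$ via \bref{eq-xbit} and \bref{eq-ybit}) with $\GS_1(\check\X_\bit,\check\Y_\bit,\check\V_{2\bit},\check\G_\bit)=0$; and by Lemma \ref{lm-gs1} applied to $o$ there is a Boolean $\check\V_{3\bit}$ with $\GS_2(\check\X_\bit,\check\Y_\bit,\check\V_{3\bit})=0$ and, crucially, $\overline o(\check\X_\bit,\check\Y_\bit,\check\V_{3\bit})=o(\check\X,\check\Y)$. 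Collecting these into $\check\W_\bit=(\check\X_\bit,\check\Y_\bit,\check\V_\bit,\check\U_\bit,\check\G_\bit)$ gives a feasible point of \bref{eq-op1} with the same objective value. One subtlety to check here is that the three invocations of Lemma \ref{lm-Q} all use the \emph{same} encoding \bref{eq-xbit}/\bref{eq-ybit} for $\X$ and $\Y$, so the shared variables $\X_\bit,\Y_\bit$ are consistent across $\FS_1,\GS_1,\GS_2$; the auxiliary variables $\V_{1\bit},\V_{2\bit},\V_{3\bit},\U_\bit,\G_\bit$ are disjoint by construction, so there is no clash.

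Conversely, given any feasible $\check\W_\bit$ of \bref{eq-op1}, set $\check\X=\Pi_1(\check\X_\bit)=(\theta_{p-1}(\check\X_1),\ldots,\theta_{p-1}(\check\X_n))$ and $\check y_k=\theta_{u_k}(\check\Y_k)$. Since $\FS_1(\check\W_\bit)=0$, Lemma \ref{lm-crct} gives $f_j(\check\X)=0\bmod p$; since $\GS_1(\check\W_\bit)=0$, Lemma \ref{lm-is1} gives $0\le g_i(\check\X,\check\Y)\le b_i$; the box constraints $0\le\check y_k\le u_k$ hold by Lemma \ref{lm-bb}; and since $\GS_2(\check\W_\bit)=0$, Lemma \ref{lm-gs1} gives $\overline o(\check\W_\bit)=o(\check\X,\check\Y)$. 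Hence $(\check\X,\check\Y)$ is feasible for \bref{eq-op} with the same objective value. Combining the two directions: the image of the feasible set of \bref{eq-op1} under the projection $\check\W_\bit\mapsto(\Pi_1(\check\X_\bit),\theta_\cdot(\check\Y_\bit))$ is exactly the feasible set of \bref{eq-op}, and the objective values are matched pointwise along this correspondence. Therefore $\min_{\W_\bit}\overline o(\W_\bit)$ over the feasible set of \bref{eq-op1} equals $\min_{\X,\Y}o(\X,\Y)$ over the feasible set of \bref{eq-op}, and any minimizer of \bref{eq-op1} projects to a minimizer of \bref{eq-op}. This establishes the claimed equivalence.

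I do not expect a serious obstacle here, since the statement is a bookkeeping amalgamation of lemmas already proved; the only place that requires care is the bookkeeping itself. Specifically, I would be careful to confirm that (i) the surjectivity in Lemmas \ref{lm-crct} and \ref{lm-is1} is what guarantees \emph{every} feasible $(\check\X,\check\Y)$ lifts to a feasible $\check\W_\bit$ (so no minimizer is lost in passing to \bref{eq-op1}), and (ii) the non-injectivity of the $\theta$ encodings is harmless because we only need value-equality of the objectives along the correspondence, not a bijection. A minor notational nuisance is that $\overline o$ is produced by Lemma \ref{lm-gs1} together with the side system $\GS_2=\overline B(\{o\})$; I would state explicitly that ``$\overline o(\W_\bit)=o(\X,\Y)$'' is only asserted \emph{on} the variety $\GS_2=0$, which is exactly the constraint already included in $\CS$, so the equivalence of the two optimization problems is unaffected.
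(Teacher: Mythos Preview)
Your proposal is correct and follows exactly the route the paper takes: the paper's own proof is a one-line citation of Lemmas \ref{lm-Q}, \ref{lm-crct}, \ref{lm-gs1}, and \ref{lm-is1}, and you have simply unpacked that citation into the two-directional feasibility/objective-value argument, including the consistency check on the shared encoding $\X_\bit,\Y_\bit$. There is no substantive difference in strategy.
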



{\bf Step 4}.
The basic idea to search a minimal value of the objective function is as follows.
Since all the variables are bounded, the objective function is also
 bounded, so we may assume  $\alpha\le \overline{o}(\check\Z_\bit) < \mu$ for some
$\alpha,\mu\in \N$.
We divide $[\alpha,\mu)$  into two roughly equal parts: $[\alpha,\alpha+2^\beta)$ and $[\alpha+2^\beta,\mu)$
and solve the following decision problem
\begin{equation}\label{eq-ab1}
\exists \W_\bit(\overline{o}(\W_\bit)\in[\alpha,\alpha+2^\beta)
 \hbox{ and }  (\CS(\W_\bit)=0)).
\end{equation}
Let
\begin{eqnarray}
\delta_{\alpha\beta}(\overline{o})
&=&
\alpha+\sum_{j=0}^{\beta-1}F_j2^j-\overline{o}(\W_\bit)
 \in\Z[\Z_\bit],\label{eq-od}\\
L_{\alpha\beta} &=& \CS\cup\{\delta_{\alpha\beta}(\overline{o})\}
\subset\Z[\Z_\bit],\label{eq-L}
\end{eqnarray}
where $\Z_\bit=\W_\bit\cup \F_\bit=\{\X_\bit,\Y_\bit,\V_\bit,\U_\bit,\G_\bit,\F_\bit\}$ and
$\F_\bit=\{F_0,\ldots,F_{\beta-1}\}$ are Boolean variables.
By Lemma \ref{lm-bb}, we have
\begin{lem}\label{lm-ln01}
Problem \bref{eq-ab1} has a solution $\check{\W}_\bit$ if and only if
$L_{\alpha\beta}=0$ has a solution $\check{\Z}_\bit=(\check{\W}_\bit,\check{\F}_\bit)$.
\end{lem}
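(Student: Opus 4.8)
The plan is to unwind the definitions and reduce the claim to an application of Lemma \ref{lm-bb}. Recall that Problem \bref{eq-ab1} asks for the existence of a Boolean assignment $\check{\W}_\bit$ satisfying $\CS(\check{\W}_\bit)=0$ together with the membership condition $\overline{o}(\check{\W}_\bit)\in[\alpha,\alpha+2^\beta)$, i.e. $0\le \overline{o}(\check{\W}_\bit)-\alpha\le 2^\beta-1$. Meanwhile $L_{\alpha\beta}=0$ is the system $\CS=0$ augmented by the single equation $\delta_{\alpha\beta}(\overline{o})=\alpha+\sum_{j=0}^{\beta-1}F_j2^j-\overline{o}(\W_\bit)=0$, where the $F_j$ are fresh Boolean variables. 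So the two statements concern the same constraint $\CS=0$; the only thing to check is that the membership condition on $\overline{o}-\alpha$ matches solvability of the extra equation for the $F_j$.

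First I would handle the forward direction: given $\check{\W}_\bit$ feasible for \bref{eq-ab1}, set $N=\overline{o}(\check{\W}_\bit)-\alpha$, which by hypothesis is an integer in $\{0,1,\dots,2^\beta-1\}$. Since the binary representation $\sum_{j=0}^{\beta-1}F_j2^j$ attains every integer in exactly this range as the $F_j$ run over $\{0,1\}^\beta$ — this is the easy, classical special case of Lemma \ref{lm-bb} (indeed $\theta_{2^\beta-1}$ is the plain binary expansion on $\beta$ bits) — there is a Boolean $\check{\F}_\bit$ with $\sum_j \check{F}_j 2^j = N$, and then $\check{\Z}_\bit=(\check{\W}_\bit,\check{\F}_\bit)$ satisfies $\delta_{\alpha\beta}(\overline{o})(\check{\Z}_\bit)=0$ and still $\CS(\check{\W}_\bit)=0$, hence $L_{\alpha\beta}=0$. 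Conversely, from a Boolean solution $\check{\Z}_\bit=(\check{\W}_\bit,\check{\F}_\bit)$ of $L_{\alpha\beta}=0$ we read off $\CS(\check{\W}_\bit)=0$ directly, and the vanishing of $\delta_{\alpha\beta}(\overline{o})$ gives $\overline{o}(\check{\W}_\bit)=\alpha+\sum_{j=0}^{\beta-1}\check{F}_j2^j$, whose right-hand side lies in $[\alpha,\alpha+2^\beta)$ because each $\check{F}_j\in\{0,1\}$; thus $\check{\W}_\bit$ is feasible for \bref{eq-ab1}.

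There is no real obstacle here — the statement is essentially bookkeeping — but the one point that deserves care is the range of the binary sum: I must make sure the extra equation introduces no spurious solutions, i.e. that $\sum_{j=0}^{\beta-1}F_j2^j$ ranges over \emph{exactly} $\{0,\dots,2^\beta-1\}$ and nothing larger, so that $\overline{o}(\check{\W}_\bit)$ is forced into $[\alpha,\alpha+2^\beta)$ and not merely bounded below by $\alpha$. This is immediate, but it is the reason one uses the plain binary expansion for this step (rather than $\theta_b$ with a non-power-of-two $b$). I would also remark in passing that $\overline{o}$ takes integer values on Boolean inputs — which follows from its construction in Step 3 via Lemma \ref{lm-gs1} from the integer polynomial $o$ — so that the equation $\delta_{\alpha\beta}(\overline{o})=0$ can indeed be satisfied over $\C$ precisely when the integer $\overline{o}(\check{\W}_\bit)-\alpha$ falls in the stated interval.
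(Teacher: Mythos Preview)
Your proposal is correct and follows essentially the same approach as the paper: the paper simply asserts the lemma as an immediate consequence of Lemma~\ref{lm-bb}, and your argument is a careful unpacking of exactly that reduction (noting that $\sum_{j=0}^{\beta-1}F_j2^j=\theta_{2^\beta-1}(\F_\bit)$ surjects onto $\{0,\dots,2^\beta-1\}$). There is nothing to add.
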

%
If the answer to problem \bref{eq-ab1} is yes, we repeat the procedure for
the new feasible interval $[\alpha,\overline{o}(\check{\Z}_\bit))$.
If the answer is no, we repeat the procedure for
the new feasible interval $[\alpha+2^\beta,\mu)$.
The procedure ends when $\mu=\alpha+1$.

We now give the algorithm to solve problem \bref{eq-op}.
For convenience of later usage, we add a new constraint $0\le o< u$
for a given $u\in\N{>0}$.
\begin{alg}[QFpOpt]\label{alg-opt}
\end{alg}
{\noindent\bf Input:}
Problem \bref{eq-op}, $\epsilon\in(0,1)$, and a $u\in\Z_{>0}$ such that $0\le o< u$.

{\noindent\bf Output:} $\widehat{o}, \check{\X}\in\F_p^n,$ and $\check{\Y}\in\Z^m$
such that $\widehat{o}= o(\check{\X},\check{\Y})$ is the minimal
value of $o$, or ``fail''.

\begin{description}

\item[Step 1:]
Set $\alpha=0,\mu=u$.

\item[Step 2:]
Compute $\CS$ in \bref{eq-oCS}.

\item[Step 3:]
Let $\beta=\lfloor\log_2 (\mu-\alpha)\rfloor-1$
and compute
  $L_{\alpha\beta}\subset\C[\Z_\bit]$ defined in \bref{eq-L}.

\item[Step 4:]
Let $\check{\Z}_\bit=$
{\bf QBoolSol}$(L_{\alpha\beta},\epsilon/\log_{4/3}u)$,
where {\bf QBoolSol} is from Theorem \ref{th-m2}.

\item[Step 5:]
If Algorithm {\bf QBoolSol} returns a solution:
$\check{\Z}_\bit=\{\check{\X}_\bit,\check{\Y}_\bit,\check\V_\bit,\check\U_\bit,
\check{\G}_\bit,\check{\F}_\bit\}$, then
\begin{description}
\item[Step 5.1:]
Compute $\check\X$ and $\check{\Y}$ from $\check\X_\bit$
and $\check{\Y}_\bit$ according to \bref{eq-xbit} and \bref{eq-ybit}, respectively.
\item[Step 5.2:]
If $\check{\F}_\bit=\mathbf 0$, return $\alpha$, $\check\X$ and $\check{\Y}$.
\item[Step 5.3:]
If $\check{\F}_\bit\ne\mathbf 0$, let $\mu=\overline{o}(\check{\Z}_\bit)$ and goto Step 3.
\end{description}

\item[Step 6:]
If  $\check{\Z}_\bit=\emptyset$, then
\begin{description}
\item[Step 6.1:]
If $\mu-\alpha>1$, let $\alpha=\alpha+2^\beta$, and goto Step 3.

\item[Step 6.2:]
If $\mu-\alpha=1$ and $\mu\ne u$, return $\mu$, $\check\X$ and $\check{\Y}$.
\item[Step 6.3:]
If $\mu-\alpha=1$ and $\mu= u$, return ``fail''.

\end{description}
\end{description}

Let $b=\max_{i=1}^s b_i$, $d_f=\max_{i,j}\{2, \deg(f_i,x_j)\}$, $d_g=\max_{i,j}\{2, \deg(g_i,x_j)\}$, $h=\max\{p-1, u_1,\ldots,$  $u_m\}$, and $\GS_o=  \{o, g_1,\ldots,g_s \}$.
Then, we have
\begin{thm}\label{th-opt1}
{Algorithm \ref{alg-opt}} gives a solution to problem \bref{eq-op}
with constraint $0\le o< u$
with success probability $\ge 1-\epsilon$ and in  complexity
$\widetilde O(N_{L_{\alpha\beta}}^{2.5}T_{L_{\alpha\beta}}\kappa^2\log(1/\epsilon) \log u)$,
where
\begin{eqnarray*}
N_{L_{\alpha\beta}} &=& \widetilde{O}(n T_\FS \log d_f\log p
+ (m+n)T_{\GS_o} d_g \log h+\log u),\\
T_{L_{\alpha\beta}}&=&\widetilde{O}(nT_\FS \log d_f \log^2 p +
(m+n)T_{\GS_o}d_g^2\log^2 h+\log u),
\end{eqnarray*}
and $\kappa$ is the maximal condition number of all $L_{\alpha\beta}$ in the algorithm, called the {\em  condition number} of the problem.
\end{thm}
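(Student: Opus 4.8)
The plan is to verify that Algorithm \ref{alg-opt} is correct and then bound its complexity by the cost of the single call to \textbf{QBoolSol} in Step 4, amplified by the number of iterations of the bisection loop. For correctness, I would first invoke the preceding lemmas: Step 1 uses Lemmas \ref{lm-Q} and \ref{lm-crct} to show $\FS_1$ faithfully encodes the equational constraints, Step 2 uses Lemma \ref{lm-is1} so $\GS_1$ encodes the inequality constraints, and Step 3 uses Lemma \ref{lm-gs1} to rewrite $o$ as $\overline o$ together with the auxiliary MQ $\GS_2$; hence by the boxed Lemma, problem \bref{eq-op} is equivalent to the $(0,1)$-programming problem \bref{eq-op1}. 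Next, Lemma \ref{lm-ln01} guarantees that a Boolean solution of $L_{\alpha\beta}$ exists iff there is a feasible point with $\overline o \in [\alpha,\alpha+2^\beta)$. So each iteration correctly decides, via \textbf{QBoolSol}, whether the objective lies in the lower half of the current feasible interval $[\alpha,\mu)$: if yes, we replace $\mu$ by the witnessed value $\overline o(\check\Z_\bit)$ (Step 5.3), and if no, we move to $[\alpha+2^\beta,\mu)$ (Step 6.1). An invariant to maintain is that the true minimum (if it exists and lies in $[0,u)$) always stays in $[\alpha,\mu)$; the loop terminates when $\mu-\alpha=1$, at which point $\alpha$ (or $\mu$) is the minimum, and the $\check\X,\check\Y$ extracted in Step 5.1 via \bref{eq-xbit}, \bref{eq-ybit} are a corresponding minimizer.

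For the iteration count, I would argue that each pass through Step 3 strictly shrinks $\mu-\alpha$ by a constant factor: in the "no" branch $\mu-\alpha$ drops to at most $\mu-\alpha-2^\beta \le \tfrac{3}{4}(\mu-\alpha)$ using $2^\beta > \tfrac14(\mu-\alpha)$, and in the "yes" branch the new interval $[\alpha,\overline o(\check\Z_\bit))$ has length at most $2^\beta \le \tfrac12(\mu-\alpha)$. Hence the number of iterations is $O(\log_{4/3} u) = O(\log u)$, which is why Step 4 uses confidence parameter $\epsilon/\log_{4/3}u$ per call: a union bound over all iterations then yields overall success probability $\ge 1-\epsilon$, since each \textbf{QBoolSol} call succeeds except with probability $\epsilon/\log_{4/3}u$.

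For the complexity, each iteration is dominated by the \textbf{QBoolSol} call on $L_{\alpha\beta}$, which by Theorem \ref{th-m2} costs $\widetilde O(N_{L_{\alpha\beta}}^{2.5}(N_{L_{\alpha\beta}}+T_{L_{\alpha\beta}})\kappa^2\log(1/\epsilon'))$ with $\epsilon' = \epsilon/\log_{4/3}u$; since $N_{L_{\alpha\beta}} \le T_{L_{\alpha\beta}}$ this is $\widetilde O(N_{L_{\alpha\beta}}^{2.5}T_{L_{\alpha\beta}}\kappa^2\log(1/\epsilon))$, and multiplying by the $O(\log u)$ iterations gives the claimed bound. It remains to substitute the parameter estimates for $L_{\alpha\beta} = \CS \cup \{\delta_{\alpha\beta}(\overline o)\}$. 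Here $\CS = \FS_1\cup\GS_1\cup\GS_2$: Lemma \ref{lm-fsb} bounds $\FS_1 = P(Q(\FS))$ by $\widetilde O(nT_\FS\log d_f\log p)$ variables and $\widetilde O(nT_\FS\log d_f\log^2 p)$ sparseness; Lemmas \ref{lem-ineq2} and \ref{lm-gs2} applied to $\GS_o$ bound $\GS_1$ and $\GS_2$ by $\widetilde O((m+n)T_{\GS_o}d_g\log h)$ variables and $\widetilde O((m+n)T_{\GS_o}d_g^2\log^2 h)$ sparseness; and $\delta_{\alpha\beta}(\overline o)$ adds the $\beta = O(\log u)$ Boolean variables $\F_\bit$ and a single extra polynomial of sparseness $O(\log u)$ plus the terms of $\overline o$. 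Summing these yields exactly the stated $N_{L_{\alpha\beta}}$ and $T_{L_{\alpha\beta}}$. The main obstacle I anticipate is bookkeeping: carefully tracking which auxiliary-variable sets ($\V_{1\bit},\V_{2\bit},\V_{3\bit},\U_\bit,\G_\bit,\F_\bit$) come from which reduction and confirming that the $\widetilde O$ estimates combine without hidden blow-up (in particular that $\deg(o)$ is subsumed into $d_g$ via $\GS_o$, and that the bound $0\le o< u$ is what legitimizes treating $o$ as bounded in Step 4). The quantum and HHL-related subtleties are entirely hidden inside Theorem \ref{th-m2}, so nothing quantum needs to be re-proven here.
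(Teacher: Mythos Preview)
Your proposal is correct and follows essentially the same route as the paper's proof: the same invariant argument for correctness, the same $3/4$-shrinkage bound on the interval yielding $O(\log_{4/3} u)$ iterations, the same union-bound/product argument for the success probability, and the same appeal to Lemmas~\ref{lm-fsb}, \ref{lm-gs2}, \ref{lem-ineq2} plus Theorem~\ref{th-m2} for the complexity estimates. One small point to tighten: after Step~5.3 the new $\mu=\overline o(\check\Z_\bit)$ is itself a feasible value, so the invariant should read ``minimum lies in $[\alpha,\mu]$'' (closed on the right), which is exactly why Step~6.2 returns $\mu$ rather than $\alpha$.
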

\begin{proof}
We first prove the termination of the algorithm by showing
that the feasible interval $[\alpha,\mu)$ will decrease strictly after each loop starting from Step 3.
In Step 3, we split $[\alpha,\mu)=[\alpha,\alpha+2^\beta)\cup[\alpha+2^\beta,\mu)$ with $(\mu-\alpha)/4<2^\beta\le(\mu-\alpha)/2$.
In Step 5.3,  we start a new loop for $[\alpha,\mu_1)$, where $\mu_1=\overline{o}(\check{\Z}_\bit)< 2^\beta$. Then after this step,
the feasible interval will decrease by at least $\frac12(\mu-\alpha)$ due to $2^\beta\le(\mu-\alpha)/2$.
In Step 6.1, we start a new loop for $[\alpha+2^\beta,\mu)$.
After this step, the feasible interval will decrease by more than $\frac14(\mu-\alpha)$ due to $(\mu-\alpha)/4<2^\beta$.
In summary, after each loop, the algorithm either terminates
or has a smaller feasible interval which is of at most $3/4$ of the size of the feasible
interval of the previous loop.
So, the algorithm will terminate after at most $\log_{4/3}u$ loops.

We now prove the correctness of the algorithm, which follows from the following claim:
\begin{equation}\label{eq-claim}
\hbox{The minimal value of } o \hbox{ is in } [\alpha,\mu] \hbox{ during the algorithm}
\end{equation}
if the minimal value exists and Algorithm {\bf QBoolSol} in
Step 4 always returns a solution of $L_{\alpha\beta}$ if such a solution exists.
The above claim is obviously true for the initial values given in Step 1.

In Step 5, by Lemma \ref{lm-ln01}, we find a solution $\check{\Z}_\bit$ such that $\overline{o}(\check{\Z}_\bit) \in [\alpha,\alpha+2^\beta)$.
In Step 5.2, the condition $\check{\F}_\bit=\mathbf 0$ means that $\overline{o}(\check{\Z}_\bit) =\alpha$ and
the minimal solution $\overline{o}$ is found by   claim \bref{eq-claim}.
In Step 5.3, the condition $\check{\F}_\bit\ne\mathbf 0$ means that $\overline{o}(\check{\Z}_\bit)\ne\alpha$ and
we have a new $\mu_1=\overline{o}(\check{\Z}_\bit)$.
%
Since $o\in[\alpha,\mu_1]$ has a solution $\check{\Z}_\bit$,
by   claim \bref{eq-claim}, the minimal value of $o$ is in  $[\alpha,\mu_1]$,
and the claim is proved in this case.

In Step 6, {\bf QBoolSol} returns $\emptyset$,  meaning that
$\overline{o}(\check{\Z}_\bit) \in [\alpha,\alpha+2^\beta)$ has no solution and the minimal value of $o$ must be in $[\alpha+2^\beta,\mu)$ if it exists.
So, in Step 6.1, we will  find the minimal value of $o$ in $[\alpha+2^\beta,\mu)$ in the next loop, and claim \bref{eq-claim} is proved in this case.
In Step 6.2, we have  $\mu-\alpha=1$ and $\mu\ne u$.
Since $o\in [\alpha,\alpha+2^\beta)$ has no solution,
by claim \bref{eq-claim}, $\mu=\alpha+1$ must be the
minimal value of $o$.
%
Note that in Step 6, we only update the lower bound $\alpha$.
In Step 5, we only  update the upper bound $\mu$,
and when $\mu$ is updated we have $\mu=\overline{o}(\check{\Z}_\bit) ={o}(\check{\X},\check{\Y})$.
Therefore, $\mu={o}(\check{\X},\check{\Y})$ is always valid, once Step  5 is executed.
The condition $\mu\ne b$ implies that Step 5 has been executed at least one time
and hence $\mu={o}(\check{\X},\check{\Y})$.
In Step 6.3, the conditions $\mu-\alpha=1$ and $\mu= u$
means that Step 5 is never executed and the problem has no solution.

Finally, the solution obtained by {Algorithm \ref{alg-opt}} is correct if and only if each Step 4 is correct, that is, if $L_{\alpha\beta}$ does have solutions, then {\bf QBoolSol} will return a solution.
Since Step 4 will execute at most $\log_{4/3}u$ times,
by Theorem \ref{th-m2},
the probability for the algorithm to be correct is at least $(1-\epsilon/\log_{4/3}u)^{\log_{4/3}u}>1-\epsilon$.

We now analyse the complexity.
Note that $2$ is added to $d_f$ to make sure $\log d_f\ne0$.
By Lemma \ref{lm-fsb},
$\FS_1$ is of total sparseness $O(nT_\FS \log d_f \log^2 p)$ and has  $O(n T_\FS \log d_f \log p)$ indeterminates.

By Lemma \ref{lem-ineq2},
$\GS_1=I(\IS)$
is of total sparseness  $O((m+n)T_\GS d_g^2\log d_g\log^2 h)$
and has $O((m+n)d_g T_\GS \log d_g \log h)$
indeterminates.
Also,
$\GS_2=\overline{B}(o)$ is of
total sparseness  $O((m+n)T_o d_g^2 \log d_g \log^2 h)$
and has $O((m+n)d_g T_o \log d_g \log h)$ indeterminates.

$\delta_{\alpha\beta}(\overline{o})$ is of total sparseness
$O(\log u +T_{\overline{o}})
= O(\log u + T_o d_g^2\log^2 h)$
and has
$O(\log u + (m+n)d_g T_o \log d_g \log h)$
indeterminates, since $2^\beta-\alpha < u$.

Then,
$L_{\alpha\beta} = \CS\cup\{\delta_{\alpha\beta}(\overline{o})\}= \FS_1\cup\GS_1\cup\GS_2\cup\{\delta_{\alpha\beta}(\overline{o})\}$ is of total sparseness
$T_{L_{\alpha\beta}}=T_{\FS_1} + T_{\GS_1} +T_{\GS_2} + T_{\delta_{\alpha\beta}(\overline{o})}=
\widetilde{O}(nT_\FS \log d_f\log^2 p +
(m+n)T_{\GS_o} d_g^2\log^2 h+\log u)$
and has
$N_{L_{\alpha\beta}}=\widetilde{O}(n T_\FS \log d_f\log p
+(m+n)d_g T_{\GS_o}\log h+\log u)$
indeterminates.

In Step 4, by {Theorem \ref{th-m2}}, since $N_{L_{\alpha\beta}} < T_{L_{\alpha\beta}}$, we can find a Boolean solution of $L_{\alpha\beta}$ in time
$\widetilde{O}(N_{L_{\alpha\beta}}^{2.5}(N_{L_{\alpha\beta}}+
T_{L_{\alpha\beta}})\kappa^2\log(\epsilon/\log u))
= \widetilde{O}(N_{L_{\alpha\beta}}^{2.5}T_{L_{\alpha\beta}}
\kappa^2\log(\epsilon/\log u)).$
It is clear that in each loop, the complexity of the algorithm is dominated by Step 4.
Since we have at most $\log_{4/3}u$ loops,
 the complexity for {Algorithm \ref{alg-opt}} is
$\widetilde O(N_{L_{\alpha\beta}}^{2.5}T_{L_{\alpha\beta}}\kappa^2 \log((\log_{4/3}u)/\epsilon) \log_{4/3}u)=
\widetilde O(N_{L_{\alpha\beta}}^{2.5}T_{L_{\alpha\beta}}
\kappa^2\log(1/\epsilon)$ $ \log u)$.
\end{proof}

We now show how to solve the original problem \bref{eq-op}.
\begin{cor}
{Algorithm \ref{alg-opt}} gives a solution to problem \bref{eq-op}
with the same probability and complexity for $u=2\#(o) h_o h^{d_o}+1$,
where $h_o$ is the height of the coefficients of $o$, $h=\max\{p-1, u_1,\ldots,u_m\}$, and $d_o=\deg(o)$.
\end{cor}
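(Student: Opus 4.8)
The plan is to produce the stated $u$ as a legitimate bound for a harmless shift of the objective function, after which the corollary follows directly from Theorem \ref{th-opt1}. First I would bound $o$ on the feasible set of Problem \bref{eq-op}: every variable there is a non-negative integer bounded by $h=\max\{p-1,u_1,\ldots,u_m\}$, since $x_i\in\F_p=\{0,\ldots,p-1\}$ and $0\le y_k\le u_k$. Consequently each of the $\#(o)$ monomials of $o$ has degree at most $d_o$ and hence evaluates to an integer of absolute value at most $h^{d_o}$, and as every coefficient of $o$ has absolute value at most $h_o$ we get $|o(\check\X,\check\Y)|\le \#(o)\,h_o\,h^{d_o}=:C$ at every feasible pair $(\check\X,\check\Y)\in\F_p^n\times\Z^m$.

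Next I would remove the non-negativity assumption by shifting. Put $o^\ast=o+C\in\Z[\X,\Y]$; the minimizers of $o^\ast$ over the constraints of Problem \bref{eq-op} coincide with those of $o$, and $\min o=\min o^\ast-C$. By the bound above, $0\le o^\ast(\check\X,\check\Y)\le 2C<2C+1=u$ holds at every feasible point, so adjoining the extra constraint $0\le o^\ast<u$ required by {Algorithm \ref{alg-opt}} excludes no feasible solution. Hence solving Problem \bref{eq-op} is equivalent to running {Algorithm \ref{alg-opt}} on the problem with objective $o^\ast$, this added constraint, and $u=2\#(o)\,h_o\,h^{d_o}+1$, and then subtracting $C$ from the optimal value returned; the returned $\check\X,\check\Y$ are the same.

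Finally I would verify that no parameter governing the complexity grows: passing from $o$ to $o^\ast$ only changes the constant term, so $\deg(o^\ast)=d_o$, $\#(o^\ast)\le\#(o)+1$, and $\GS_o$ is replaced by a set of the same maximal degree whose total sparseness increases by at most one. Thus $n,m,T_\FS,d_f,T_{\GS_o},d_g,h,p$ in Theorem \ref{th-opt1} are unaffected, and the only new quantity is $\log u=O(\log\#(o)+\log h_o+d_o\log h)$, which is exactly the $\log u$ already recorded in $N_{L_{\alpha\beta}}$, $T_{L_{\alpha\beta}}$ and in the overall complexity there. Applying Theorem \ref{th-opt1} to $o^\ast$ with this $u$ then gives a correct optimal solution with success probability at least $1-\epsilon$ and the claimed complexity. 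The one genuine point is that $o$ need not be non-negative on the feasible set, which is what forces the shift by $C$ and the factor $2$ in $u$; the rest is bookkeeping, and I anticipate no substantial obstacle.
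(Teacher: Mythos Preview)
Your proposal is correct and follows essentially the same approach as the paper: bound $|o|\le \#(o)\,h_o\,h^{d_o}$, shift the objective by this constant so that the new objective lies in $[0,u)$, and invoke Theorem \ref{th-opt1}. You are in fact more careful than the paper in checking that the shift does not affect the complexity parameters.
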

\begin{proof}
It is easy to see that $|o| \le \#(o) h_o h^{d_o}$,
so $0\le \widetilde o  < u$ for the new objective function
$\widetilde o =o+ \#(o) h_o h^{d_o}$. Then Theorem \ref{th-opt1}
can be used to the new optimization problem.
\end{proof}

\begin{rem}
Note that the upper bound $u=2\#(o) h_o h^{d_o}+1$ for the objective function is quite large. An alternative way is to use {\bf Algorithm QBoolSol} in Theorem \ref{th-m2} to find a solution
$\check{\X}_\bit,\check{\Y}_\bit$
for $\FS_1\cup\GS_1\subset\C[\X_\bit,\Y_\bit,\overline\V_{1\bit},
\overline\V_{2\bit},\U_\bit,\G_\bit]$
and set $u=2o(\check{\X},\check{\Y})+1$.
Then for the new objective function $\widetilde o =o  + o(\check{\X},\check{\Y})$,
we can use the constraint $0\le \widetilde o  < u$ to find a solution to problem \bref{eq-op}.
This does not change the complexity of the algorithm.
\end{rem}

\subsection{Applications to linear $(0,1)$-programming and QUBO}
In this section, we use two $(0,1)$-programming problems  to illustrate Algorithm \ref{alg-opt}.
QUBO means {\em quadratic unconstrained binary optimization},
which is the mathematical problem that can be solved by the  D-Wave

The linear $(0,1)$-programming is one of Karp's 21 NP-complete problems~\cite{Karp}
which covers lots of fundamental computational problems, such as
the subset sum problem,
the assignment problem,
the traveling salesperson problem,
the knapsack problem, etc.
For more information about this problem, please refer to~\cite{lip-survey}.
The linear $(0,1)$-programming can be stated as follows~\cite{Balas}
\begin{eqnarray}\label{eq-01}
&&\min_{\Y_\bit\in\{0,1\}^m} o(\Y_\bit) = \sum_{j=1}^m c_j y_j\hbox{ subject to }
 \quad \sum_{j=1}^m a_{ij} y_j \le h_i , i=1,\ldots, s
\end{eqnarray}
where  $\Y_\bit=(y_1,\ldots,y_m)$ and $a_{ij}, c_j,h_i\in \Z$ for any $i,j$.
We reduce problem \bref{eq-01} to the standard form \bref{eq-op}.
Let $e_i = \sum_{j=1}^m |a_{ij}| \in \Z_{\ge0},\  1\le i\le s$
and $g_i = \sum_{j=1}^m a_{ij}y_j + e_i$
and  $b_i=h_i+e_i, 1\le i\le s$.
Let $u= 2\sum_{i=1}^m |c_j|+1 \in\N$.
Then, problem \bref{eq-01} is equivalent to
\begin{eqnarray}\label{eq-012}
&&\min_{\Y_\bit\in\{0,1\}_2^m} o_B(\Y_\bit) = \sum_{j=1}^m c_j y_j + (u-1)/2\hbox{ subject to }\\
&&\quad 0\le o_B(\Y_\bit)< u; 0\le g_i \le b_i, i=1,\ldots, s.\nonumber
\end{eqnarray}

%
So we can use Algorithm \ref{alg-opt} to solve problem \bref{eq-012}.
%
%
Let $\GS=\{g_1,\ldots,g_s\}$.
Since $g_i$ are linear, we do not need to compute $Q(g_i)$
and  $\overline{\V}_\bit=\emptyset$ (see \bref{eq-vbit} for definition)
and $\overline{B}(\GS)=\emptyset$ (see \bref{eq-ybit1} for definition).
Since $y_j$ are Boolean variables, we do not need to use
\bref{eq-ybit} to expand them and hence $\overline{g}_i = g_i$.
So, \bref{eq-yt2} becomes
\begin{eqnarray*}\label{eq-yt01}
\delta(g_i) &=&\theta_{b_i}(\mathbb G_i)= \sum_{k=0}^{\lfloor\log_2b_i\rfloor-1}G_{i,k} 2^k+(b_i-2^{\lfloor\log_2b_i\rfloor}+1)G_{i,\lfloor\log_2b_i\rfloor} -{g}_i  
\\
I(\IS) &=& \{\delta(g_1),\ldots,\delta(g_s)\}
\subset\Z[\Y_\bit,\G_\bit],\nonumber
\end{eqnarray*}
where $\G_\bit=\{G_{ikl}\}$ are Boolean variables
and $\theta_{b_i}(\mathbb G_i)$ is defined in \bref{eq-theta}.
Equation \bref{eq-od} becomes
\begin{eqnarray}\label{eq-od01}
\delta_{\alpha\beta}(o)
&=&
\alpha+\sum_{j=0}^{\beta-1}F_j2^j-o_B  \in\Z[\Y_\bit,  \F_\bit],\\
L_{\alpha\beta} &=& I(\IS) \cup\{\delta_{\alpha\beta}(o)\}
\subset\Z[\Y_\bit, \G_\bit, \F_\bit],\nonumber
\end{eqnarray}
where $\F_\bit=\{F_1,\ldots,F_{\beta-1}\}$ are Boolean variables.

\begin{prop}\label{lm-01in}
We can use {Algorithm \ref{alg-opt}} to solve problem \bref{eq-012}
with probability $\ge 1-\epsilon$ and in time
$\widetilde O(s(m^{2.5}+s^{2.5}\log^{2.5} h)(m+\log h)\kappa^2\log(1/\epsilon) \log u)$
where
$u= 2\sum_{i=1}^m |c_j|+1$, $b=\max_{i=1}^s b_i$,  $h = \max\{u, b\}$,
and $\kappa$ is the maximal condition number of $L_{\alpha\beta}$.
\end{prop}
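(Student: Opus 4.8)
The plan is to recognize that \bref{eq-012} is already an instance of the general optimization problem \bref{eq-op} --- one with no equational constraints $f_j=0\bmod p$, no $\F_p$-valued variables, and with the objective $o_B$ and all constraints $g_i$ linear in the Boolean variables $\Y_\bit$ --- so that {Algorithm \ref{alg-opt}} applies verbatim and {Theorem \ref{th-opt1}} already guarantees correctness: the algorithm returns a minimizer with probability at least $1-\epsilon$ and in complexity $\widetilde O(N_{L_{\alpha\beta}}^{2.5}T_{L_{\alpha\beta}}\kappa^2\log(1/\epsilon)\log u)$. The only remaining task is to bound $N_{L_{\alpha\beta}}$ and $T_{L_{\alpha\beta}}$ for this particular instance, which is far smaller than the generic estimate obtained in the proof of {Theorem \ref{th-opt1}} because almost all of the intermediate reductions degenerate.

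First I would record which steps collapse. Since there are no equations mod $p$ and no $\F_p$-variables, $\FS_1=\emptyset$ and $\X_\bit=\emptyset$. Since the $g_i$ are linear, no $Q(g_i)$ is needed, so $\overline{\V}_\bit=\emptyset$ and $\overline{B}(\GS)=\emptyset$ (cf.\ \bref{eq-vbit}); since the $y_j$ are already Boolean, the expansion \bref{eq-ybit} is never applied, so $\overline{g}_i=g_i$ and $\GS_2=\overline{B}(\{o\})=\emptyset$, i.e.\ $\overline{o}=o_B$. Hence $\CS=\GS_1=I(\IS)$ reduces to the $s$ polynomials $\delta(g_i)=\theta_{b_i}(\mathbb G_i)-g_i$ (compare {Corollary \ref{cor-lini}} with $n=0$). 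Counting directly: each $g_i=\sum_j a_{ij}y_j+e_i$ has at most $m+1$ terms, and each $\theta_{b_i}(\mathbb G_i)$ has $\lfloor\log_2 b_i\rfloor+1=O(\log h)$ terms while introducing $O(\log h)$ fresh Boolean variables; so $I(\IS)$ has $N_{I(\IS)}=O(m+s\log h)$ variables and $T_{I(\IS)}=O(s(m+\log h))$ terms. Passing to $L_{\alpha\beta}=I(\IS)\cup\{\delta_{\alpha\beta}(o)\}$ of \bref{eq-od01}, the bisection polynomial $\alpha+\sum_{j=0}^{\beta-1}F_j2^j-o_B$ contributes $\beta=O(\log u)=O(\log h)$ new Boolean variables $\F_\bit$ and $O(m+\log h)$ terms, both dominated by the above; therefore $N_{L_{\alpha\beta}}=O(m+s\log h)$ and $T_{L_{\alpha\beta}}=O(s(m+\log h))$, and in particular $N_{L_{\alpha\beta}}\le T_{L_{\alpha\beta}}$.

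Finally I would substitute into the bound of {Theorem \ref{th-opt1}}. Using $(a+b)^{2.5}=O(a^{2.5}+b^{2.5})$ gives $N_{L_{\alpha\beta}}^{2.5}=O(m^{2.5}+s^{2.5}\log^{2.5}h)$, whence the runtime is $\widetilde O\big((m^{2.5}+s^{2.5}\log^{2.5}h)\cdot s(m+\log h)\cdot\kappa^2\log(1/\epsilon)\log u\big)$, which is exactly the claimed bound, with the success probability $\ge 1-\epsilon$ inherited from {Theorem \ref{th-opt1}}. There is no genuine obstacle here; the only point requiring care is the bookkeeping --- confirming that in this instance the reduction really produces the concise system $L_{\alpha\beta}$ of \bref{eq-od01} with no hidden auxiliary variables or equations, and that each $\theta_{b_i}$ and the bisection term contribute only $O(\log h)$ to both the variable count and the sparseness.
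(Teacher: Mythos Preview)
Your proposal is correct and follows essentially the same approach as the paper: both count the Boolean variables $\Y_\bit,\G_\bit,\F_\bit$ and the terms in $L_{\alpha\beta}$ directly to get $N_{L_{\alpha\beta}}=O(m+s\log h)$ and $T_{L_{\alpha\beta}}=O(s(m+\log h))$, then plug into the $\widetilde O(N^{2.5}T\kappa^2\log(1/\epsilon)\log u)$ bound. The only cosmetic difference is that the paper places the ``which reductions collapse'' discussion in the text preceding the proposition and then cites Theorem~\ref{th-m2} (with the loop factor $\log u$ tacitly carried along), whereas you fold that discussion into the proof and invoke Theorem~\ref{th-opt1} explicitly; your packaging is arguably cleaner.
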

\begin{proof}
Since $\#\Y_\bit = m$, $\#\G_\bit = s\log b$, and $\#\F_\bit = \log u$,
$L_{\alpha\beta}$ has  $m+s\log b + \log u$ Boolean variables and
total sparseness $s(m+\log b)+ m +1 +\log u$.
Since $u,b \le h$,
$L_{\alpha\beta}$ has  $N_{L_{\alpha\beta}}=O(m+s\log h)$ Boolean variables and
total sparseness $T_{L_{\alpha\beta}}=O(s(m+\log h))$.
By Theorem~\ref{th-m2}, the complexity is
$\widetilde O((m+s\log h)^{2.5}(s(m+\log h))\kappa^2\log(1/\epsilon)\log u$
 $=\widetilde O(s(m^{2.5}+s^{2.5}\log^{2.5} h)(m+\log h)\kappa^2\log(1/\epsilon) \log u)$.
\end{proof}
%

In the rest of this section, we consider the QUBO problem. The QUBO problem is to find an $\Y_\bit = (y_1,\ldots,y_m)^T\in \{0,1\}^m$ that minimizes $\Y_\bit^T Q \Y_\bit $ for an upper-triangular matrix $Q = (Q_{i,j})$ with $Q_{i,j}\in \Z$,
which can be written as the following $(0,1)$-programming problem:
%
%
\begin{eqnarray}\label{eq-dwave}
\min_{\Y\in\{0,1\}^m} o_Q(\Y_\bit) = \Y_\bit^T Q \Y_\bit
\end{eqnarray}
In order to solve this problem, we need to give the lower and upper bounds for the objective function.
Let $Q_{\max} = \max\limits_{i,j} |Q_{i,j}|$.
Since $y_i\in \{0,1\}, 1\le i\le m$, we have $|o(\Y)| \le m^2 Q_{\max} $.

Problem \bref{eq-dwave} can be converted into the standard form
with the new objective function $\widetilde{o}_Q = o_Q + m^2 Q_{\max} $
and $u= 2m^2 Q_{\max} +1$.
Then, we can use {Algorithm \ref{alg-opt}} to solve problem \bref{eq-dwave}.
Let
$$\delta_{\alpha\beta}(o) =
\alpha+\sum_{j=0}^{\beta-1}F_j2^j-\widetilde{o}_Q  \in\C[\Y_\bit, \F_\bit],$$
where $\F_\bit = (F_0,\ldots, F_{\beta-1})\in \F_2^{\beta}$
and $L_{\alpha\beta} = \{\delta_{\alpha\beta}(o)\}$.
%
%
We have
\begin{prop}\label{prop-dwave}
We can use {Algorithm \ref{alg-opt}} to solve problem \bref{eq-dwave}
with probability $\ge 1-\epsilon$ and in time
$\widetilde O(m^{2.5}+\log^{2.5} Q_{\max} )(m^2+\log Q_{\max})\log Q_{\max}\kappa^2\log(1/\epsilon) )$
where $\kappa$ is the maximal condition number of $L_{\alpha\beta}$
and $Q_{\max} = \max\limits_{i,j} |Q_{i,j}|$.
\end{prop}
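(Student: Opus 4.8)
The plan is to mirror the proof of Proposition \ref{lm-01in}, specializing Theorem \ref{th-opt1} (equivalently, Algorithm \ref{alg-opt} together with Theorem \ref{th-m2}) to the degenerate case in which \eqref{eq-dwave} has no equational constraints $f_j$, no inequality constraints $g_i$, and all variables $y_i$ are already Boolean. First I would record that, after the shift $\widetilde o_Q = o_Q + m^2 Q_{\max}$ and the choice $u = 2m^2 Q_{\max}+1$, we indeed have $0 \le \widetilde o_Q < u$ on $\{0,1\}^m$, since $|o_Q(\Y_\bit)| \le m^2 Q_{\max}$; this is exactly what makes the bisection search of Algorithm \ref{alg-opt} well defined. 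Because there are no $f_j$'s we get $\FS_1 = \emptyset$, because there are no $g_i$'s we get $\GS_1 = \emptyset$, and because $\widetilde o_Q$ is already a quadratic Boolean polynomial the reductions of Lemma \ref{lm-Q} and the $\theta$-expansion \eqref{eq-ybit} are never triggered, so $\GS_2 = \emptyset$ and $\overline o = \widetilde o_Q$. Hence $\CS = \emptyset$ and $L_{\alpha\beta} = \{\delta_{\alpha\beta}(o)\} = \{\alpha + \sum_{j=0}^{\beta-1} F_j 2^j - \widetilde o_Q\} \subset \C[\Y_\bit, \F_\bit]$, precisely the system displayed just before the statement.

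Next I would count the parameters of $L_{\alpha\beta}$. Its Boolean variables are $\Y_\bit$ with $\#\Y_\bit = m$ and $\F_\bit = \{F_0,\dots,F_{\beta-1}\}$ with $\beta \le \lfloor \log_2 u\rfloor = O(\log m + \log Q_{\max})$, so $N_{L_{\alpha\beta}} = O(m + \log Q_{\max})$. The single polynomial $\delta_{\alpha\beta}(o)$ consists of the $O(m^2)$ monomials of $\Y_\bit^{T} Q \Y_\bit$, the $\beta$ monomials $F_j 2^j$, and the constant term, so $T_{L_{\alpha\beta}} = O(m^2 + \log Q_{\max})$; in particular $N_{L_{\alpha\beta}} < T_{L_{\alpha\beta}}$. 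Step 4 of Algorithm \ref{alg-opt} calls {\bf QBoolSol}$(L_{\alpha\beta}, \epsilon/\log_{4/3} u)$, which by Theorem \ref{th-m2} runs in time $\widetilde O(N_{L_{\alpha\beta}}^{2.5}(N_{L_{\alpha\beta}} + T_{L_{\alpha\beta}})\kappa^2 \log(1/\epsilon)) = \widetilde O((m + \log Q_{\max})^{2.5}(m^2 + \log Q_{\max})\kappa^2 \log(1/\epsilon))$; using $(a+b)^{2.5} = O(a^{2.5} + b^{2.5})$ turns the first factor into $m^{2.5} + \log^{2.5} Q_{\max}$. By the termination and probability analysis already carried out in the proof of Theorem \ref{th-opt1}, the loop beginning at Step 3 executes at most $\log_{4/3} u = \widetilde O(\log Q_{\max})$ times, contributing the outer $\log Q_{\max}$ factor, and the overall success probability is at least $(1 - \epsilon/\log_{4/3} u)^{\log_{4/3} u} > 1 - \epsilon$. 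Multiplying these together yields the claimed bound.

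I do not expect a genuine obstacle here, since the statement is a direct specialization of Theorem \ref{th-opt1}; the only points that need care are (i) verifying that the shifted QUBO objective really lies in $[0, u)$, so that the bisection scheme is applicable, and (ii) confirming that none of the reductions of Sections 2--4 are invoked — no MQ reduction, no finite-field bit-expansion, no inequality $\theta$-gadget — so that $L_{\alpha\beta}$ is the single polynomial written above rather than a larger system. Once these are pinned down, the parameter estimates and the appeal to Theorem \ref{th-m2} are routine.
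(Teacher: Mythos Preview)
Your proposal is correct and follows essentially the same approach as the paper: both observe that since $o_Q$ is already a quadratic Boolean polynomial, all the reductions collapse and $L_{\alpha\beta}=\{\delta_{\alpha\beta}(o)\}$; both then compute $N_{L_{\alpha\beta}}=\widetilde O(m+\log Q_{\max})$, $T_{L_{\alpha\beta}}=\widetilde O(m^2+\log Q_{\max})$, and plug into Theorem~\ref{th-opt1} (equivalently, Theorem~\ref{th-m2} times the $\widetilde O(\log u)=\widetilde O(\log Q_{\max})$ loop count) to obtain the stated bound. Your write-up is simply more explicit about why $\FS_1,\GS_1,\GS_2$ are empty and about the probability analysis, but there is no substantive difference.
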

\begin{proof}
%
Since $o_Q$ is quadratic and the variables are Boolean, we can solve
$L_{\alpha\beta} = \{\delta_{\alpha\beta}(o)\}$ directly with Theorem \ref{th-m2}.
Using the notations in Theorem \ref{th-opt1}, we have $N_{L_{\alpha\beta}}=m+\log(m^2 Q_{\max})
= \widetilde O(m+\log Q_{\max})$,
$T_{L_{\alpha\beta}}=\widetilde O(m^2+\log Q_{\max})$, $u = 2m^2 Q_{\max}+1$.
By Theorem \ref{th-opt1}, the complexity  is
$\widetilde O((m+\log Q_{\max} )^{2.5}(m^2+\log Q_{\max})\kappa^2\log(1/\epsilon) (\log m+\log Q_{\max}))$
$=\widetilde O(m^{2.5}+\log^{2.5} Q_{\max} )(m^2+\log Q_{\max})$ $\log Q_{\max}\kappa^2\log(1/\epsilon) )$.
\end{proof}

\section{Polynomial system with noise}\label{sec-pswn}
In this section, we consider the 
{\em  polynomial systems with noise problem (PSWN)},
which is an optimization problem over finite fields
and has important applications in cryptography \cite{alb1,HL}.

\subsection{Polynomial system with noise}
\label{sec-n1}

\begin{defn}\label{lswn}
Let $p$ be a prime. Given a polynomial system $\FS=\{f_1,\ldots,f_r\}\subset\F_p[\X]$, the PSWN is to find an $\X=(x_1,\ldots,x_n)^\tau\in\F_p^n$ such that
$\FS=\mathbf e$
for the ``smallest" error-vector $\mathbf e=(e_1,\ldots,e_r)^\tau\in\F_p^r$.
\end{defn}
In most cases, the Hamming weight $\|\mathbf e\|_H$ of $\mathbf e$ is used to measure the ``smallness'' and it is assumed that $r\gg n$,
that is, we minimize the number of non-zero components of $\mathbf e$
or satisfy the maximal number of equations of $\FS=0$.
Therefore,  PSWN is also called MAX-POSSO.
We first give the following representation for $\|\mathbf e\|_H$.
\begin{lem}\label{lm-pswn1}
Let $\mathbf e=(e_1,\ldots,e_r)^\tau\in\F_p^n$ and $H_j=e_j^{p-1}$ in $\F_p$. Then $H_j$ is Boolean and $\|\mathbf e\|_H=\sum_{j=1}^{m}H_j$ when the summation is over {$\C$}.
\end{lem}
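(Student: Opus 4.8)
The plan is to prove the statement in two parts: first that $H_j := e_j^{p-1}$ is a Boolean value in $\F_p$ for every $j$, and second that with this definition $\sum_{j=1}^r H_j$ (evaluated over $\C$, i.e.\ as an ordinary integer sum of the values $0$ or $1$) equals the Hamming weight $\|\mathbf e\|_H$.

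For the first part, I would invoke Fermat's little theorem in $\F_p$: for any $a\in\F_p$, either $a=0$, in which case $a^{p-1}=0$, or $a\neq 0$, in which case $a^{p-1}=1$ since the multiplicative group $\F_p^\times$ has order $p-1$. Hence $e_j^{p-1}\in\{0,1\}$, which is precisely the statement that $H_j$ satisfies $H_j^2-H_j=0$, i.e.\ $H_j$ is Boolean. (Note the minor typo in the statement: $\mathbf e\in\F_p^r$, not $\F_p^n$, and the sum should run to $r$, not $m$; I would silently use the correct ranges.)

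For the second part, observe that by the case analysis just made, $H_j=e_j^{p-1}=1$ exactly when $e_j\neq 0$ and $H_j=0$ exactly when $e_j=0$. Therefore $H_j$ is the indicator of the event ``$e_j$ is a nonzero component of $\mathbf e$.'' Summing these indicators over $j=1,\ldots,r$ as integers (which is what ``the summation is over $\C$'' signals — we lift the Boolean values $0,1\in\F_p$ to $0,1\in\C$ and add there, rather than reducing mod $p$) counts the number of nonzero components of $\mathbf e$, which is by definition $\|\mathbf e\|_H$. This gives $\|\mathbf e\|_H=\sum_{j=1}^r H_j$ over $\C$.

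There is no real obstacle here; the only subtlety worth flagging explicitly in the writeup is the distinction between evaluating the sum in $\F_p$ versus in $\C$ — over $\F_p$ the sum would be $\|\mathbf e\|_H \bmod p$, which is why the lemma insists the summation be taken over $\C$ so that the count is exact. This is exactly the device needed later to turn ``minimize the Hamming weight'' into a polynomial-over-$\C$ minimization that Algorithm \ref{alg-opt} can handle, so I would state it cleanly. The proof is a two-line application of Fermat's little theorem plus the definition of Hamming weight.
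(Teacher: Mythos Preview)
Your proposal is correct and follows essentially the same approach as the paper's own proof: both use Fermat's little theorem (the paper implicitly, you explicitly) to show $H_j=e_j^{p-1}\in\{0,1\}$ with $H_j=1$ iff $e_j\neq 0$, and then conclude that the integer sum of these indicators equals $\|\mathbf e\|_H$. Your version is slightly more detailed, and your observation about the typos ($\F_p^r$ and the summation index $r$) is accurate.
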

\begin{proof}
$e_j\in\F_p$ implies $H_j=e_j^{p-1}$ is either $0$ or $1$ in $\F_p$,
and $H_j=1$ if and only if $e_j\ne0$.
Then, $H_j$  is a Boolean variable. Thus, we have $\sum_{j=1}^{m}H_j=\|\mathbf e\|_H$ when the summation is over $\C$.
\end{proof}

Let
\begin{equation}\label{eq-fs0}
E(\FS) =(\FS-\mathbf e)\cup\{H_j-e_j^{p-1}\,|\,j=1,\ldots,r\}
\subset\F_p[\X,\ES,\HS_\bit]
\end{equation}
where  $\HS_\bit=\{H_1,\ldots,H_r\}$ are Boolean variables
and $\ES=\{e_1,\ldots,e_r\}$ are variables over $\F_p$.
By Lemma \ref{lm-pswn1}, PSWN can be formulated as the following
optimization problem over finite fields:
\begin{eqnarray}\label{eq-pswn}
\min_{\X\in\F_p^n} o(\X)=\sum_{j=1}^rH_j&&\hbox{ subject to }
  0\le o(\X)\le r;   E(\FS)=0 \mod p
\end{eqnarray}
which can be solved by Algorithm \ref{alg-opt}.

Due to the special structure of $E(\FS)$, we can achieve better complexities
than that given in Theorem \ref{th-opt1}.
Following \bref{eq-L},  the equation set $L_{\alpha\beta}$  for  PSWN is
\begin{eqnarray}\label{eq-lab}
\delta_{\alpha\beta}(o)
&=&
\alpha+\sum_{j=0}^{\beta-1}F_j2^j- \sum_{j=1}^rH_j \in\C[\HS_\bit,\F_\bit],\cr
L_{\alpha\beta}(\FS)&=&P(Q(E(\FS)))\cup \{\delta_{\alpha\beta}(o)\}\subset\C[\X_\bit,\mathbf \ES_\bit,\HS_\bit,\mathbb F_\bit,\V_\bit,\U_\bit],
\end{eqnarray}
where $\F_\bit=\{F_1,\ldots,F_{\beta-1}\}$ are Boolean variables.
We have
\begin{prop}\label{th-n11}
There is a quantum algorithm to solve PSWN  in time
$\widetilde O(
n^{3.5}T_\FS^{3.5}\log^{8} p\kappa^2\log1/\epsilon)
$
and with probability $\ge 1-\epsilon$, where $T_\FS$ is the total sparseness of $\FS$, and $\kappa$ is the extended condition number of $\FS$.
\end{prop}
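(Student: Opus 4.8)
The plan is to run Algorithm \ref{alg-opt} on the reformulation \bref{eq-pswn} of PSWN --- which is an instance of problem \bref{eq-op}, with $\X\cup\ES\cup\HS_\bit$ playing the role of the finite-field variables and $0\le o\le r$ the single inequality constraint --- using the explicit bound $u=r+1$ (legitimate since $o(\X)=\sum_{j=1}^r H_j\in[0,r]$), and then to show that for this instance the generic estimate of Theorem \ref{th-opt1} collapses to the stated bound. Correctness (the returned $\X$ minimizes $\|\mathbf e\|_H$) and the success probability $\ge 1-\epsilon$ I would inherit verbatim from the proof of Theorem \ref{th-opt1}: the feasible interval $[\alpha,\mu)$ shrinks by a constant factor in every loop, so there are at most $\log_{4/3}u=O(\log r)$ loops; each call of {\bf QBoolSol} carries failure budget $\epsilon/\log_{4/3}u$; and a union bound gives overall probability $>(1-\epsilon/\log_{4/3}u)^{\log_{4/3}u}>1-\epsilon$. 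As $u=r+1\le T_\FS+1$ we have $\log_{4/3}u=\widetilde O(1)$, so everything reduces to bounding the cost of one call {\bf QBoolSol}$(L_{\alpha\beta}(\FS),\epsilon/\log_{4/3}u)$ for the system $L_{\alpha\beta}(\FS)$ of \bref{eq-lab}, $\kappa$ being the maximum condition number of the $L_{\alpha\beta}(\FS)$ that arise.

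The technical core is to bound $N_{L_{\alpha\beta}(\FS)}$ and $T_{L_{\alpha\beta}(\FS)}$. As in the discussion preceding Corollary \ref{cor-fsc}, I would first reduce each $f_i$ modulo $x_i^p-x_i$ (which does not increase $T_\FS$) so that $\deg_{x_i}f_j<p$, and drop variables in no term so that $n\le T_\FS$. The system $E(\FS)$ of \bref{eq-fs0} has exactly $2r$ members --- the $r$ shifted relations $f_i-e_i$ and the $r$ definitions $H_j-e_j^{p-1}$ --- hence $T_{E(\FS)}=T_\FS+3r=O(T_\FS)$ on $n+2r$ variables. The delicate point is that the variables $e_j$ have degree $p-1$, but only through the single monomials $e_j^{p-1}$; counting $\sum_{\X^\alpha}(L_\alpha-2)$ directly over the monomials of $E(\FS)$ --- instead of using the crude $O(T_{E(\FS)}D_{E(\FS)})$ bound of Lemma \ref{lm-Q}, which would lose a factor of order $r/n$ --- shows that $Q(E(\FS))$ carries only $O((n+r)\log p)$ auxiliary variables $u_{\cdot k}$ and $O(T_\FS n\log p)$ auxiliary variables $v$, so $N_{Q(E(\FS))}=T_{Q(E(\FS))}=O(T_\FS n\log p)$ and $Q(E(\FS))$ is an {\rm MQ} over $\F_p$. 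Feeding this into Lemma \ref{lm-pf} gives $N_{P(Q(E(\FS)))}=\widetilde O(T_\FS n\log^2 p)$ and $T_{P(Q(E(\FS)))}=\widetilde O(T_\FS n\log^3 p)$; adjoining $\delta_{\alpha\beta}(o)$ adds only $\beta\le\log_2 r$ more variables and $O(r)$ more terms, both negligible. Hence $N_{L_{\alpha\beta}(\FS)}=\widetilde O(T_\FS n\log^2 p)$ and $T_{L_{\alpha\beta}(\FS)}=\widetilde O(T_\FS n\log^3 p)$.

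Finally, writing $N$ and $T$ for $N_{L_{\alpha\beta}(\FS)}$ and $T_{L_{\alpha\beta}(\FS)}$, we have $N<T$, so Theorem \ref{th-m2} bounds the cost of one call of {\bf QBoolSol} by $\widetilde O(N^{2.5}T\,\kappa^2\log1/\epsilon)=\widetilde O((T_\FS n\log^2 p)^{2.5}(T_\FS n\log^3 p)\,\kappa^2\log1/\epsilon)=\widetilde O(n^{3.5}T_\FS^{3.5}\log^8 p\,\kappa^2\log1/\epsilon)$, and multiplying by the $\widetilde O(1)$ loop count leaves this unchanged. The main obstacle I anticipate is exactly the bookkeeping in the middle paragraph: one must verify that the $2r$ extra variables $\ES,\HS_\bit$ and, above all, the degree-$(p-1)$ monomials $e_j^{p-1}$ do not inflate the variable count of $Q(E(\FS))$ past $\widetilde O(T_\FS n\log p)$ before the $P$-reduction; once that is secured, the remainder is a routine specialization of the proof of Theorem \ref{th-opt1}.
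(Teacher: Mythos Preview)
Your proposal is correct and follows essentially the same approach as the paper: run Algorithm~\ref{alg-opt} on \bref{eq-pswn}, bound $N_{L_{\alpha\beta}}=\widetilde O(T_\FS n\log^2 p)$ and $T_{L_{\alpha\beta}}=\widetilde O(T_\FS n\log^3 p)$ after reducing degrees modulo $x_i^p-x_i$, apply Theorem~\ref{th-m2}, and absorb the $O(\log r)$ loop factor. The only difference is organizational: the paper splits $E(\FS)=\FS_1\cup\FS_2$ with $\FS_1=\{f_i-e_i\}$ and $\FS_2=\{H_j-e_j^{p-1}\}$ and bounds $P(Q(\FS_1))$ via Lemma~\ref{lm-fsb} and $P(Q(\FS_2))$ by a direct count, whereas you do a single monomial-level count over all of $E(\FS)$; both routes yield the same intermediate bounds, and your explicit tracking of the $e_j^{p-1}$ contribution is arguably a shade more transparent than the paper's invocation of Lemma~\ref{lm-fsb} on $\FS_1$.
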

\begin{proof}
We first give the complexity of Step 4 of Algorithm \ref{alg-opt},
that is, the complexity to solve
$L_{\alpha\beta}(\FS)$.
Let $\FS_1=\{f_1-e_1,\ldots,f_r-e_r\}$, $\FS_2=\{H_1-e_1^{p-1},\ldots,H_r-e_r^{p-1}\}$, and $\FS_3=\FS_1\cup\FS_2$. Then  $P(Q(\FS_3))=P(Q(\FS_1))\cup P(Q(\FS_2))$.
By Lemma \ref{lm-fsb},  $T_{P(Q(\FS_1))}=O(T_\FS D\log^2 p)$ and
 $N_{P(Q(\FS_1))}=O(T_\FS D\log p)$.
Since each monomial in $\FS_2$ depends on one indeterminate, we have $Q(\FS_2)\subset\F_p[\HS_\bit,\mathbf e,\V]$,
where  $\#\V=O(r\log p),\#Q(\FS_2)=O(r\log p)$, and $T_{Q(\FS_2)}=O(r\log p)$ by the proof for {Lemma \ref{lm-Q}}.
By {Lemma \ref{lm-pf}}, $T_{P(Q(\FS_2))}=O(r\log^3 p)$ and $N_{P(Q(\FS_2))}=O(r\log^2 p)$.
Since $x\in\F_p$ implies $x^p=x$, we can assume $\deg_{x_i}f_j<p$. Thus $D=n+\sum_{i=1}^r\lfloor\log_2\max_j\deg_{x_i}f_j\rfloor\le n+n\lfloor\log_2(p-1)\rfloor= O(n\log p)$, and we have $T_{L_{\alpha\beta}}=O(T_\FS D\log^2 p+r\log^{3} p+r+\log p)=O(T_\FS n\log^{3} p)$ and  $N_{L_{\alpha\beta}}=O(T_\FS D\log p+r\log^2 p+\log r)=O(T_\FS n\log^2 p)$. By {Theorem \ref{th-m2}}, the complexity   to solve $L_{\alpha\beta}$ is
$\widetilde O((T_\FS n\log^2 p)^{2.5}(T_\FS n\log^2 p+T_\FS n\log^{3} p)\kappa^2\log1/\epsilon)
=\widetilde O((
n^{3.5}T_\FS^{3.5}\log^{8} p\kappa^2\log1/\epsilon)$.
The number of loops is at most $\log r$, which is negligible
since $r\le T_\FS$, and the complexity of the algorithm is
that of Step 4. The theorem is proved.
\end{proof}

Similar to Corollary \ref{cor-mqfp}, if $\FS$ is an MQ then the complexity is lower.
\begin{cor}\label{cor-n1}
There is a quantum algorithm to solve the {\em MQ with noise}  in time
$\widetilde O(
(n+r\log p)^{2.5}\\(T_\FS\log p+r\log^2p+n)\log^{3.5}p\kappa^2\log1/\epsilon)
$
with probability $1-\epsilon$.
\end{cor}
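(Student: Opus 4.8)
The plan is to mirror the proof of Proposition \ref{th-n11}, but keep track of the fact that when $\FS$ is already an MQ we can skip the quadratization step, which is precisely where the extra factors of $D=O(n\log p)$ and the resulting $\log^8 p$ come from. So first I would write down $L_{\alpha\beta}(\FS)=P(Q(E(\FS)))\cup\{\delta_{\alpha\beta}(o)\}$ as in \bref{eq-lab}, decompose $E(\FS)=\FS_1\cup\FS_2$ with $\FS_1=\{f_j-e_j\}$ and $\FS_2=\{H_j-e_j^{p-1}\}$, and observe $P(Q(E(\FS)))=P(Q(\FS_1))\cup P(Q(\FS_2))$.

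Next I would estimate each piece. Since $\FS$ is an MQ, each $f_j-e_j$ is still quadratic, so $Q(\FS_1)=\FS_1$ and by Lemma \ref{lm-pf} (or Corollary \ref{cor-mqfp}) we get $T_{P(\FS_1)}=O(T_\FS\log^2 p)$ and $N_{P(\FS_1)}=\widetilde O(n\log p+r)$. For $\FS_2$, each equation $H_j-e_j^{p-1}$ is univariate of degree $p-1$, so as in the proof of Proposition \ref{th-n11} we have $Q(\FS_2)\subset\F_p[\HS_\bit,\ES,\V]$ with $\#\V=O(r\log p)$, $\#Q(\FS_2)=O(r\log p)$, $T_{Q(\FS_2)}=O(r\log p)$, and then by Lemma \ref{lm-pf}, $T_{P(Q(\FS_2))}=O(r\log^3 p)$, $N_{P(Q(\FS_2))}=O(r\log^2 p)$. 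Adding $\delta_{\alpha\beta}(o)$, which contributes $O(\log r+\log p)$ sparseness and $O(\log r)$ variables, I would conclude $T_{L_{\alpha\beta}}=\widetilde O(T_\FS\log p+r\log^2 p+n)$ and $N_{L_{\alpha\beta}}=\widetilde O(n+r\log p)$, using $\log r\le\log T_\FS$.

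Then I would invoke Theorem \ref{th-m2}: since $N_{L_{\alpha\beta}}<T_{L_{\alpha\beta}}$, one call to \textbf{QBoolSol} costs $\widetilde O(N_{L_{\alpha\beta}}^{2.5}T_{L_{\alpha\beta}}\kappa^2\log1/\epsilon)=\widetilde O((n+r\log p)^{2.5}(T_\FS\log p+r\log^2 p+n)\log^{3.5}p\,\kappa^2\log1/\epsilon)$, where the lone $\log^{3.5}p$ absorbs the $N_{L_{\alpha\beta}}^{2.5}$ factor's hidden $p$-dependence coming from the $\log^2 p$ inside $r\log^2 p$ (more precisely, $N_{L_{\alpha\beta}}^{2.5}=\widetilde O((n+r\log p)^{2.5})$ already, so no extra $\log p$ is needed there — I should double-check whether the stated $\log^{3.5}p$ instead tracks the $p^{p-1}$-type coefficient growth in $\FS_2$ and the bit-sizes in $P(\cdot)$). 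As in Proposition \ref{th-n11}, the outer bisection loop runs at most $\log r$ times, which is negligible, and the $\epsilon/\log r$ rescaling of the per-call failure probability only changes the $\log1/\epsilon$ factor logarithmically, so the total matches Step 4's cost.

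The main obstacle I expect is bookkeeping the precise power of $\log p$: unlike the non-MQ case the dominant term is ambiguous because $T_\FS\log p$, $r\log^2 p$, and $n$ are incomparable in general, so the product $N_{L_{\alpha\beta}}^{2.5}T_{L_{\alpha\beta}}$ does not collapse to a single monomial the way it does in Corollary \ref{cor-fsc}; I would have to argue that $\widetilde O((n+r\log p)^{2.5})$ times each of the three summands of $T_{L_{\alpha\beta}}$, times the coefficient-size contribution from raising to the $(p-1)$st power (which feeds $O(r\log^3 p)$ sparseness and hence an implicit $\log^{3.5}p$ once combined with the $2.5$ exponent elsewhere), is all absorbed into the single stated factor $\log^{3.5}p$ alongside the explicitly displayed $(T_\FS\log p+r\log^2 p+n)$. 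Everything else is a routine specialization of the already-proved Proposition \ref{th-n11}.
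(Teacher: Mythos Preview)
Your approach is exactly the one the paper intends: it gives no explicit proof of this corollary, only the remark ``Similar to Corollary~\ref{cor-mqfp}, if $\FS$ is an MQ then the complexity is lower,'' i.e.\ re-run the argument of Proposition~\ref{th-n11} while skipping the quadratization $Q$ on $\FS_1$.

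The only issue is your bookkeeping of the $\log p$ factors, which is precisely the ``obstacle'' you flag. Your claimed $N_{L_{\alpha\beta}}=\widetilde O(n+r\log p)$ and $T_{L_{\alpha\beta}}=\widetilde O(T_\FS\log p+r\log^2 p+n)$ are each too small by one factor of $\log p$. Concretely: $\X_\bit$ alone contributes $n\log p$ Boolean variables (not $n$), and you already correctly recorded $N_{P(Q(\FS_2))}=O(r\log^2 p)$, so
\[
N_{L_{\alpha\beta}}=\widetilde O\bigl(n\log p+r\log^2 p\bigr)=\widetilde O\bigl((n+r\log p)\log p\bigr).
\]
Likewise $T_{P(\FS_1)}=\widetilde O(T_\FS\log^2 p)$ and $T_{P(Q(\FS_2))}=O(r\log^3 p)$ give
\[
T_{L_{\alpha\beta}}=\widetilde O\bigl((T_\FS\log p+r\log^2 p)\log p\bigr).
\]
Now Theorem~\ref{th-m2} yields
\[
N_{L_{\alpha\beta}}^{2.5}\,(N_{L_{\alpha\beta}}+T_{L_{\alpha\beta}})
=\widetilde O\bigl((n+r\log p)^{2.5}\log^{2.5}p\cdot(n+T_\FS\log p+r\log^2 p)\log p\bigr),
\]
and $\log^{2.5}p\cdot\log p=\log^{3.5}p$ is exactly the factor you were looking for. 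The stray $+n$ inside the second parenthesis of the statement is simply the $N_{L_{\alpha\beta}}$ contribution to $N_{L_{\alpha\beta}}+T_{L_{\alpha\beta}}$ after the common $\log p$ is factored out; it has nothing to do with coefficient growth in $e_j^{p-1}$. With this correction the rest of your plan (the $\log r$ outer loop being negligible, the $\epsilon$ rescaling absorbed in $\widetilde O$) goes through verbatim.
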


\subsection{Linear system with noise}
\label{sec-n2}

When  $\FS$ becomes a linear system, we
obtain the {\em linear system with noise (LSWN)}~\cite{JH}.
Given a matrix $A=(A_{ij})\in\F_p^{r\times n}$ and a vector $\mathbf b=(b_1,\ldots,b_r)^\tau\in\F_p^r$. The LSWN problem is
to find an $\X$ such that $A\X - \mathbf b=\mathbf e$ and the error-vector $\mathbf e\in\F_p^r$ has minimal
Hamming weight $\|\mathbf e\|_H$.

The algorithm given in Section \ref{sec-n1} can be used
to solve the LSWN  and
Proposition \ref{th-n11} becomes the following form.
\begin{prop}\label{th-n2}
There exists a quantum algorithm to solve LSWN with probability $\ge 1-\epsilon$ and in time
$\widetilde O((n + r\log p)^{2.5}(T_A+r\log^{2} p)\log^{3.5} p\kappa^2\log1/\epsilon)$, where $T_A\ge\max\{r,n\}$ is the number of nonzero entries in $A$, and $\kappa$ is the extended condition number of $A\X$.
\end{prop}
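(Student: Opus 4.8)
The plan is to specialize Proposition \ref{th-n11} to the linear case, exactly as Corollary \ref{cor-line} and Corollary \ref{cor-mqfp} specialize the earlier general bounds. First I would observe that for LSWN the system $\FS=\{f_1,\ldots,f_r\}$ consists of the linear forms $f_i = \sum_{j=1}^n A_{ij}x_j - b_i$, so $T_\FS = O(T_A)$, where $T_A$ is the number of nonzero entries of $A$ (up to lower-order terms from the constant column $\mathbf b$), and the degree data collapse: $d_i = 1$ for all $i$, hence there is no need to convert to an MQ and $Q(\FS)=\FS$. The error system is $E(\FS) = (\FS-\mathbf e)\cup\{H_j - e_j^{p-1}\mid j=1,\ldots,r\}$ as in \bref{eq-fs0}, and again $\FS_1 = \{f_i - e_i\}$ is linear while $\FS_2 = \{H_j - e_j^{p-1}\}$ is unchanged from the general case since each of its monomials involves a single variable.

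Next I would carry out the parameter count for $L_{\alpha\beta}$ in \bref{eq-lab}. For the linear part $\FS_1$, Corollary \ref{cor-line} (applied to the reduction $P$, with $Q(\FS_1)=\FS_1$) gives $T_{P(\FS_1)} = O(T_A\log p)$ and $N_{P(\FS_1)} = \widetilde O((n+r)\log p)$; the count for $P(Q(\FS_2))$ is taken verbatim from the proof of Proposition \ref{th-n11}, namely $T_{P(Q(\FS_2))} = O(r\log^3 p)$ and $N_{P(Q(\FS_2))} = O(r\log^2 p)$; and the objective-slicing polynomial $\delta_{\alpha\beta}(o) = \alpha + \sum_{j=0}^{\beta-1}F_j2^j - \sum_{j=1}^r H_j$ contributes $O(r + \log r) = O(r)$ to the sparseness and $O(\log r)$ variables, which is negligible. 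Summing, $T_{L_{\alpha\beta}} = \widetilde O(T_A\log p + r\log^3 p)$ and $N_{L_{\alpha\beta}} = \widetilde O((n+r)\log p + r\log^2 p) = \widetilde O(n + r\log p)$ (absorbing polylog factors into $\widetilde O$, as is done throughout; note $N_{L_{\alpha\beta}} < T_{L_{\alpha\beta}}$).

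Then I would invoke Theorem \ref{th-m2} ({\bf QBoolSol}) on $L_{\alpha\beta}$ to get the per-loop cost $\widetilde O(N_{L_{\alpha\beta}}^{2.5}\,T_{L_{\alpha\beta}}\,\kappa^2\log(1/\epsilon)) = \widetilde O((n+r\log p)^{2.5}(T_A\log p + r\log^3 p)\kappa^2\log(1/\epsilon))$, which after pulling a factor $\log^{2.5}p$ out of $(n\log p + r\log^2 p)^{2.5}$ and combining rewrites as $\widetilde O((n+r\log p)^{2.5}(T_A + r\log^2 p)\log^{3.5}p\,\kappa^2\log(1/\epsilon))$. Finally, as in Proposition \ref{th-n11}, the number of bisection loops in Algorithm \ref{alg-opt} is $O(\log r)$ since $o\in[0,r]$, which is absorbed into $\widetilde O$ and only rescales $\epsilon$ by a $\log r$ factor inside the $\log(1/\epsilon)$; the success probability bound $(1-\epsilon/\log_{4/3}r)^{\log_{4/3}r} > 1-\epsilon$ is inherited from Theorem \ref{th-opt1}. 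The definition of the extended condition number of $A\X$ is simply $\kappa = \max_{\alpha,\beta}\kappa(L_{\alpha\beta})$, the maximum over all bisection steps of the B-POSSO condition numbers, matching the convention fixed in Theorem \ref{th-opt1} and Proposition \ref{th-n11}.

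I expect no real obstacle here beyond bookkeeping: the only slightly delicate point is making sure the bound $T_\FS = O(T_A)$ is legitimate — one must check that the constant terms $-b_i$ add at most $r$ terms and that $T_A \ge \max\{r,n\}$ (which is hypothesized) so that $r$ and $n$ are dominated wherever they appear alone — and that dropping the MQ-conversion genuinely removes the extra $\log p$ factor (one factor of $\log p$ versus $\log^2 p$ in the sparseness of the Boolean reduction, via Corollary \ref{cor-line} rather than Lemma \ref{lm-pf}). Everything else is a direct re-run of the proof of Proposition \ref{th-n11} with the degree parameter set to $1$.
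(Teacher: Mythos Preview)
Your proposal is correct and follows essentially the same approach as the paper's proof: split $E(A\X-\mathbf b)$ into the linear part $\FS_1=\{f_i-e_i\}$ (handled via Corollary~\ref{cor-line}, so $Q(\FS_1)=\FS_1$ and $T_{P(\FS_1)}=O(T_A\log p)$) and the univariate part $\FS_2=\{H_j-e_j^{p-1}\}$ (bounds taken verbatim from the proof of Proposition~\ref{th-n11}), then apply Theorem~\ref{th-m2} and factor out $\log^{3.5}p$. One small remark on presentation: your intermediate claim $N_{L_{\alpha\beta}}=\widetilde O(n+r\log p)$ is misleading since $\log p$ is not absorbed by $\widetilde O$ in this paper---write it as $N_{L_{\alpha\beta}}=\widetilde O(n\log p+r\log^2 p)$ first and then factor, as you in fact do two lines later.
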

\begin{proof}
Similar to the proof of Proposition \ref{th-n11},
we need only consider the complexity of solving $L_{\alpha\beta}(A\X-\mathbf b)$.
Let $f_i=\sum_{j=1}^nA_{ij}x_j-b_i-e_i\in\F_p[\X,\mathbf e]$, $\FS_1=\{f_1,\ldots,f_r\}$, $\FS_2=\{H_1-e_1^{p-1},\ldots,H_r-e_r^{p-1}\}$ , and we have $E(A\X-\mathbf b)=\FS_1\cup\FS_2$. Since $\FS_1$ is a linear system, we have $P(Q(E(A\X-\mathbf b)))=P(\FS_1)\cup P(Q(\FS_2))$.
By Corollary \ref{cor-line},  $T_{P(\FS_1)}=O(T_A\log p)$ and
$N_{P(\FS_1)}=n\log p + \sum_{i=1}^r\log t_i+r\log\log p$, where $t_i$ is the sparseness for the $i$-th row of matrix $A$.
%
By {Lemma \ref{lm-pf}}, $T_{P(Q(\FS_2))}=O(r\log^{3} p)$ and 
$N_{P(Q(\FS_2))}=O(r\log^2 p)$. 
Thus, $T_{L_{\alpha\beta}(A\X-\mathbf b)}=O(T_A\log p+r\log^{3} p)$
 and  $N_{L_{\alpha\beta}(A\X-\mathbf b)}=O(n\log p + \sum_{i=1}^r\log t_i+r\log^2p)$. 
 By {Theorem \ref{th-m2}}, the complexity  to solve $L_{\alpha\beta}(A\X-\mathbf b)$ is
{\small
\begin{eqnarray*}
&&\widetilde O((n\log p + \sum_{i=1}^r\log t_i+r\log^2p)^{2.5}((n\log p + \sum_{i=1}^r\log t_i+r\log^2p)+(T_A\log p+r\log^{3} p))\kappa^2\log1/\epsilon)\\
=&&\widetilde O((n\log p +r\log^2p)^{2.5}(n\log p + T_A\log p+r\log^{3} p)\kappa^2\log1/\epsilon).
\end{eqnarray*}
}
Since $T_A\ge r$ and we can assume $T_A\ge n$ without loss of generality, the complexity is $\widetilde O((n\log p + r\log^2p)^{2.5}(T_A\log p+r\log^{3} p)\kappa^2\log1/\epsilon)=\widetilde O((n + r\log p)^{2.5}(T_A+r\log^{2} p)\log^{3.5} p\kappa^2\log1/\epsilon).$
\end{proof}

%

\section{Short integer solution problem}
\label{sec-sis}
In this section, we consider the {\em short integer solution problem (SIS)},
which is a basic problem in the latticed based cryptosystems~\cite{aj1}.

\subsection{Short integer solution problem}
Consider the SIS problem introduced in \cite{aj1}:
\begin{defn}\label{sis}
Let $A=(A_{ij})\in\F_p^{r\times n}$. The {\em SIS} is to find an $\X\in\F_p^n$ such that $A\X=0\pmod p$ and the Euclidean norm of $\X$ satisfies {$0<\|\X\|_2\le b$}, where $b$ is a given integer.
%
\end{defn}
We first consider the more general {\em SIS} for $\FS=\{f_1,\ldots,f_r\}\subset\F_p[\X]$:
find an $\X$ such that $\FS(\X)=0\pmod p$ and {$0 < \|\X\|_2\le b$}.
Note that SIS is a special case of the optimization problem \bref{eq-op},
where the objective function is a constant
and the problem is to find a feasible solution for the constraints.
Precisely, the SIS can be formualted as the following standard form
 \begin{eqnarray}\label{eq-sis}
\min_{\X\in\F_p^n} o=1 &&\hbox{ subject to }\quad
\FS=0\mod p,\
0 \le \|\X\|_2^2 -1 \le b^2-1.
\end{eqnarray}

From  Remark \ref{rem-xbit}, the representation for $\F_p$ affects
inequality constraints. For the inequality
$0< \|\X\|_2^2\le b^2$,
a better representation for $\F_p$ is $\{-\frac{p-1}{2},\ldots,\frac{p-1}{2}\}$,
instead of $\{0,1,\ldots,p-1\}$.
In this section, we still use $\{0,1,\ldots,p-1\}$ to represent elements in $\F_p$,
but use the following variable expansion instead of \bref{eq-xbit}:
\begin{equation} \label{eq-xbitn} x_i=\overline{\theta}_{p-1}(\X_i)-\frac{p-1}{2}=\sum_{j=0}^{\lfloor\log_2(p-1)-1\rfloor}
X_{i,j}2^j+(p-2^{\lfloor\log_2(p-1)\rfloor})X_{i,{\lfloor\log_2(p-1)\rfloor}}
-\frac{p-1}{2}
\end{equation}
where $\X_i$ are defined in \bref{eq-xbit}.
Then, $x_i$ takes values $-\frac{p-1}{2},\ldots,\frac{p-1}{2}$ when evaluated over $\C$.
The following easy result shows that this representation gives the ``global"
solution to problems involving the Euclidean norm.
\begin{lem}\label{lm-sis1}
For $\check\X\in[-\frac{p-1}{2},\frac{p-1}{2}]^n$ and any vector { $\mathbf v\in(p\Z)^n\backslash\mathbf0$,
$\|\check\X\|_2<\|\check\X+ {\mathbf v}\|_2$}.
\end{lem}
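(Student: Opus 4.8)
The plan is to reduce this to a coordinatewise statement about integers. Write $\check{\X} = (\check{x}_1, \ldots, \check{x}_n)$ with each $\check{x}_i \in [-\tfrac{p-1}{2}, \tfrac{p-1}{2}]$, and write $\mathbf{v} = (p k_1, \ldots, p k_n)$ with each $k_i \in \Z$ and not all $k_i$ equal to $0$. Since $\|\check{\X} + \mathbf{v}\|_2^2 - \|\check{\X}\|_2^2 = \sum_{i=1}^n \bigl( (\check{x}_i + p k_i)^2 - \check{x}_i^2 \bigr)$, it suffices to show that each summand is $\ge 0$ and that at least one summand (namely any one with $k_i \ne 0$) is strictly positive.

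The key step is the elementary estimate: if $|t| \le \tfrac{p-1}{2}$ and $k \in \Z \setminus \{0\}$, then $(t + pk)^2 > t^2$. Indeed, $(t+pk)^2 - t^2 = pk(2t + pk) = pk\bigl(2t + pk\bigr)$, and I would bound $|2t| \le p-1 < p \le |pk|$, so $2t + pk$ has the same sign as $pk$ (it cannot vanish, since $|pk| \ge p > |2t|$), whence the product $pk(2t+pk)$ is strictly positive. For the indices $i$ with $k_i = 0$ the corresponding summand is exactly $0$. Summing over all $i$ and using that at least one $k_i \ne 0$, we get $\|\check{\X} + \mathbf{v}\|_2^2 > \|\check{\X}\|_2^2$, and taking (nonnegative) square roots gives $\|\check{\X}\|_2 < \|\check{\X} + \mathbf{v}\|_2$.

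There is essentially no obstacle here — the only point requiring a little care is the boundary case $|t| = \tfrac{p-1}{2}$, where $|2t| = p-1$ is as large as possible; but since $p-1 < p$, the strict inequality $|2t| < |pk|$ still holds for every nonzero integer $k$, so the argument goes through without needing $p$ odd or any other hypothesis beyond $\mathbf v \in (p\Z)^n \setminus \mathbf 0$. I would present the one-line coordinate inequality and then assemble the sum.
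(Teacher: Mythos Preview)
Your proof is correct. The paper does not actually supply a proof of this lemma; it is stated as an ``easy result'' and left unproved, so there is nothing to compare against beyond noting that your coordinatewise argument is exactly the natural way to verify it. Your handling of the boundary case $|t|=\tfrac{p-1}{2}$ via $|2t|\le p-1<p\le|pk|$ is clean and requires no additional hypotheses.
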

Due to \bref{eq-yt2} and by Lemma \ref{lm-sis1},
the constraint {$0\le \|\X\|_2^2 -1 \le b^2 -1$ can be written as the following MQ in Boolean variables
\begin{eqnarray}\label{eq-ds1}
\overline{\delta}_{b}&=&
\overline{\theta}_{b^2-1}(\G_\bit) -\sum_{i=1}^n(
\overline{\theta}_{p-1}(\X_i)-\frac{p-1}{2})^2 +1 \in\C[\G_\bit,\X_\bit]
\end{eqnarray}
where $\G_\bit = \{G_{k}\,|\,k=0,\ldots,\lfloor\log_2(b^2-1)\rfloor\}$.}
From the above discussion, we have
\begin{lem}\label{lm-sis5}
To solve the SIS, we need only to find a solution of
$\overline{P}(Q(\FS))\cup\{\overline{\delta}_{b}\}\subset\C[\X_\bit,\V_\bit,\U_\bit,$ $\G_\bit]$,
where $\overline{P}(Q(\FS))$ is obtained similar to ${P}(Q(\FS))$,
but using \bref{eq-xbitn} to expand $\X$, $\V$, and $\U$.
\end{lem}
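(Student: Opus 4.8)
The plan is to assemble the stated equivalence from the three reductions already in hand --- Lemma \ref{lm-Q} (turning $\FS$ into an MQ $Q(\FS)$), the $P$-reduction of Section \ref{ss-fp} (turning an MQ over $\F_p$ into an MQ in Boolean variables over $\C$), and Lemma \ref{lm-bb} (representing an integer interval by a Boolean polynomial) --- but with the centered expansion \bref{eq-xbitn} substituted for \bref{eq-xbit} everywhere, and with Lemma \ref{lm-sis1} invoked to certify that this substitution does not disturb the Euclidean-norm constraint.

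First I would rerun the chain $\FS \mapsto Q(\FS) \mapsto \overline{P}(Q(\FS))$ exactly as in the proofs of Lemma \ref{lm-Q} and Lemma \ref{lm-crct}, the only change being that in Step 1 of Section \ref{ss-fp} each $\F_p$-variable of $Q(\FS)$ (the original $x_i$ as well as the product-variables in $\V$) is written by \bref{eq-xbitn} instead of \bref{eq-xbit}. Since $\{-\frac{p-1}{2},\dots,\frac{p-1}{2}\}$ is again a complete set of residues modulo $p$, the surjectivity argument of Lemma \ref{lm-crct} carries over: there is a surjective morphism from $\V_\C(\overline{P}(Q(\FS)),H_{\X_\bit},H_{\V_\bit},H_{\U_\bit})$ onto $\V_{\F_p}(\FS)$, sending a Boolean point to the centered integer vector $\widehat X=(\overline{\theta}_{p-1}(\X_1)-\frac{p-1}{2},\dots,\overline{\theta}_{p-1}(\X_n)-\frac{p-1}{2})$, which represents the required element of $\F_p^n$. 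The one place that genuinely needs attention is the carry term $p\,\theta_{t_i'}(\U_i)$ of \bref{eq-P}: under the centered expansion the integer value of $f_{i\bit}$ now lies in a known symmetric interval rather than in $[0,t_i'p]$, so the threshold (equivalently, an integer offset) attached to the $\U$-expansion must be shifted so that every multiple of $p$ inside that interval is hit. This is the analogue of Remark \ref{rem-3.1}, and I expect it to be the main obstacle, though a purely bookkeeping one.

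Next I would dispatch the norm constraint. The requirement $0<\|\X\|_2\le b$ is the same as $0\le\|\X\|_2^2-1\le b^2-1$. By Lemma \ref{lm-bb}, $\overline{\theta}_{b^2-1}(\G_\bit)$ assumes exactly the values $0,1,\dots,b^2-1$ over $\C$; and under the centered expansion each coordinate of $\widehat X$ is an integer in $[-\frac{p-1}{2},\frac{p-1}{2}]$, so $\sum_{i=1}^n(\overline{\theta}_{p-1}(\X_i)-\frac{p-1}{2})^2-1$ is precisely the integer $\|\widehat X\|_2^2-1$. Hence a Boolean assignment kills $\overline{\delta}_b$ of \bref{eq-ds1} if and only if $\|\widehat X\|_2^2-1\in\{0,\dots,b^2-1\}$, i.e. $0<\|\widehat X\|_2\le b$.

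Finally I would combine the two halves. If $(\check\X_\bit,\check\V_\bit,\check\U_\bit,\check\G_\bit)$ is a Boolean solution of $\overline{P}(Q(\FS))\cup\{\overline{\delta}_b\}$, the first step yields $\widehat X$ with $\FS(\widehat X)\equiv0\bmod p$ and the second step gives $0<\|\widehat X\|_2\le b$, so $\widehat X$ solves the SIS. Conversely, suppose the SIS has a solution; taking its centered representative $\check\X\in[-\frac{p-1}{2},\frac{p-1}{2}]^n$, Lemma \ref{lm-sis1} shows this only decreases the Euclidean norm, so $\check\X$ still satisfies $0<\|\check\X\|_2\le b$ together with $\FS\equiv0\bmod p$; then surjectivity of the morphism of the first step, together with Lemma \ref{lm-bb} applied to $\overline{\theta}_{b^2-1}$ (since $\|\check\X\|_2^2-1\in[0,b^2-1]$) and to the $\V$- and $\U$-variables, lifts $\check\X$ to a Boolean solution of $\overline{P}(Q(\FS))\cup\{\overline{\delta}_b\}$. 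Thus solving the SIS is equivalent to finding such a Boolean solution, which one obtains by applying \textbf{QBoolSol} (Theorem \ref{th-m2}) to this system; it is exactly the centered expansion \bref{eq-xbitn}, justified by Lemma \ref{lm-sis1}, that makes the search return a genuinely shortest representative rather than an arbitrary one in the residue class, as anticipated in Remark \ref{rem-xbit}.
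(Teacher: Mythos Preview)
Your proposal is correct and follows essentially the same approach as the paper. In fact the paper gives no separate proof of this lemma: it is stated immediately after the preceding discussion with the phrase ``From the above discussion, we have'', so the intended argument is exactly the assembly of Lemma~\ref{lm-Q}, the $P$-reduction of Section~\ref{ss-fp} (with the centered expansion \bref{eq-xbitn} in place of \bref{eq-xbit}), Lemma~\ref{lm-bb} applied to $\overline{\theta}_{b^2-1}$, and Lemma~\ref{lm-sis1} --- which is precisely what you have spelled out.

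Your write-up is actually more careful than the paper on one point: you flag that under the centered expansion the integer value of $f_{i\bit}$ lies in a symmetric interval rather than $[0,t_i'p]$, so the carry term $p\,\theta_{t_i'}(\U_i)$ of \bref{eq-P} needs an offset to cover the correct range of multiples of $p$. The paper hides this behind the phrase ``obtained similar to $P(Q(\FS))$, but using \bref{eq-xbitn} to expand $\X$, $\V$, and $\U$'', and you are right that it is a bookkeeping adjustment rather than a new idea.
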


\begin{prop}\label{th-sis1}
There is a quantum algorithm to solve the SIS problem \bref{eq-sis}
with  probability at least $1-\epsilon$
and complexity
$\widetilde O(n^{3.5}T_\FS^{3.5}\log^{3.5} d \log^{4.5} p\kappa^2\log1/\epsilon)$
where $T_\FS$ is the total sparseness of $\FS$,
$d=\max\{2, \log_2(\deg_{x_i}(f_j)),i=1,\ldots,n,j=1,\ldots,r\}$,
and $\kappa$ is the condition number of $P(Q(\FS))\cup\{\overline{\delta}_{b}\}$.
\end{prop}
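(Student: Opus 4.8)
The plan is to reduce the SIS problem to a single B-POSSO and invoke Theorem \ref{th-m2}, tracking the size of all parameters along the way. By Lemma \ref{lm-sis5}, solving the SIS problem \bref{eq-sis} amounts to finding a Boolean solution of the system
$\overline{P}(Q(\FS))\cup\{\overline{\delta}_b\}\subset\C[\X_\bit,\V_\bit,\U_\bit,\G_\bit]$,
so the whole proof is a bookkeeping exercise on the total sparseness and number of indeterminates of this system, followed by the complexity estimate of {\bf QBoolSol}. The only genuinely new ingredient compared to Theorem \ref{th-alp} is the extra polynomial $\overline{\delta}_b$ coming from the Euclidean-norm constraint; everything else is handled by the lemmas already proved.

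First I would bound the parameters of $\overline{P}(Q(\FS))$. This system differs from $P(Q(\FS))$ of Theorem \ref{th-alp} only in that the expansion \bref{eq-xbitn} is used in place of \bref{eq-xbit}; the extra additive constant $\tfrac{p-1}{2}$ changes coefficient sizes by at most a factor logarithmic in $p$ and does not affect the degree-two structure. Hence by Lemma \ref{lm-fsb} the system $\overline{P}(Q(\FS))$ still has total sparseness $O(T_\FS D\log^2 p)$ and $O(T_\FS D\log p)$ indeterminates, where $D=n+\sum_{i=1}^n\lfloor\log_2\max_j\deg_{x_i}f_j\rfloor=O(n\log d)$. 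Next I would bound $\overline{\delta}_b$ as defined in \bref{eq-ds1}: it involves $\#\G_\bit=O(\log b)=O(\log p)$ new Boolean variables (since we may assume $b<p$, or absorb $\log b$ into the ambient size), the $\X_\bit$ already present, and the quadratic form $\sum_{i=1}^n(\overline{\theta}_{p-1}(\X_i)-\tfrac{p-1}{2})^2$, each square expanding to $O(\log^2 p)$ terms in Boolean variables, for a total sparseness of $O(n\log^2 p+\log b)=O(n\log^2 p)$; this is dominated by $T_{\overline{P}(Q(\FS))}$. Therefore the combined system $\overline{P}(Q(\FS))\cup\{\overline{\delta}_b\}$ has $N=O(T_\FS D\log p)$ indeterminates and total sparseness $T=O(T_\FS D\log^2 p)$, with $N<T$.

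Then I would apply Theorem \ref{th-m2} with these parameters and with failure probability $\epsilon$: the cost of {\bf QBoolSol} is
$\widetilde O(N^{2.5}(N+T)\kappa^2\log 1/\epsilon)=\widetilde O((T_\FS D\log p)^{2.5}(T_\FS D\log^2 p)\kappa^2\log 1/\epsilon)=\widetilde O(T_\FS^{3.5}D^{3.5}\log^{4.5}p\,\kappa^2\log 1/\epsilon)$,
and substituting $D=O(n\log d)$ gives the claimed $\widetilde O(n^{3.5}T_\FS^{3.5}\log^{3.5}d\,\log^{4.5}p\,\kappa^2\log 1/\epsilon)$. Correctness: by Lemma \ref{lm-sis5} any Boolean solution returned yields, via the expansions \bref{eq-xbitn} and \bref{eq-ds1}, an $\X\in\F_p^n$ with $\FS(\X)=0\bmod p$ and $0<\|\X\|_2^2\le b^2$; conversely, Lemma \ref{lm-sis1} guarantees that the representative with coordinates in $[-\tfrac{p-1}{2},\tfrac{p-1}{2}]$ minimizes the norm over the coset, so no genuine short solution is missed. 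The cost of the classical preprocessing (forming $Q(\FS)$, then $\overline{P}(Q(\FS))$ and $\overline{\delta}_b$) is, by Lemmas \ref{lm-Q} and \ref{lm-pf}, only $\widetilde O(T_\FS D\log^2 p)$ bit operations, which is negligible against the quantum cost.

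I expect the only subtle point to be the treatment of the strict inequality $0<\|\X\|_2$ — i.e. excluding the zero vector — which is exactly what the $-1$ shift in \bref{eq-sis} and \bref{eq-ds1} accomplishes (so that $\|\X\|_2^2-1\in[0,b^2-1]$ forces $\|\X\|_2^2\ge 1$), together with the use of the symmetric representative via \bref{eq-xbitn}; once Lemma \ref{lm-sis1} is invoked this causes no trouble. The parameter bookkeeping, while tedious, is routine and entirely parallel to the proof of Theorem \ref{th-alp}.
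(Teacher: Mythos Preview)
Your proposal is correct and follows essentially the same approach as the paper: reduce to a single B-POSSO via Lemma \ref{lm-sis5}, show that $\overline{\delta}_b$ is negligible compared to $\overline{P}(Q(\FS))$, and invoke the bounds of Lemma \ref{lm-fsb} together with Theorem \ref{th-m2} exactly as in Theorem \ref{th-alp}/Corollary \ref{cor-fsc}. The only minor slip is the justification ``since we may assume $b<p$'': the paper instead uses $\|\X\|_2^2\le n p^2$ (so $\log(b^2-1)=\widetilde O(\log n+\log p)$), but your hedge ``or absorb $\log b$ into the ambient size'' already covers this and the conclusion is unaffected.
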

\begin{proof}
Since $\|\X\|_2^2 \le np^2$,
by Lemma \ref{lm-bb}, {
$\#\G_\bit= \widetilde{O}(\log(b^2-1))\le \widetilde{O}(\log(np^2))=
\widetilde{O}(\log(n)+\log p)$ and
$\#\overline{\delta}_{b}= \widetilde{O}(\log(b^2-1)+n\log^2p)=\widetilde{O}(\log(b)+ n\log^2p)=\widetilde{O}(n\log^2p)$, since $b\le np^2$.}
%
By {Lemma \ref{lm-fsb}},
$\overline{P}(Q(\FS))$ is of sparseness $O(nT_\FS \log d\log^2 p)$ and with $O(nT_\FS\log d \log p)$ indeterminates.
By Lemma \ref{lm-sis5}, we need to solve $\overline{P}(Q(\FS))\cup\{\overline{\delta}_{b}\}$ with Theorem \ref{th-m2}.
Comparing to the total sparseness and number of variables of $\overline{P}(Q(\FS))$,
$\#\overline{\delta}_{b}$ and $\#\G_\bit$ are negligible.
Then, the complexity of solving the SIS is the same as that of
solving $\overline{P}(Q(\FS))$. Then, the theorem follows from Corollary \ref{cor-fsc}.
\end{proof}

For the original SIS, we have
%
\begin{prop}\label{cor-sis2}
There is a quantum algorithm to find an non-trivial $\X\in\Z^n$ for $A\X=0\pmod p$ with $\|\X\|_2\le b$ with probability $1-\epsilon$  and in time $\widetilde O((n\log p+r)^{2.5}(T_A\log p+n\log^2p)\kappa^2\log1/\epsilon)$, where $T_A$ is the
number of nonzero elements of $A$, assuming $T_A\ge n$.
\end{prop}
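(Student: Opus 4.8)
The plan is to specialize the construction behind Proposition~\ref{th-sis1} to the linear system $\FS=A\X$; the only difference from the general case is that no quadratization is needed, so the resulting Boolean system is smaller and the complexity drops accordingly.

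First I would write SIS for $A$ in the standard form \bref{eq-sis}, with objective $o=1$ and constraints $A\X=0\bmod p$, $0\le\|\X\|_2^2-1\le b^2-1$. By Lemma~\ref{lm-sis5} it then suffices to find a Boolean solution of $\overline{P}(Q(\FS))\cup\{\overline{\delta}_b\}$, where $\X$, $\V$, $\U$ are expanded by the shifted map \bref{eq-xbitn} so that each $x_i$ ranges over $\{-\tfrac{p-1}{2},\dots,\tfrac{p-1}{2}\}$. Since $\FS$ is linear we have $Q(\FS)=\FS$, hence $\overline{P}(Q(\FS))=\overline{P}(\FS)$, which is the system $P(\FS)$ of Corollary~\ref{cor-line} except that each linear form picks up a constant $-\tfrac{p-1}{2}\sum_j A_{ij}$; this adds at most one term and $O(\log p)$ bits per row, so the estimates are unchanged: $T_{\overline{P}(\FS)}=O(T_A\log p)$ and $N_{\overline{P}(\FS)}=\widetilde{O}(n\log p+\sum_{i=1}^r\log t_i+r\log\log p)$, with $t_i$ the number of nonzero entries in row $i$ of $A$.

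Next I would bound $\overline{\delta}_b$ from \bref{eq-ds1}. As $\|\X\|_2^2\le np^2$ we may assume $b\le np^2$, so by Lemma~\ref{lm-bb} $\#\G_\bit=\widetilde{O}(\log(b^2-1))=\widetilde{O}(\log n+\log p)$; each square $(\overline{\theta}_{p-1}(\X_i)-\tfrac{p-1}{2})^2$ expands into $O(\log^2 p)$ terms, so $\#\overline{\delta}_b=\widetilde{O}(\log b+n\log^2 p)=\widetilde{O}(n\log^2 p)$. Both are dominated by the contribution of $\overline{P}(\FS)$ except for the extra $n\log^2 p$ in the sparseness, so for $L=\overline{P}(\FS)\cup\{\overline{\delta}_b\}$ one gets $N_L=\widetilde{O}(n\log p+r)$ (using $\sum_i\log t_i\le r\log(T_A/r)$ and absorbing $r\log\log p$ under $T_A\ge n$) and $T_L=O(T_A\log p+n\log^2 p)$. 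Applying Theorem~\ref{th-m2} to $L$ then finds a Boolean solution with probability $\ge 1-\epsilon$ in time $\widetilde{O}(N_L^{2.5}(N_L+T_L)\kappa^2\log 1/\epsilon)=\widetilde{O}((n\log p+r)^{2.5}(T_A\log p+n\log^2 p)\kappa^2\log 1/\epsilon)$, where $\kappa$ is the condition number of $L$; recovering $\X$ via \bref{eq-xbitn}, the constraint $\|\X\|_2^2\ge 1$ forces $\X\ne 0$, and Lemma~\ref{lm-sis1} guarantees the recovered vector is genuinely short rather than an encoding artifact. The cost of forming $L$ from $A$ is negligible by the fast-arithmetic bound of Lemma~\ref{lm-pf}.

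I expect the only real difficulty to be bookkeeping: verifying that replacing the unshifted expansion \bref{eq-xbit} by the shifted one \bref{eq-xbitn} leaves Corollary~\ref{cor-line}'s bounds intact, and that the $\sum_i\log t_i$ and $r\log\log p$ terms collapse into $n\log p+r$ under the standing hypothesis $T_A\ge n$. Everything else is a direct instantiation of the linear case of Proposition~\ref{th-sis1} together with Theorem~\ref{th-m2}.
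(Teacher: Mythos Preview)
Your proposal is correct and follows essentially the same route as the paper: specialize Lemma~\ref{lm-sis5} to the linear case $Q(A\X)=A\X$, invoke Corollary~\ref{cor-line} for $T_{\overline P(\FS)}$ and $N_{\overline P(\FS)}$, add the $\overline\delta_b$ contribution from the proof of Proposition~\ref{th-sis1}, and collapse $\sum_i\log t_i+r\log\log p$ into $r$ under $\widetilde O$ before applying Theorem~\ref{th-m2}. If anything, you are slightly more careful than the paper in tracking the effect of the shifted expansion \bref{eq-xbitn} on the linear-case bounds and in justifying nontriviality via $\|\X\|_2^2\ge 1$; the paper simply cites Corollary~\ref{cor-line} without commenting on the shift.
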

\begin{proof}
By {Corollary \ref{cor-line}},  $P(Q(A\X))=P(A\X)$ is of total sparseness $O(T_A\log p)$ and
has  $O(n\log p+\sum_{i=1}^r\log t_i+r\log\log p)$ indeterminates, where $t_i$ is the sparseness for the $i$-th line of matrix $A$..
From the proof of Proposition \ref{th-sis1},
$\#\G_\bit=\widetilde{O}(\log n+\log p)$ and
$\#\overline{\delta}_{b}=\widetilde{O}(n\log^2p)$.
Since $T_A\ge n$, we have $T_{L_{\alpha\beta}} = O(T_A\log p+n\log^2p)$
and $N_{L_{\alpha\beta}} = O(n\log p+\sum_{i=1}^r\log t_i+r\log\log p)$.
Comparing to the total sparseness and number of variables of $\overline{P}(Q(\FS))$,
$\#\overline{\delta}_{b}$ is negligible.
By Theorem \ref{th-m2}, the complexity to solve $\overline{P}(Q(\FS))\cup\{ \overline{\delta}_{b}\}$   is $\widetilde O((n\log p+\sum_{i=1}^r\log t_i+r\log\log p)^{2.5}((n\log p+\sum_{i=1}^r\log t_i+r\log\log p)+(T_A\log p+n\log^2p))\kappa^2\log1/\epsilon)=\widetilde O((n\log p+r)^{2.5}(T_A\log p+n\log^2p)\kappa^2\log1/\epsilon)$.
\end{proof}

\subsection{Smallest integer solution problem}
\label{sec-ssis}
We consider the smallest integer solution problem,
which is to find a solution of $\FS=0\pmod p$, which
has the minimal Euclidean norm.
The problem can be formulated as the following standard form
\begin{eqnarray}\label{eq-ssis}
\min_{\X\in\F_p^n} o=\|X\|_2^2-1&&\hbox{ subject to }\quad
\FS=0\mod p \,\, \hbox{ and}\,\, 0 \le o < u
\end{eqnarray}
where $u=n (p-1)^2$.
We can use Algorithm \ref{alg-opt} to solve problem \bref{eq-ssis}.
The parameterized objective function  and $L_{\alpha\beta}(\FS)$ are
\begin{eqnarray}\label{eq-lab1}
\delta_{\alpha\beta}(o)
&=&
\alpha+\sum_{j=0}^{\beta-1}F_j2^j-
\sum_{i=1}^n(\theta(\X_i)-\frac{p-1}{2})^2+1\in\C[\F_\bit,\X_\bit].\\
{L}_{\alpha\beta}&=&\overline{P}(Q(\FS))\cup \{\delta_{\alpha\beta}\}\subset\C[\X_\bit,\V_\bit,\U_\bit,\F_\bit]\nonumber
\end{eqnarray}
where $\overline{P}(Q(\FS))$ is defined in Lemma \ref{lm-sis5}.
We have
\begin{prop}\label{thm-sis1}
There is a quantum algorithm to solve problem \bref{eq-ssis} with probability $\ge1-\epsilon$ and  in time
$\widetilde O(n^{3.5}T_\FS^{3.5}\log^{3.5} d \log^{4.5} p\kappa^2\log1/\epsilon)$.
\end{prop}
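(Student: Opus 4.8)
The plan is to run Algorithm \ref{alg-opt} on the optimization problem \bref{eq-ssis} with the explicit upper bound $u=n(p-1)^2$, expanding the variables by the shifted rule \bref{eq-xbitn} instead of \bref{eq-xbit}. The point of \bref{eq-xbitn} is that, by Lemma \ref{lm-sis1}, a vector whose coordinates lie in $[-\tfrac{p-1}{2},\tfrac{p-1}{2}]$ is the Euclidean-shortest representative of its residue class modulo $p$, so the minimizer returned by the algorithm is the \emph{globally} smallest integer solution of $\FS=0\bmod p$, and not merely one that is short among the representatives $\{0,\dots,p-1\}$.

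First I would settle termination and correctness. For any nonzero $\X\in[-\tfrac{p-1}{2},\tfrac{p-1}{2}]^n$ we have $0\le\|\X\|_2^2-1<n(p-1)^2=u$, so the objective $o=\|\X\|_2^2-1$ stays in $[0,u)$ and the added constraint $0\le o<u$ is exactly the hypothesis needed to invoke Theorem \ref{th-opt1}; moreover the inequality $0\le o$ rules out the trivial solution $\X=0$. It then remains only to observe that in this instance Step 4 of Algorithm \ref{alg-opt} solves precisely the system $L_{\alpha\beta}=\overline{P}(Q(\FS))\cup\{\delta_{\alpha\beta}\}$ of \bref{eq-lab1}, which is Lemma \ref{lm-sis5} together with Lemma \ref{lm-bb}; granting this, the termination after at most $\log_{4/3}u$ loops and the success probability $(1-\epsilon/\log_{4/3}u)^{\log_{4/3}u}\ge 1-\epsilon$ are transcribed verbatim from the proof of Theorem \ref{th-opt1}.

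Next I would bound the size of $L_{\alpha\beta}$. By Lemma \ref{lm-fsb} (the shift by $\tfrac{p-1}{2}$ in \bref{eq-xbitn} changes the sparseness of the reduced system only by a constant factor, since substituting $\overline{\theta}_{p-1}(\X_i)-\tfrac{p-1}{2}$ into a monomial of degree $\le 2$ still produces $O(\log^2 p)$ terms), $\overline{P}(Q(\FS))$ is an MQ in Boolean variables over $\C$ with $N_{\overline{P}(Q(\FS))}=\widetilde O(nT_\FS\log d\log p)$ indeterminates and total sparseness $\widetilde O(nT_\FS\log d\log^2 p)$. The extra polynomial $\delta_{\alpha\beta}$ in \bref{eq-lab1} introduces only the $\beta\le\lceil\log_2u\rceil=\widetilde O(\log p)$ fresh Boolean variables $\F_\bit=\{F_0,\dots,F_{\beta-1}\}$, while its nonlinear part $\sum_{i=1}^n(\overline{\theta}_{p-1}(\X_i)-\tfrac{p-1}{2})^2$ is a sum of $n$ squares of $\C$-linear forms in $\X_\bit$ of sparseness $O(\log p)$ each, hence quadratic with $\widetilde O(n\log^2 p)$ monomials. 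Both contributions are dominated by those of $\overline{P}(Q(\FS))$, so $N_{L_{\alpha\beta}}=\widetilde O(nT_\FS\log d\log p)<T_{L_{\alpha\beta}}=\widetilde O(nT_\FS\log d\log^2 p)$.

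To conclude, apply Theorem \ref{th-m2} to each of the $\le\log_{4/3}u$ invocations of \textbf{QBoolSol}, exactly as in the proof of Theorem \ref{th-sis1}: one call costs $\widetilde O(N_{L_{\alpha\beta}}^{2.5}T_{L_{\alpha\beta}}\kappa^2\log(1/\epsilon))=\widetilde O(n^{3.5}T_\FS^{3.5}\log^{3.5}d\,\log^{4.5}p\,\kappa^2\log(1/\epsilon))$, and the outer factor $\log_{4/3}u=\widetilde O(\log p)$ together with the $\log\log u$ penalty from running at precision $\epsilon/\log_{4/3}u$ is absorbed into the $\widetilde O$, yielding the stated complexity. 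The only genuinely delicate point is the size estimate of the previous paragraph --- checking that $\delta_{\alpha\beta}$, the squared-norm term, and the logarithmically many bisection loops are all swallowed by the cost of $\overline{P}(Q(\FS))$; the rest is a routine adaptation of the proofs of Theorems \ref{th-opt1} and \ref{th-sis1}.
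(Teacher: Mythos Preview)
Your proposal is correct and follows essentially the same approach as the paper's proof: bound $N_{L_{\alpha\beta}}$ and $T_{L_{\alpha\beta}}$ by invoking the estimates already established for $\overline{P}(Q(\FS))$ (Lemma \ref{lm-fsb} via Proposition \ref{th-sis1}), observe that the contribution of $\delta_{\alpha\beta}$ and the $\log u=O(\log n+\log p)$ loop factor are negligible, and plug into the cost of \textbf{QBoolSol}. The paper compresses all of this into three lines by citing Proposition \ref{th-sis1} and Theorem \ref{th-opt1} directly, whereas you unpack Theorem \ref{th-opt1} into its constituent Theorem \ref{th-m2} calls and spell out the correctness argument via Lemma \ref{lm-sis1}; but the route is the same.
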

\begin{proof}
From the proof of Proposition  \ref{th-sis1},
$N_{L_{\alpha\beta}} =O(nT_\FS\log d \log p)$ and
$T_{L_{\alpha\beta}} = O(nT_\FS \log d\log^2 p)$.
Also, $\log u = O(np^2 ) =O(\log n + \log p)$.
By Theorem \ref{th-opt1}, the complexity is
$\widetilde O(N_{L_{\alpha\beta}}^{2.5}T_{L_{\alpha\beta}}\kappa^2\log1/\epsilon$ $\log u)$
$=\widetilde O(n^{3.5}T_\FS^{3.5}\log^{3.5} d \log^{4.5} p\kappa^2\log1/\epsilon)$.
\end{proof}

If $\FS$ is a linear system $A\X=0$ with $T_A\ge n$, then we can prove the following result
similar to  Propositions \ref{thm-sis1} and \ref{cor-sis2}.
\begin{prop}\label{prop-sis1}
There is a quantum algorithm to find a non-trivial $\X\in\Z^n$ for $A\X=0\pmod p$ with minimal $\|\X\|_2$ with probability $\ge1-\epsilon$  and in time $\widetilde O((n\log p+r)^{2.5}(T_A\log p+n\log^2p)\kappa^2\log1/\epsilon)$.
\end{prop}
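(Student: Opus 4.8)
The plan is to combine the argument of Proposition \ref{thm-sis1} (which handles the smallest integer solution problem for a general $\FS$) with the linear-system bookkeeping of Proposition \ref{cor-sis2}. First I would write the problem in the standard form \bref{eq-ssis} with the linear equational constraint $\FS=\{A\X\}$, objective $o=\|\X\|_2^2-1$, and bound $u=n(p-1)^2$; the constraint $0\le o$ already forces $\X\neq\mathbf 0$, so any output is non-trivial. To be sure Algorithm \ref{alg-opt} returns the genuinely shortest integer representative and not merely a short vector in a shifted coset, I would expand $\X$ by the centered representation \bref{eq-xbitn} rather than \bref{eq-xbit}, so each $x_i$ ranges over $\{-(p-1)/2,\dots,(p-1)/2\}$; Lemma \ref{lm-sis1} then guarantees that the box minimum of $\|\X\|_2$ is the true minimum over all integer representatives, exactly as in the proof of Proposition \ref{th-sis1}.

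Next I would run Algorithm \ref{alg-opt}, whose $\alpha\beta$-th iteration solves $L_{\alpha\beta}=\overline{P}(Q(\FS))\cup\{\delta_{\alpha\beta}(o)\}$ as in \bref{eq-lab1}. Since $\FS$ is linear, $Q(\FS)=\FS$, and Corollary \ref{cor-line} gives $T_{\overline{P}(Q(\FS))}=O(T_A\log p)$ and $N_{\overline{P}(Q(\FS))}=\widetilde O(n\log p+\sum_{i=1}^r\log t_i+r\log\log p)$, where $t_i$ is the sparseness of the $i$-th row of $A$. The extra block $\delta_{\alpha\beta}(o)$ contributes $O(n\log^2 p)$ terms and $\widetilde O(n\log p)$ Boolean variables from the $n$ squares $(\overline{\theta}_{p-1}(\X_i)-(p-1)/2)^2$, each of which has $O(\log^2 p)$ terms, together with $\beta$ fresh Boolean variables $F_j$; since $u=n(p-1)^2$ we have $\beta\le\log_2 u=\widetilde O(\log n+\log p)$, which is negligible against $n\log p$. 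Hence $T_{L_{\alpha\beta}}=O(T_A\log p+n\log^2 p)$ and $N_{L_{\alpha\beta}}=\widetilde O(n\log p+\sum_{i=1}^r\log t_i+r\log\log p)$, and the $\delta$-block is dominated by $\overline{P}(Q(\FS))$, just as in Proposition \ref{cor-sis2}.

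Finally, for each of the at most $\log_{4/3}u=\widetilde O(\log n+\log p)$ calls to \textbf{QBoolSol} I would apply Theorem \ref{th-m2} with error parameter $\epsilon/\log_{4/3}u$; using $N_{L_{\alpha\beta}}<T_{L_{\alpha\beta}}$ and the reduction $(n\log p+\sum_i\log t_i+r\log\log p)^{2.5}=\widetilde O((n\log p+r)^{2.5})$ (as in Corollary \ref{cor-mqfp}), each call costs $\widetilde O((n\log p+r)^{2.5}(T_A\log p+n\log^2 p)\kappa^2\log 1/\epsilon)$, and the polylogarithmic loop count is absorbed into the $\widetilde O$. The success-probability bound $(1-\epsilon/\log_{4/3}u)^{\log_{4/3}u}>1-\epsilon$ is the same amplification used in Theorem \ref{th-opt1}. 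The only genuinely new ingredient over the earlier propositions is the centered expansion \bref{eq-xbitn}, and the only things that really need checking are that $u=n(p-1)^2$ bounds $o$ on the centered box and that $\log u$ remains polylogarithmic so neither the loop count nor $\#\F_\bit$ affects the stated complexity; both are immediate, so I expect no real obstacle beyond routine bookkeeping.
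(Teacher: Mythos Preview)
Your proposal is correct and follows essentially the same approach as the paper: the paper's own proof is nothing more than the sentence ``we can prove the following result similar to Propositions \ref{thm-sis1} and \ref{cor-sis2},'' and you have carried out exactly that combination, with the centered expansion \bref{eq-xbitn}, the linear-case bounds from Corollary \ref{cor-line}, and the loop-count/amplification bookkeeping from Theorem \ref{th-opt1}. The one phrasing quibble is that the $\delta_{\alpha\beta}$-block is not literally ``dominated'' by $\overline{P}(Q(\FS))$ in the sparseness count---its $O(n\log^2 p)$ contribution survives into the final bound, as you yourself record---but your computed $T_{L_{\alpha\beta}}$ and final complexity are correct regardless.
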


\section{Quantum algorithm for SVP and CVP}
\label{sec-svp}

In this section, Algorithm \ref{alg-opt} is used 
to solve the SVP and CVP problems \cite{svp1,svp2}.

A {\em lattice} generated by ${\mathcal B} = \{{\bf b}_1, \ldots, {\bf b}_n\}\subset \R^m$
is the set of $\Z$-linear combinations of ${\bf b}_i$.
${\mathcal B}$ is called a basis of the lattice, if ${\bf b}_1, \ldots, {\bf b}_n$
are linear independent over $\R$.
The {\em SVP problem} can be described as follows:
given a lattice $L$ generated by a basis
${\bf b}_1, \ldots, {\bf b}_n$ in $\R^m$,
find a nonzero ${\bf v}\in L$ such that ${\bf v}$ has the minimal Euclidean norm.
The {\em CVP problem} can be described as follows:
given a vector ${\bf b_0}\in\Z^m$ and a lattice $L$ generated by a basis
${\bf b}_1, \ldots, {\bf b}_n$ in $\R^m$,
find a  ${\bf v}\in L$ such that ${\bf v}-{\bf b_0}$ has the minimal Euclidean norm.
In this paper, we assume that ${\bf b}_1, \ldots, {\bf b}_n$ are in $\Z^m$.
The SVP problem can be written as the following optimization problem.
\begin{eqnarray}\label{eq-svp}
\min_{{\bf v}\in \Z^m, {\bf a}\in \Z^n,} o=\|{\bf v}\|_2^2 &&\hbox{ subject to }\quad
{\bf v} \ne {\bf 0} \,\,\hbox{and}\,\, {\bf v} = \sum_{i=1}^n a_i{\bf b}_i,
\end{eqnarray}
where ${\bf v}=(v_1,\ldots,v_m)$ and ${\bf a}=(a_1,\ldots,a_n)$.
Note that the SVP problem is similar to the SIS problem considered in
Proposition \ref{prop-sis1}, but the solutions are over the integers instead
of finite fields.

Let $$B=[{\bf b}_1,\ldots,{\bf b}_n]\in\Z^{m\times n}$$ be the matrix with columns
${\bf b}_1,\ldots,{\bf b}_n$.
%
In order to reduce problem~\bref{eq-svp} into the standard form~\bref{eq-op},
we need to find upper bounds for $a_i$, $v_i$, and $\|{\bf v}\|_2$.
For a matrix or a vector $A$, let $\|A\|_\infty$ to be the maximum absolute value of the elements in $A$. It is easy to find bounds for $v_i$ and $\|{\bf v}\|_2$.
\begin{lem}\label{lm-svp1}
Let $\overline{{\bf v}}=(\overline{v}_1,\ldots,\overline{v}_m)$ be the shortest vector in $L$.
Then we have $\|\overline{{\bf v}}\|_2 \le \sqrt{m}||B||_\infty$
and $|\overline{v}_i|\le \sqrt{m}||B||_\infty$.
\end{lem}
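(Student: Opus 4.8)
The plan is to bound the shortest vector by exhibiting one explicit nonzero lattice vector whose norm is easy to estimate, namely a basis vector. First I would note that since $\mathcal B=\{{\bf b}_1,\ldots,{\bf b}_n\}$ is a basis, in particular ${\bf b}_1\in L$ is a nonzero element of the lattice, so by the defining minimality property of the shortest vector $\overline{{\bf v}}$ we have $\|\overline{{\bf v}}\|_2\le\|{\bf b}_1\|_2$.

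Next I would estimate $\|{\bf b}_1\|_2$ in terms of $\|B\|_\infty$. Writing ${\bf b}_1=(b_{11},\ldots,b_{m1})^\tau\in\Z^m$, each entry satisfies $|b_{i1}|\le\|B\|_\infty$ by definition of $\|B\|_\infty$ as the maximum absolute value of the entries of $B=[{\bf b}_1,\ldots,{\bf b}_n]$. Hence
\[
\|{\bf b}_1\|_2=\Bigl(\sum_{i=1}^m b_{i1}^2\Bigr)^{1/2}\le\bigl(m\,\|B\|_\infty^2\bigr)^{1/2}=\sqrt{m}\,\|B\|_\infty,
\]
which combined with the previous inequality gives $\|\overline{{\bf v}}\|_2\le\sqrt{m}\,\|B\|_\infty$.

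Finally, for the coordinatewise bound, I would simply use that for any vector the absolute value of a single coordinate is at most its Euclidean norm: $|\overline{v}_i|\le\|\overline{{\bf v}}\|_2\le\sqrt{m}\,\|B\|_\infty$ for each $i$, which is the second claim.

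There is really no serious obstacle here; the only point worth a moment's care is that the lemma implicitly uses that $L\neq\{{\bf 0}\}$, which is automatic because a basis is nonempty (equivalently, the ${\bf b}_i$ are linearly independent over $\R$, hence nonzero), so a shortest nonzero vector exists and the comparison with ${\bf b}_1$ is legitimate.
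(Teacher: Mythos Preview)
Your proof is correct and is exactly the argument the paper has in mind: the paper's own proof merely says ``it is clear that $\|\overline{{\bf v}}\|_2 \le \sqrt{m}\|B\|_\infty$'' and then deduces $|\overline{v}_i|\le\|\overline{{\bf v}}\|_\infty\le\|\overline{{\bf v}}\|_2\le\sqrt{m}\|B\|_\infty$, so your comparison with a basis vector ${\bf b}_1$ simply spells out the obvious reason behind that one-line assertion.
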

\begin{proof}
It is clear that $\|\overline{{\bf v}}\|_2 \le \sqrt{m}||B||_\infty $
and $|\overline{v}_i| \le \|\overline{{\bf v}}\|_\infty \le \|\overline{{\bf v}}\|_2 \le \sqrt{m}||B||_\infty $.
\end{proof}

In order to bound $a_i$ in \bref{eq-svp}, we need the concept of
{\em Hermite normal form} (HNF).
A matrix $H=(h_{i,j})\in\Z^{m\times n}$ of rank $n$ is called an (column) HNF if there
exists a strictly increasing map $f$ from $[1,n]$ to $[1,m]$
satisfying: for $j\in[1,n]$, $h_{f(j),j}\ge 1$, $h_{i,j}=0$ if $i > f(j)$
and $h_{f(j),j}> h_{f(j),k}\ge0$ if $k > j$.
It is known that any lattice generated by a basis ${\bf b}_1, \ldots, {\bf b}_n$
is also generated by ${\bf h}_1, \ldots, {\bf h}_n$  if
$H = [{\bf h}_1, \ldots, {\bf h}_n]$ is an HNF of $B = [{\bf b}_1, \ldots, {\bf b}_n]$.
We need the following obvious property of HNF.
\begin{lem}\label{lm-svp2}
Let $H = [{\bf h}_1, \ldots, {\bf h}_n]=[h_{ij}]$ be an HNF,
${\bf v}=(v_1,\ldots,v_m)^\tau$ an element in the lattice
generated by ${\bf h}_1, \ldots, {\bf h}_n$,
and ${\bf v}=c_1 {\bf h}_1+ \cdots+ c_n {\bf h}_n$ for $c_i\in\Z$.
Then $v_{f(n)} = c_n h_{f(n),n}$ and hence $|c_n| \le  ||{\bf v}||_\infty$.
%
\end{lem}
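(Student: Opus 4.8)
\textbf{Proof proposal for Lemma \ref{lm-svp2}.}
The plan is to exploit the lower-triangular-type vanishing pattern built into the definition of an HNF. First I would look at the single coordinate indexed by $f(n)$, the image of the last column under the increasing map $f$. For any column $j$ with $j<n$, strict monotonicity of $f$ gives $f(j)<f(n)$, so the defining condition ``$h_{i,j}=0$ whenever $i>f(j)$'' applies with $i=f(n)$ and yields $h_{f(n),j}=0$. Hence only the $j=n$ term survives when we read off the $f(n)$-th coordinate of the combination ${\bf v}=c_1{\bf h}_1+\cdots+c_n{\bf h}_n$.

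Concretely, I would write
\[
v_{f(n)}=\sum_{j=1}^{n} c_j\, h_{f(n),j}=c_n\, h_{f(n),n},
\]
which is exactly the first claimed identity. For the bound on $|c_n|$, I would invoke the other part of the HNF definition, namely $h_{f(n),n}\ge 1$. Since $h_{f(n),n}$ is a positive integer, dividing gives $|c_n|=|v_{f(n)}|/h_{f(n),n}\le |v_{f(n)}|$, and $|v_{f(n)}|\le \|{\bf v}\|_\infty$ by definition of the sup-norm. This chains to $|c_n|\le\|{\bf v}\|_\infty$, completing the argument.

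There is essentially no obstacle here; the only point that needs care is making sure the indices line up, i.e.\ that the map $f$ is applied to the \emph{column} index $n$ and that strict monotonicity is what forces $h_{f(n),j}=0$ for all earlier columns $j<n$. I would state this index bookkeeping explicitly so the reader sees why the sum collapses to a single term. Everything else is a one-line consequence of the HNF axioms already recalled just before the lemma, so the proof should be three or four sentences long.
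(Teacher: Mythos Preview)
Your argument is correct and is exactly the natural verification of this fact from the HNF axioms; the paper itself omits the proof entirely, calling the lemma an ``obvious property of HNF,'' so there is nothing further to compare.
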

We also need the following result about HNF.
\begin{thm}[\cite{Storjohann2000}]\label{th-hnfc}
Let $B \in \Z^{m\times n}$ with rank $n$ and $H$ be the HNF of $B$.
Then, there exists an $E\in\Z^{m\times m}$ such that $EB = H$,
$||H||_\infty \le (\sqrt{n}||B||_{\infty})^n$ and $||E||_{\infty} \le (\sqrt{n}||B||_{\infty})^n$.
Furthermore, the bit complexity to compute $H$ from $B$ is
$\widetilde{O}(mn^{\theta}||B||_{\infty})$, where $\theta$ is the matrix multiplication constant.
\end{thm}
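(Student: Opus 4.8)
The plan is to follow the standard construction of the Hermite normal form. $H$ is obtained from $B$ by a finite sequence of integer elementary operations, and $E$ is the accumulated product of the corresponding matrices, so that $EB=H$ by construction. I would first treat the full-column-rank reduction concretely: process the columns left to right, and for the $j$-th column apply to pairs of the not-yet-used rows the $2\times2$ integer operations produced by the extended Euclidean algorithm, clearing all but one of those entries and leaving a pivot $h_{f(j),j}$ equal to their $\gcd$; then reduce the entries in the previously chosen pivot rows modulo $h_{f(j),j}$ so that $h_{f(j),j}>h_{f(j),k}\ge 0$ for $k>j$. Because $B$ has rank $n$ a fresh pivot appears at every step, $f$ is strictly increasing, the last $m-n$ rows of $H$ vanish, and $E$ is an integer matrix with $EB=H$; the rectangular case is handled by first extracting $n$ independent rows, working with the square subproblem, and padding back.

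For the size bounds I would invoke Hadamard's inequality. Since $H$ and $B$ generate the same lattice, there is a unimodular $U$ with $H=BU$, so the $n\times n$ minor of $H$ on the pivot rows $f(1),\dots,f(n)$ equals $\pm\det B_0$ for the corresponding $n\times n$ submatrix $B_0$ of $B$; this minor is triangular and hence equals $\prod_{i=1}^n h_{f(i),i}$, so each pivot $h_{f(j),j}$ divides $\det B_0$ and therefore $|h_{f(j),j}|\le|\det B_0|\le(\sqrt n\,\|B\|_\infty)^n$. Every off-diagonal entry of $H$ lies in $[0,h_{f(j),j})$ after the reduction step, so $\|H\|_\infty\le(\sqrt n\,\|B\|_\infty)^n$. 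For $E$, restrict the relation $EB=H$ to an invertible $n\times n$ block $B_0$ of rows: then $E$ agrees on those rows with $H_0B_0^{-1}$, and Cramer's rule together with the minor bounds above yields integer entries of magnitude at most $(\sqrt n\,\|B\|_\infty)^n$; the remaining rows of $E$ (its action on the left kernel of $B$) can be taken within the same bound, giving $\|E\|_\infty\le(\sqrt n\,\|B\|_\infty)^n$.

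The hard part --- and the reason the statement is quoted from \cite{Storjohann2000} rather than derived from the naive elimination above --- is the running time $\widetilde O(mn^{\theta}\|B\|_\infty)$: executed verbatim, the Euclidean row operations let the intermediate integers blow up, so the straightforward algorithm is far from the stated cost. The remedy is Storjohann's fraction-free and modular scheme: carry out the eliminations modulo a precomputed multiple of an $n\times n$ minor of $B$ (bounded as above), batch the underlying linear algebra into blocks solved by fast matrix multiplication in $O(n^{\theta})$ ring operations each, and recover the true $H$ and $E$ at the end from their entry bounds. I would simply cite this. The genuine obstacle in a self-contained proof is therefore not existence or the entry bounds, which are elementary, but the bookkeeping that keeps every intermediate quantity polynomially bounded while still attaining the claimed $O(n^{\theta})$-type arithmetic cost --- precisely what \cite{Storjohann2000} establishes.
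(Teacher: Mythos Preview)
The paper does not prove this theorem at all; it is quoted verbatim from \cite{Storjohann2000} and used as a black box in the subsequent Lemma~\ref{lm-svp3}. Your plan --- sketch the elementary existence and the Hadamard-type entry bounds, then defer to Storjohann for the $\widetilde O(mn^{\theta}\|B\|_\infty)$ running time --- is therefore already more than the paper provides, and citing the reference for the complexity is exactly what the paper does.

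One point worth tightening in your sketch: you build $E$ by \emph{row} operations (giving $EB=H$), but then argue that ``$H$ and $B$ generate the same lattice, so there is a unimodular $U$ with $H=BU$'' in order to bound the pivots. Row operations do not preserve the column lattice, so these two descriptions of $H$ are not simultaneously available from your construction; the standard column HNF is obtained by right-multiplication $H=BU$ with $U\in GL_n(\Z)$, and it is that relation which both preserves the lattice and yields $\prod_i h_{f(i),i}=\pm\det B_0$. The existence of an integer $m\times m$ matrix $E$ with $EB=H$, as the paper states it, is a separate (and somewhat nonstandard) assertion that needs its own argument rather than falling out of the row-reduction you describe. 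This tension is as much a quirk of the paper's formulation as a gap in your outline, but the two halves of your argument should be reconciled before you rely on the pivot-divides-determinant step.
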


We now give a bound for $a_i$ in \bref{eq-svp}.
\begin{lem}\label{lm-svp3}
Let $\overline{{\bf v}} $ be the shortest vector in $L$.
Then there exist $a_1,\ldots, a_n \in \Z$
such that $\overline{{\bf v}} = \sum_{i=1}^n a_i{\bf b}_i$ and $|a_i| \le b_B$,
where $b_B = n \sqrt{m}||B||_\infty ((\sqrt{n}||B||_{\infty})^n+1)^{n+1}$.
\end{lem}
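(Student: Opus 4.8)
The plan is to pass to the Hermite normal form $H$ of $B$, bound the coefficients of $\overline{\bf v}$ in the basis given by the columns of $H$ using Lemma \ref{lm-svp2}, and then transfer this bound back to the original basis ${\bf b}_1,\ldots,{\bf b}_n$ via the transformation matrix $E$ with $EB=H$ from Theorem \ref{th-hnfc}. The shortest vector $\overline{\bf v}$ lies in the lattice $L$, which is also generated by ${\bf h}_1,\ldots,{\bf h}_n$, so we may write $\overline{\bf v} = \sum_{i=1}^n c_i {\bf h}_i$ with $c_i\in\Z$.

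\emph{Step 1: Bound the coefficients in the HNF basis.} I would argue by downward induction on the index, exploiting the staircase structure of $H$. By Lemma \ref{lm-svp2}, $v_{f(n)} = c_n h_{f(n),n}$, so $|c_n| \le \|\overline{\bf v}\|_\infty \le \sqrt{m}\,\|B\|_\infty$ by Lemma \ref{lm-svp1}. For the inductive step: having bounded $c_n,\ldots,c_{k+1}$, the entry in row $f(k)$ of $\overline{\bf v} - \sum_{i>k} c_i {\bf h}_i$ equals $c_k h_{f(k),k}$ (since $h_{f(k),i}=0$ for $i<k$ by the strictly-increasing property of $f$, and we have subtracted off the higher terms). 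Hence $|c_k| \le |v_{f(k)}| + \sum_{i>k}|c_i|\,\|H\|_\infty$. Using $\|H\|_\infty \le (\sqrt{n}\|B\|_\infty)^n$ and $|v_{f(k)}| \le \sqrt{m}\|B\|_\infty$, a routine induction gives $|c_k| \le \sqrt{m}\,\|B\|_\infty\,((\sqrt{n}\|B\|_\infty)^n+1)^{n-k} \le \sqrt{m}\,\|B\|_\infty\,((\sqrt{n}\|B\|_\infty)^n+1)^{n}$ for all $k$.

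\emph{Step 2: Transfer to the original basis.} Since $EB=H$, the columns satisfy ${\bf h}_j = \sum_{l} E_{?}\cdots$ — more precisely each ${\bf h}_j$ is an integer combination of the rows of $B$ is not quite what we want; rather, the lattice identity $L(H)=L(B)$ means there is a unimodular-type relation expressing the ${\bf h}_i$ in terms of the ${\bf b}_i$. Concretely, since both are bases of the same rank-$n$ lattice, there is a matrix $M\in\Z^{n\times n}$ with ${\bf h}_i = \sum_j M_{ji}{\bf b}_j$, and one checks $|M_{ji}| \le \|E\|_\infty \cdot(\text{something polynomial})$; the cleanest route is to note $\overline{\bf v} = Hc = EBc$, so writing $a = $ (the coefficient vector in the ${\bf b}_i$ basis) we get $Ba = \overline{\bf v}$, and comparing with $B(Ec)\cdot$ — since $B$ has full column rank, $a$ is uniquely determined and $\|a\|_\infty \le \|E\|_\infty \cdot n \cdot \|c\|_\infty$ by the relation $a = (\text{submatrix of }E)\,c$ after restricting to the $n$ pivot rows $f(1),\ldots,f(n)$. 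Combining with Step 1 and $\|E\|_\infty \le (\sqrt{n}\|B\|_\infty)^n$ yields $|a_i| \le n\sqrt{m}\,\|B\|_\infty\,((\sqrt{n}\|B\|_\infty)^n+1)^{n+1} = b_B$.

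\emph{Expected main obstacle.} The delicate point is Step 2: carefully justifying the passage from the HNF coefficient vector $c$ back to a coefficient vector $a$ over the original basis, and tracking the extra factors of $n$ and $\|E\|_\infty$ so that the final bound matches $b_B = n\sqrt{m}\|B\|_\infty((\sqrt{n}\|B\|_\infty)^n+1)^{n+1}$ exactly. One must be careful that $B$ need not be square, so "inverting" $B$ means restricting to the $n$ pivot rows indexed by $f$ and using that the resulting $n\times n$ integer matrix, combined with $E$, gives the transformation; alternatively, one argues directly that $a$ satisfies $a = E' c$ where $E'$ is the $n\times n$ matrix whose rows are the $f(1),\ldots,f(n)$ rows of $E$ times the pivot structure, and bounds $\|E'\|_\infty$ entrywise. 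The induction in Step 1 is mechanical once the staircase cancellation is set up correctly.
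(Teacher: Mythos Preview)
Your approach is essentially the same as the paper's: pass to the HNF, bound the HNF-basis coefficients $c_i$ by downward induction on the staircase structure (the paper tracks the partial sums $\overline{\bf v}_j=\sum_{i\le j}c_i{\bf h}_i$ rather than your direct recursion $|c_k|\le |v_{f(k)}|+\|H\|_\infty\sum_{i>k}|c_i|$, but to the same effect and the same bound), and then transfer back to the ${\bf b}_i$-basis via the transformation matrix and its norm bound from Theorem~\ref{th-hnfc}. Your flagged concern about Step~2 is apt --- the paper's own argument there is equally terse, simply writing $(a_1,\ldots,a_n)=(c_1,\ldots,c_n)E$ and concluding $|a_i|\le n\max_i|c_i|\cdot\|E\|_\infty$.
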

\begin{proof}
Let $H=EB$ be the HNF of $B$ and ${\bf h}_i$ the $i$-th column of $H$, $1\le i\le n$.
Then there exist $c_i, 1\le i\le n$, such that $\overline{{\bf v}} = \sum_{i=1}^n c_i {\bf h}_i$.
Denote $e_1=\sqrt{m}||B||_\infty$ and $e_2=(\sqrt{n}||B||_{\infty})^n+1$.
We claim that $|c_i| \le e_1(e_2+1)^n$.
Let $\overline{\bf v}_{i}=c_1 {\bf h}_{1}+\cdots+c_{i} {\bf h}_{i}$
for $i=1,\ldots,n$.
We prove the claim by proving $||\overline{\bf v}_{i}||_\infty \le e_1(e_2+1)^{n-i}$
and $|c_i|\le e_1(e_2+1)^{n-i}$ by induction for $i=n,n-1,\ldots,i$.
By Lemma \ref{lm-svp2}, the second inequality comes from the first one: $|c_i|\le||\overline{\bf v}_{i}||_\infty \le e_1(e_2+1)^{n-i}$,
since $[{\bf h}_1,\ldots,{\bf h}_i]$ is also an HNF.
By Lemma \ref{lm-svp2}, $|c_n| \le e_1$ and the case of $i=n$ is true.
Suppose the claim is true for $i=n,\ldots,j+1$.
By Lemma \ref{lm-svp1}, we  have  $||{\bf h}_{j+1}||_\infty\le e_2$.
Since $\overline{\bf v}_{j}= \overline{\bf v}_{j+1} - c_{j+1} {\bf h}_{j+1}$,
we have $\|\overline{\bf v}_{j}\|_\infty \le
||\overline{\bf v}_{j+1}||_\infty + |c_{j+1}| ||{\bf h}_{j+1}||_\infty
\le e_1(e_2+1)^{n-j-1} + e_1(e_2+1)^{n-j-1} e_2 = e_1(e_2+1)^{n-j}$.
The claim is proved.

We have $\overline{{\bf v}} = \sum_{i=1}^n c_i {\bf h}_i = (c_1,\ldots, c_n) H = (c_1,\ldots, c_n)EB =a_1{\bf b}_1 +\cdots+a_n{\bf b}_n$. Then $(a_1,\dots, a_n) = (c_1,\ldots, c_n)E$,
and hence $|a_i| \le n\max_i |c_i|\cdot ||E||_\infty \le ne_1(e_2+1)^n e_2
\le ne_1(e_2+1)^{n+1}$ by Theorem~\ref{th-hnfc}.
\end{proof}

By the above lemma, we can rewrite the SVP as the standard form  \bref{eq-op}:
\begin{eqnarray}\label{eq-svpnew}
\min_{{\bf v}\in \Z^m, {\bf a}\in\Z^n} && o=\|{\bf v}\|_2^2-1   \quad \hbox{ subject to } 0 \le o < m||B||_\infty^2\cr
&&
 {\bf v} = a_1{\bf b}_1+\cdots+a_n{\bf b}_n, \cr
 && 0\le a_i + b_B \le 2 b_B, \quad 1\le i\le n, \cr
 && 0 \le v_i + \sqrt{m}||B||_\infty \le 2\sqrt{m}||B||_\infty , \quad 1\le i\le m\nonumber
\end{eqnarray}
where $b_B$ is given in Lemma \ref{lm-svp3},
and the arguments are ${\bf v} = (v_1,\ldots, v_m)$
and ${\bf a} = (a_1,\ldots, a_n)$.

Note that, the above problem is already an MQ,
so we just need to change the variables to Boolean variables as follows by using Lemma \ref{lm-bb}.
\begin{eqnarray}\label{eq-svpbool}
 &&a_i = \theta_{2b_B}(A_{i,0},\ldots,A_{i,\lfloor\log_2(2b_B)\rfloor}) - b_B, \quad 1\le i\le n, \cr
&& v_i =\theta_{2\sqrt{m}||B||_\infty}(V_{i,0},\ldots,V_{i,\lfloor\log_2(2\sqrt{m}||B||_\infty)\rfloor}) - \sqrt{m}||B||_\infty, \quad 1\le i\le m
\end{eqnarray}
where  and $A_{i,j}, V_{i,j}$ are Boolean variables.

Then we can use {Algorithm \ref{alg-opt}} to solve the  problem for $u = m||B||_\infty^2$ in the input.
For the objective function $o$, we denote by $\overline{o}$ the
Boolean function obtain from $o$ by replacing the $v_i$ by the above equation~\bref{eq-svpbool}.
Let
\begin{eqnarray}\label{eq-ooo}
\delta_{\alpha\beta}(\overline{o})
&=&
\alpha+\sum_{j=0}^{\beta-1}C_j2^j-\overline{o} +1,\\
L_{\alpha\beta} &=& \CS\cup\{\delta_{\alpha\beta}(\overline{o})\},\nonumber
\end{eqnarray}
where $\CS$ is obtained from ${\bf v} = \sum_{i=1}^n a_i{\bf b}_i$ by replacing the $a_i, v_i$ by the equation~\bref{eq-svpbool}.
%
%
\begin{prop}\label{prop-svp1}
There exists a quantum algorithm to solve the SVP  with probability $\ge 1-\epsilon$ and in time
$\widetilde O(m(n^{7.5} + m^{2.5})(n^3+\log h)\log^{4.5}  h \kappa^2\log\frac{1}{\epsilon})$,
%
where $h=||B||_\infty$ and $\kappa$ is the maximal condition number of $L_{\alpha\beta}$.
 \end{prop}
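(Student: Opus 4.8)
The plan is to follow the pattern of Propositions~\ref{cor-sis2} and~\ref{prop-sis1}: realize the SVP as the standard optimization problem~\bref{eq-svpnew}, expand every integer argument into Boolean variables through~\bref{eq-svpbool}, run {Algorithm~\ref{alg-opt}} on the resulting MQ family $L_{\alpha\beta}$ of~\bref{eq-ooo} with input bound $u=m\|B\|_\infty^2$, and then bound $N_{L_{\alpha\beta}}$, $T_{L_{\alpha\beta}}$ and the number of loops so that the cost drops out of {Theorem~\ref{th-m2}}. For correctness, {Lemma~\ref{lm-svp1}} gives $|\overline v_i|\le\sqrt m\|B\|_\infty$ and $\|\overline{\bf v}\|_2^2\le m\|B\|_\infty^2$ for the shortest vector $\overline{\bf v}$, and {Lemma~\ref{lm-svp3}} gives integers $a_i$ with $|a_i|\le b_B$ and $\overline{\bf v}=\sum_i a_i{\bf b}_i$; hence the feasible set of~\bref{eq-svpnew} is nonempty and contains a minimizer, $o=\|{\bf v}\|_2^2-1$ takes values in $[0,u)$ there, and over $\Z$ the constraint $o\ge 0$ is exactly ${\bf v}\ne{\bf 0}$. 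Since by {Lemma~\ref{lm-bb}} the $\theta$-functions in~\bref{eq-svpbool} are surjective onto $\{-b_B,\dots,b_B\}$ and (after rounding $\sqrt m\|B\|_\infty$ up) onto a superset of the admissible range for each $v_i$, the box constraints need no extra equations, $L_{\alpha\beta}\subset\Z[\Z_\bit]\subset\C[\Z_\bit]$ is an MQ, and its Boolean solutions are precisely the feasible points of~\bref{eq-svpnew} with $o\in[\alpha,\alpha+2^\beta)$. The output correctness and the success probability $\ge 1-\epsilon$ then follow verbatim from {Theorem~\ref{th-opt1}}, once one observes that evaluating $b_B$ from its closed form and assembling $L_{\alpha\beta}$ cost only polynomially in the bit-size and are dominated by the quantum step.

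It remains to do the parameter count. Write $h=\|B\|_\infty$. From the closed form $b_B=n\sqrt m\,h\,((\sqrt n\,h)^n+1)^{n+1}$ one gets $\log_2 b_B=O(n^2\log n+n^2\log h)=\widetilde O(n^2\log h)$, so each $a_i$ needs $\widetilde O(n^2\log h)$ Boolean variables, each $v_i$ needs $\widetilde O(\log h)$, and the bisection variables $\F_\bit$ number at most $\log_2 u=\widetilde O(\log h)$; hence $N_{L_{\alpha\beta}}=\widetilde O((n^3+m)\log h)$. For the sparseness, each of the $m$ constraints $v_i-\sum_k B_{ki}a_k=0$ has $\widetilde O(n\cdot n^2\log h)=\widetilde O(n^3\log h)$ terms after the substitution (the scalars $B_{ki}$ do not increase term counts), while $\overline o=\sum_i v_i^2-1$ contributes $\widetilde O(m\log^2 h)$ terms since each $v_i$ is a linear form in $\widetilde O(\log h)$ Booleans; thus $T_{L_{\alpha\beta}}=\widetilde O\!\big(m(n^3+\log h)\log h\big)$. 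Now $N_{L_{\alpha\beta}}^{2.5}=\widetilde O\!\big((n^{7.5}+m^{2.5})\log^{2.5}h\big)$ and $N_{L_{\alpha\beta}}+T_{L_{\alpha\beta}}=\widetilde O\!\big(m(n^3+\log h)\log h\big)$, so by {Theorem~\ref{th-m2}} the cost of one execution of {\bf QBoolSol} in Step~4 is $\widetilde O\!\big(m(n^{7.5}+m^{2.5})(n^3+\log h)\log^{3.5}h\,\kappa^2\log(1/\epsilon)\big)$, and multiplying by the $\log_{4/3}u=\widetilde O(\log h)$ loops performed by {Algorithm~\ref{alg-opt}} gives the claimed bound.

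The main obstacle is controlling the exponential-in-$n$ size of the coefficients $a_i$: since $|a_i|$ may be as large as $(\sqrt n\,\|B\|_\infty)^{\Theta(n^2)}$, its binary expansion alone injects $\Theta(n^3)$ Boolean variables and $\Theta(n^3)$ terms into each of the linear constraints, which is exactly what produces the $n^{7.5}$ and the $n^3$ factors in the final complexity. The delicate point is therefore to check that {Lemma~\ref{lm-svp3}} is tight enough that the estimate $\log b_B=\widetilde O(n^2\log h)$ actually suffices, and that nothing else — the HNF entries invoked in that lemma, the lattice entries $B_{ki}$, or the squared terms in $\overline o$ — inflates $N_{L_{\alpha\beta}}$ or $T_{L_{\alpha\beta}}$ beyond the bounds above. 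Granting those estimates, everything else is the routine plug-in of the machinery developed in Sections~2--5.
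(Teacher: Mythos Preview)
Your proof is correct and follows essentially the same approach as the paper: you realize SVP as the optimization~\bref{eq-svpnew}, count $N_{L_{\alpha\beta}}=\widetilde O((n^3+m)\log h)$ and $T_{L_{\alpha\beta}}=\widetilde O(m(n^3+\log h)\log h)$ exactly as the paper does, and then plug into {Theorem~\ref{th-m2}} with the $\widetilde O(\log h)$ loop factor from {Algorithm~\ref{alg-opt}}. Your treatment is in fact more explicit on correctness (nonemptiness of the feasible set via Lemmas~\ref{lm-svp1} and~\ref{lm-svp3}, equivalence of $o\ge 0$ and ${\bf v}\ne{\bf 0}$), which the paper's proof leaves implicit.
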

\begin{proof}
The numbers of $\{A_{i,j}\}$, $\{V_{i,j}\}$, and $\{C_{j}\}$
in \bref{eq-svpbool} and \bref{eq-ooo}
are $n\log_2 (2b_B)$, $m \log_2 (\sqrt{m}||B||_\infty)$ and $\log_2 ({m}||B||_\infty^2)$,
respectively.
So,
$N_{L_{\alpha\beta}} =
n\log_2 (2b_B) + m \log_2 (\sqrt{m}||B||_\infty) + \log_2 ({m}||B||_\infty^2)
=\widetilde O(n^3\log h + m\log h)$.
The total sparseness of $\CS$ is $O(m (\log(\sqrt m\|B\|_\infty) + n\log (2b_B)))$
and the total sparseness of $\delta_{\alpha\beta}(\overline{o})$
is $\log_2 ({m}||B||_\infty^2)+ m\log^2(\sqrt{m}||B||_\infty)$.
So the
total sparseness of $L_{\alpha\beta}$ is
$T_{L_{\alpha\beta}}= m (n\log (2b_B)+\log(\sqrt m\|B\|_\infty))+ m\log^2(\sqrt{m}||B||_\infty)+\log_2 ({m}||B||_\infty^2)
=\widetilde O(mn^3\log h + m\log^2 h)$.
By Theorem~\ref{th-m2}, the SVP can be solved in time
$\widetilde O(N_{L_{\alpha\beta}}^{2.5}T_{L_{\alpha\beta}}\kappa^2\log\frac{1}{\epsilon} \log ({m}||B||_\infty^2))=
\widetilde O(m(n^{7.5} + m^{2.5})(n^3+\log h)\log^{4.5}  h \kappa^2\log\frac{1}{\epsilon})$.
\end{proof}

The CVP can be solved similar to SVP, where the only difference is
that the objective function is $o=\|{\bf v} - {\bf b_0}\|_2^2-1 < m(||B||_\infty+||{\bf b_0}||_\infty)^2$. Similar to Proposition \ref{prop-svp1}, we have
\begin{prop}
There exists a quantum algorithm to solve the CVP  with probability $\ge 1-\epsilon$ and in time
$\widetilde O(m(n^3\log h + m\log h + \log h_0)^{2.5}(n^3\log h+\log^2(h+h_0))  \kappa^2\log\frac{1}{\epsilon})$,
%
where $h=||B||_\infty$, $h_0= ||{\bf b}_0||_\infty$,  and $\kappa$ is the condition number of the problem.
 \end{prop}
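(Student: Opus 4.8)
The plan is to mimic the proof of Proposition \ref{prop-svp1} almost verbatim, changing only the objective function and the a priori bounds on the unknowns, and then to substitute the resulting parameters into Theorem \ref{th-m2} exactly as in the SVP case. The only genuinely new ingredient is the a priori bound on the coefficients $a_i$ of the closest vector.

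First I would fix the bounds. Since $\mathbf 0$ lies in the lattice $L$, the closest vector $\overline{\mathbf v}$ to $\mathbf b_0$ satisfies $\|\overline{\mathbf v}-\mathbf b_0\|_2\le\|\mathbf b_0\|_2\le\sqrt m\,h_0$ with $h_0=\|\mathbf b_0\|_\infty$, hence by the triangle inequality $\|\overline{\mathbf v}\|_\infty\le\|\overline{\mathbf v}\|_2\le 2\sqrt m\,h_0$, and the objective obeys $0\le o=\|\mathbf v-\mathbf b_0\|_2^2-1<m(\|B\|_\infty+h_0)^2$; one therefore feeds $u=m(h+h_0)^2$ into Algorithm \ref{alg-opt}. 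For the coefficients I would re-run the Hermite-normal-form argument of Lemma \ref{lm-svp3} verbatim, using $\|\overline{\mathbf v}\|_\infty\le 2\sqrt m\,h_0$ in place of the bound $\sqrt m\,\|B\|_\infty$ coming from Lemma \ref{lm-svp1}: writing $H=EB$ for the HNF of $B=[\mathbf b_1,\ldots,\mathbf b_n]$ and $\overline{\mathbf v}=\sum_i c_i\mathbf h_i$, Lemma \ref{lm-svp2} together with the bounds $\|H\|_\infty,\|E\|_\infty\le(\sqrt n\,\|B\|_\infty)^n$ of Theorem \ref{th-hnfc} gives $|c_i|\le 2\sqrt m\,h_0\,(e_2+1)^{n-i}$ with $e_2=(\sqrt n\,\|B\|_\infty)^n+1$, and hence $\overline{\mathbf v}=\sum_i a_i\mathbf b_i$ with $|a_i|\le b':=2n\sqrt m\,h_0\,\big((\sqrt n\,\|B\|_\infty)^n+1\big)^{n+1}$, so that $\log b'=\widetilde O(n^2\log h+\log h_0)$.

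Next I would rewrite CVP in the standard form \bref{eq-op} exactly as in \bref{eq-svpnew}, with objective $\|\mathbf v-\mathbf b_0\|_2^2-1$, constraints $0\le o<m(h+h_0)^2$, $\mathbf v=\sum_i a_i\mathbf b_i$, $0\le a_i+b'\le 2b'$, and $0\le v_i+2\sqrt m\,h_0\le 4\sqrt m\,h_0$. This is already an MQ, so, as in \bref{eq-svpbool}, I substitute $a_i=\theta_{2b'}(\cdot)-b'$ and $v_i=\theta_{4\sqrt m\,h_0}(\cdot)-2\sqrt m\,h_0$ using Lemma \ref{lm-bb}, collect the equalities $\mathbf v=\sum_i a_i\mathbf b_i$ into $\CS$, and form $L_{\alpha\beta}=\CS\cup\{\delta_{\alpha\beta}(\overline o)\}$ as in \bref{eq-ooo} with $\overline o$ the Boolean polynomial obtained from $\|\mathbf v-\mathbf b_0\|_2^2$. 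Counting Boolean variables, the $A_{i,j}$ contribute $n\log_2(2b')$, the $V_{i,j}$ contribute $m\log_2(4\sqrt m\,h_0)$, and the $C_j$ contribute $\log_2(m(h+h_0)^2)$, giving $N_{L_{\alpha\beta}}=\widetilde O(n^3\log h+m\log h+\log h_0)$; the total sparseness of $\CS$ is $O\big(m(n\log(2b')+\log(\sqrt m\,h_0))\big)$ and that of $\delta_{\alpha\beta}(\overline o)$ is $\widetilde O(\log(m(h+h_0)^2)+m\log^2(\sqrt m\,h_0))$, so $T_{L_{\alpha\beta}}=\widetilde O\big(m(n^3\log h+\log^2(h+h_0))\big)$. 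Since $N_{L_{\alpha\beta}}<T_{L_{\alpha\beta}}$ and Algorithm \ref{alg-opt} runs at most $\log_{4/3}u=\widetilde O(\log(h+h_0))$ loops, Theorem \ref{th-m2} yields the total complexity $\widetilde O\big(N_{L_{\alpha\beta}}^{2.5}T_{L_{\alpha\beta}}\kappa^2\log\tfrac1\epsilon\,\log u\big)=\widetilde O\big(m(n^3\log h+m\log h+\log h_0)^{2.5}(n^3\log h+\log^2(h+h_0))\kappa^2\log\tfrac1\epsilon\big)$.

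The main obstacle is precisely the coefficient bound of the second paragraph: in SVP a smallness witness could be taken to be a basis vector, whereas for CVP one must first bound $\|\overline{\mathbf v}\|_\infty$ via $\mathbf 0\in L$ and the triangle inequality and only then push it through the HNF estimates; to keep the dependence on $h_0$ additive (as in the stated bound) rather than letting it multiply $n$ or $m$, it is cleanest to first replace $\mathbf b_0$ by its reduction modulo the HNF columns of $B$, whose entries are bounded by $(\sqrt n\,\|B\|_\infty)^n$, and translate the answer back by the recorded lattice vector at the end. After this point everything is a routine repetition of the SVP argument. I would also remark, as in the SVP case, that since all arguments range over $\Z$ rather than $\F_p$, no centering beyond the fixed shifts used in \bref{eq-svpbool} is needed and the global-minimum subtlety of Lemma \ref{lm-sis1} does not arise.
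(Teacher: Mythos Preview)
Your approach is the same as the paper's, which literally gives no proof beyond ``similar to Proposition \ref{prop-svp1}'' and the remark that only the objective function changes; you have filled in the details the paper omits, and in particular you correctly isolate the one new ingredient, namely the a priori bound on the coefficients $a_i$ via the witness $\mathbf 0\in L$ pushed through the HNF estimates of Lemma \ref{lm-svp3}.

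One arithmetic point to tighten: in your variable count you write that the $A_{i,j}$ contribute $n\log_2(2b')$ and the $V_{i,j}$ contribute $m\log_2(4\sqrt m\,h_0)$, and then simplify this to $\widetilde O(n^3\log h+m\log h+\log h_0)$. That simplification does not follow from your direct bounds: with $\log b'=\widetilde O(n^2\log h+\log h_0)$ the $A$-block already gives an $n\log h_0$ term, and the $V$-block gives $m\log h_0$, so you land at $\widetilde O(n^3\log h+(n+m)\log h_0)$ rather than an additive $\log h_0$. The fix is exactly the HNF-reduction of $\mathbf b_0$ that you mention in your last paragraph as an optional cleanup; it is not optional if you want the paper's stated dependence on $h_0$, so you should perform it \emph{before} the variable count rather than after. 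Once you reduce $\mathbf b_0$ modulo $H$ and record the translating lattice vector, the rest of your argument goes through and matches the paper's one-line proof.
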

%
%
%

\section{Quantum algorithm to recover the private key for NTRU}
In this section, we will give a quantum algorithm
to recover the private key of NTRU from its known public key.

The {\rm NTRU} cryptosystem depends on three integer parameters $(N,p,q)$ and two sets $\mathcal{L}_{f},\mathcal{L}_{g}$
of polynomials in $\Z[X]$ with degree $N-1$. Note that $p$ and $q$ need not to be prime, but we will assume that $\gcd(p,q)=1$, and $q$ is always  considerably larger than $p$.
Denote $\Z_k$  to be the ring $\Z/(k) = \{0,1\ldots,k-1\}$  for any $k\in\Z_{>0}$.
We work in the ring $R=\Z[X]/(X^N-1)$.
An element $F\in R$ will be written as a polynomial or a vector,
\begin{equation}
F=\sum_{i=0}^{N-1}F_ix^i=(F_0,F_1,\ldots,F_{N-1})^\tau.
\end{equation}
Given two positive integers $d_f$ and $d_g$, we set
\begin{eqnarray}
\mathcal L_f&=&\{f\in R\,|\,f\text{ has }d_f \text{ coefficients }1,\  d_f-1\text{ coefficients } -1\text{, and the rest }0\},\\
\mathcal L_g&=&\{g\in R\,|\,g\text{ has }d_g \text{ coefficients }1,\  d_g\text{ coefficients } -1\text{, and the rest }0\}.
\end{eqnarray}
Let $f\in L_f$ be invertible both $\pmod p$ and $\pmod q$.
The private key for {\rm NTRU} is $f$ and the public key is $h=gf^{-1}\pmod q$ for some $g\in \mathcal L_g$.
A set of parameters could be
$(N,p,q) = (107, 3, 64)$, $d_f=15$, and $d_g = 12$ \cite{NTRU}.

We need to find $f$ from $h$.
We will reduce this problem to an equation solving problem over the finite rings
$\Z_p$ and $\Z_q$.
Set $f=(f_0,\ldots,f_{N-1})^\tau$, $g=(g_0,\ldots,g_{N-1})^\tau$, $f^{-1}\pmod p=\mathbf p=(p_0,\ldots,p_{N-1})^\tau$, $f^{-1}\pmod q=\mathbf q=(q_0,\ldots,q_{N-1})^\tau$, and $h=(h_0,\ldots,h_{N-1})^\tau$.
Thus, we have the following equations:
\begin{eqnarray}
f\in\mathcal L_f&\Longleftrightarrow& 
  2d_f = \sum_{i=0}^{N-1} f_i^2 +1, 
  \sum_{i=0}^{N-1}f_i=1\text{ and each }f_i^3-f_i=0;\\
g\in\mathcal L_g&\Longleftrightarrow&
  2d_g= \sum_{i=0}^{N-1} g_i^2,
  \sum_{i=0}^{N-1}g_i=0\text{ and each }g_i^3-g_i=0;\\
h=gf^{-1}\pmod q&\Longleftrightarrow&\sum_{j+k=i,i+N}h_jf_k\equiv g_i\pmod q \text{ for } i=0,\ldots,N-1;\\
f^{-1}\pmod q\text{ exists}&\Longleftrightarrow&\sum_{j+k=i,i+N}q_jf_k\equiv\delta_{0i}\pmod q \text{ for } i=0,\ldots,N-1;\\
f^{-1}\pmod p\text{ exists}&\Longleftrightarrow&\sum_{j+k=i,i+N}p_jf_k\equiv\delta_{0i}\pmod p\text{ for } i=0,\ldots,N-1,
\end{eqnarray}
where $\delta_{0i}=1$ for $i=0$ and $\delta_{0i}=0$ for $i\ne0$.
Let $\X=\{f_i,g_i,h_i,p_i,q_i\,|\,i=0,\ldots,N-1\}$, and
\begin{eqnarray}
\FS_1&=&\{2d_f = \sum_{i=0}^{N-1} f_i^2 +1, 2d_g= \sum_{i=0}^{N-1} g_i^2,\cr
&&\sum_{i=0}^{N-1}f_i-1,\sum_{i=0}^{N-1}g_i,f_i^3-f_i, g_i^3-g_i,i=0,\ldots,N-1\}\subset\C[f,g],\\
\FS_2&=&\{\sum_{j+k=i,i+N}h_jf_k-g_i,\sum_{j+k=i,i+N}q_jf_k-\delta_{0i}\,|\,
i=0,\ldots,N-1\}\subset\Z_q[f,g,h,\mathbf q],\\
\FS_3&=&\{\sum_{j+k=i,i+N}p_jf_k-\delta_{0i}\,|\,i=0,\ldots,N-1\}\subset\Z_p[f,g,h,\mathbf p].
\end{eqnarray}

Note that $\FS_1,\FS_2,\FS_3$ are over $\C$, $\Z_q$, $\Z_p$, respectively.
We can modify the method given in Section \ref{ss-fp} to solve the equation system
$\FS_1=\FS_2=\FS_3=0$.

We first give a simpler treatment for $\FS_1$.
Let $\Z_\bit=\{F_{i1},F_{i2},G_{i1},G_{i2},i=0,\ldots,N-1\}$ be Boolean variables,
$f_i=F_{i1}+F_{i2}-1$ and $g_i=G_{i1}+G_{i2}-1$.
Then, the constraints $f_i^3=f_i$ and $g_i^3=g_i$ are automatically satisfied.
When $F_{i1}=0,F_{i2}=1$ and $F_{i1}=1,F_{i2}=0$, we both have
$f_i=0$. To avoid this redundance, we add an extra equations $F_{i1}F_{i2}-F_{i2}$.
We have $2d_f=\sum_{i=0}^{N-1}f_i^2+1=\sum_{i=0}^{N-1}(F_{i1}+F_{i2}-1)^2+1=
\sum_{i=0}^{N-1}(F_{i1}-F_{i2})+N+1\pmod{(F_{i1}^2-F_{i1},
F_{i2}^2-f_{i2},F_{i1}F_{i2}-F_{i2})}$.
Similarly, $d_g=\sum_{i=0}^{N-1}(G_{i1}-G_{i2})+N \pmod{(G_{i1}^2-G_{i1},
G_{i2}^2-G_{i2},G_{i1}G_{i2}-G_{i2})}$.
Then, $\FS_1$ is equivalent to
\begin{eqnarray}
\FS_{11}&=&\{
\sum_{i=0}^{N-1}(F_{i1}-F_{i2})+N+1 -2d_f,
\sum_{i=0}^{N-1}(G_{i1}-G_{i2})+N-2d_g,\cr
&&\sum_{i=0}^{N-1} (F_{i1}+F_{i2}-1)-1,
\sum_{i=0}^{N-1}(G_{i1}+G_{i2}-1),\cr
&& F_{i1}F_{i2}-F_{i2},G_{i1}G_{i2}-G_{i2},
i=0,\ldots,N-1,\subset\C[\F_\bit]\},
\end{eqnarray}
where $\F_\bit=\{F_{ij},G_{ij}\,|\,i=0,\ldots,N-1;\,j=1,2\}$.
%
%

We can compute  $B(\FS_2)\subset\Z_p[\X_\bit]$ and $B(\FS_3)\subset\Z_p[\X_\bit]$ defined in \bref{eq-s1}
by setting   $q_i=\theta_{q-1}(Q_{i0},\ldots,Q_{i\lfloor\log_2(q-1)\rfloor})$ and $p_i=\theta_{p-1}(P_{i0},\ldots,P_{i\lfloor\log_2(p-1)\rfloor})$, where
\begin{eqnarray*}
\X_\bit&=&\{F_{i1},F_{i2},G_{i1},G_{i2}\,|\,i=0,\ldots,N-1\}\cup\cr
&&\{P_{ij}\,|\,i=0,\ldots,N-1,j=0,\ldots,\lfloor\log_2(p-1)\rfloor\}\cup\cr
&&\{Q_{ij}\,|\,i=0,\ldots,N-1,j=0,\ldots,\lfloor\log_2(q-1)\rfloor\}
\end{eqnarray*}
Note that $\FS_2$ and $\FS_3$ are already MQ, we can compute $P(\FS_2)$ and $P(\FS_3)$ as in \bref{eq-P}.
Therefore, we can use algorithm {\bf QBoolSol} to find a Boolean solution
$\check{\X}$ for
$$\FS_{\rm NTRU}=\FS_{11}\cup P(\FS_2)\cup P(\FS_3)\subset\C[\X_\bit].$$
Finally set $\check f_i=\check F_{i1}+\check F_{i2}-1$, and we have a possible private key $\check f=(\check{f}_0,\ldots,\check{f}_{N-1})$.
%

\begin{prop}\label{p-nt1}
There is a quantum algorithm to obtain a   private key $f$ from the public key $h$ in time $\widetilde O(N^{4.5}\log^{4.5} q\kappa^2\log1/\epsilon)$ with probability $\ge 1-\epsilon$, where $\kappa$ is the condition number for $\FS_{\rm NTRU}$.
\end{prop}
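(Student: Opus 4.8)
The plan is to show that the Boolean polynomial system $\FS_{\rm NTRU}=\FS_{11}\cup P(\FS_2)\cup P(\FS_3)\subset\C[\X_\bit]$ constructed above is a faithful and consistent encoding of the key-recovery task, to bound its number of variables and its total sparseness, and then to run the B-POSSO solver {\bf QBoolSol} of Theorem \ref{th-m2} on it.

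First I would settle correctness. The relations displayed above show that $f\in\mathcal L_f$, $g\in\mathcal L_g$, $h=gf^{-1}\pmod q$, and the existence of $f^{-1}\bmod p$ and $\bmod q$ are together equivalent to $\FS_1=\FS_2=\FS_3=0$; hence the genuine data $(f,g,\mathbf p,\mathbf q,h)$ is a common zero and the system is consistent. Under $f_i=F_{i1}+F_{i2}-1$ and $g_i=G_{i1}+G_{i2}-1$ the cubic conditions become identities, and modulo the Boolean relations together with the redundancy-killing binomials $F_{i1}F_{i2}-F_{i2}$, $G_{i1}G_{i2}-G_{i2}$ the identity $\sum_i(F_{i1}+F_{i2}-1)^2\equiv\sum_i(F_{i1}-F_{i2})+N$ recorded in the text turns $\FS_1$ into $\FS_{11}$, so $\FS_{11}$ has the same zero set as $\FS_1$ on the Boolean cube and the original $f$ still has a (now unique) Boolean witness. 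Since $\FS_2,\FS_3$ are already MQ, the $\theta$-expansions $q_i=\theta_{q-1}(Q_{i,\cdot})$, $p_i=\theta_{p-1}(P_{i,\cdot})$ of Lemma \ref{lm-bb} and the reduction $P(\cdot)$ of \bref{eq-P} apply, and the proof of Lemma \ref{lm-crct} carries over verbatim with $p$ replaced by $q$ (respectively $p$), since it uses only that all intermediate values stay strictly below the modulus, never its primality; thus $q=2^6$ and $p=3$ cause no difficulty. Consequently any Boolean zero $\check\X$ of $\FS_{\rm NTRU}$ yields $\check f=(\check F_{i1}+\check F_{i2}-1)_i\in\mathcal L_f$, invertible modulo $p$ and $q$, with $h\check f\bmod q\in\mathcal L_g$, i.e.\ a valid NTRU decryption key; it need not equal the published $f$, but any such $\check f$ decrypts, which is all that is needed.

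Next I would bound the size of $\FS_{\rm NTRU}$. Its Boolean variables are the $4N$ symbols $F_{i1},F_{i2},G_{i1},G_{i2}$, the $O(N\log p)$ symbols $P_{ij}$, the $O(N\log q)$ symbols $Q_{ij}$, and the $\U$-variables introduced by $P(\FS_2),P(\FS_3)$ via Lemma \ref{lm-pf}, which number $\widetilde O(N)$; as $q>p$ this gives $N_{\FS_{\rm NTRU}}=\widetilde O(N\log q)$. For the total sparseness, $\FS_{11}$ contributes $O(N)$; each of the $3N$ cyclic-convolution equations of $\FS_2$ and $\FS_3$ is a sum of $\Theta(N)$ bilinear terms $q_jf_k$ (respectively $p_jf_k$, $h_jf_k$), each of which, after replacing the factors by their Boolean/$\theta$-representations and adding the $P$-padding $\theta_{t_i'}(\U_i)$, expands to at most $\widetilde O(\log^2 q)$ Boolean terms, so $T_{P(\FS_2)\cup P(\FS_3)}=\widetilde O(N^2\log^2 q)$. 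Hence $T_{\FS_{\rm NTRU}}=\widetilde O(N^2\log^2 q)\ge N_{\FS_{\rm NTRU}}$.

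Finally, applying Theorem \ref{th-m2} to $\FS_{\rm NTRU}$ with error parameter $\epsilon$ produces a Boolean zero, hence a private key $\check f$, with probability $\ge1-\epsilon$ in time
\[
\widetilde O\!\bigl(N_{\FS_{\rm NTRU}}^{2.5}(N_{\FS_{\rm NTRU}}+T_{\FS_{\rm NTRU}})\kappa^2\log1/\epsilon\bigr)
=\widetilde O\!\bigl((N\log q)^{2.5}\,N^2\log^2 q\,\kappa^2\log1/\epsilon\bigr)
=\widetilde O\!\bigl(N^{4.5}\log^{4.5}q\,\kappa^2\log1/\epsilon\bigr),
\]
with the extraction of $\check f$ negligible, and $\kappa$ the condition number of $\FS_{\rm NTRU}$. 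The step I expect to be the main obstacle is the sparseness accounting of the previous paragraph: one must verify that the cyclic-convolution structure costs only a factor $\Theta(N)$ per equation (so $T=\widetilde O(N^2\,\mathrm{polylog})$, which is what yields $N^{4.5}$ rather than a worse power), that the public coefficients $h_i$ enter as known constants and so do not inflate the term count, and that the $\theta$- and $P$-reductions through the composite modulus $q$ preserve both the solutions and the polylogarithmic size bounds.
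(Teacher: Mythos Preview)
Your proposal is correct and follows essentially the same approach as the paper: bound $N_{\FS_{\rm NTRU}}=\widetilde O(N\log q)$ and $T_{\FS_{\rm NTRU}}=\widetilde O(N^2\log^2 q)$, then apply Theorem~\ref{th-m2}. The paper's proof is terser---it simply states ``Only the complexity need to be considered'' and cites Lemma~\ref{lm-pf} and Corollary~\ref{cor-line} for the sparseness bounds---whereas you additionally spell out the correctness argument and the observation that Lemma~\ref{lm-crct} does not require the modulus to be prime, which the paper leaves implicit.
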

\begin{proof}
Only the complexity need to be considered. $T_{\FS_2}=2N^2+N+1$, $T_{\FS_3}=N^2+1$, $T_{\FS_{11}}=O(N)$. By {Lemma \ref{lm-pf}} and   {Corollary \ref{cor-line}}, $T_{P(\FS_2)}=O(N^2\log^2 q)$, $T_{P(\FS_3)}=O(N^2\log^2 p)$, and then we have $T_{\FS_{\rm NTRU}}=O(N^2(\log^2 q+\log^2 p))=O(N^2\log^2 q)$ and
$ N_{\FS_{\rm NTRU}}=O(N\log q+N\log N+N\log\log q)=\widetilde O(N\log q)$ by Lemma \ref{lm-pf},
where we can ignore $p$ considering $p\ll q$.
By {Theorem \ref{th-m2}}, we can obtain a possible private key $f$ in time $\widetilde O(N^{4.5}\log^{4.5} q\kappa^2\log1/\epsilon)$.
\end{proof}
%

In the design of NTRU, it is assumed that the size of $f$ and $g$
are small. We can use the methods given in Section \ref{sec-n1}
to find $f$ and $g$ which have the smallest $d_f+d_g$.
%
%
\begin{prop}\label{pr-ntru1}
There is a quantum algorithm to obtain a private key $f$ from the public key $h$ such that $d_f+d_g$ is minimal in time $\widetilde O(N^{4.5}\log^{4.5} q\kappa^2\log1/\epsilon)$ with probability $\ge 1-\epsilon$, where $\kappa$ is the extended condition number for $\FS_{\rm NTRU}$.
\end{prop}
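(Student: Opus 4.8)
The plan is to adapt the construction already used for Proposition~\ref{p-nt1}, where a bare feasibility problem $\FS_{\rm NTRU}=0$ was solved with {\bf QBoolSol}, and to wrap it inside {Algorithm~\ref{alg-opt}} so that the quantity $d_f+d_g$ becomes the objective being minimized rather than a fixed constant. The key observation is that in the Boolean encoding $f_i=F_{i1}+F_{i2}-1$, $g_i=G_{i1}+G_{i2}-1$ introduced before Proposition~\ref{p-nt1}, the integers $d_f$ and $d_g$ are \emph{linear} functions of the Boolean variables: indeed $2d_f = \sum_{i=0}^{N-1}(F_{i1}-F_{i2})+N+1$ and $2d_g=\sum_{i=0}^{N-1}(G_{i1}-G_{i2})+N$ modulo the Boolean relations, so $d_f+d_g$ is an affine Boolean form $o(\F_\bit)$. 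Therefore, instead of fixing $d_f,d_g$ as numerical constants inside $\FS_{11}$, I would treat them as unknowns, drop the two constraints that pin them to given values, and set the objective function to $o = d_f+d_g$ (shifted to be nonnegative if necessary, though here it is already $\ge 0$). One must keep the remaining defining relations of $\FS_{11}$ (the sum-to-one and sum-to-zero constraints $\sum(F_{i1}+F_{i2}-1)=1$, $\sum(G_{i1}+G_{i2}-1)=0$, the coupling equations $F_{i1}F_{i2}-F_{i2}$, $G_{i1}G_{i2}-G_{i2}$) together with $P(\FS_2)$ and $P(\FS_3)$ unchanged; call this reduced system $\CS_{\rm NTRU}$.

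First I would verify that $\CS_{\rm NTRU}$ together with the objective $o=d_f+d_g$ is a legitimate instance of the $(0,1)$-programming form \bref{eq-op1}: all variables are Boolean, all constraints are polynomial equations over $\C$, and $o$ is linear hence trivially bounded, with $0\le o< u$ for $u = 2N+1$ (since each of $d_f,d_g$ is at most $N$). Then {Algorithm~\ref{alg-opt}} applies directly: at Step~3 one forms $L_{\alpha\beta}=\CS_{\rm NTRU}\cup\{\delta_{\alpha\beta}(o)\}$ where $\delta_{\alpha\beta}(o)=\alpha+\sum_{j=0}^{\beta-1}F_j2^j-o$, and the correctness and success-probability analysis of Theorem~\ref{th-opt1} carries over verbatim, giving probability $\ge 1-\epsilon$ after at most $\log_{4/3}u = O(\log N)$ loops. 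One should also record that a solution of $\CS_{\rm NTRU}$ does exist (the genuine key $(f,g)$ gives one), so the algorithm does not return ``fail''.

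For the complexity, the point is that adding the bisection variables $\F_\bit$ and the single extra equation $\delta_{\alpha\beta}(o)$ changes nothing essential relative to Proposition~\ref{p-nt1}: $\#\F_\bit = O(\log u)=O(\log N)$ and $\#\delta_{\alpha\beta}(o)=O(\log N + N) = O(N)$, both negligible against $T_{\FS_{\rm NTRU}}=O(N^2\log^2 q)$ and $N_{\FS_{\rm NTRU}}=\widetilde O(N\log q)$ computed there. Hence $T_{L_{\alpha\beta}}=O(N^2\log^2 q)$ and $N_{L_{\alpha\beta}}=\widetilde O(N\log q)$, and by Theorem~\ref{th-m2} each call to {\bf QBoolSol} costs $\widetilde O((N\log q)^{2.5}\cdot N^2\log^2 q\cdot\kappa^2\log1/\epsilon)=\widetilde O(N^{4.5}\log^{4.5}q\,\kappa^2\log1/\epsilon)$; multiplying by the $O(\log N)$ loops is absorbed in the $\widetilde O$. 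Finally I would note that the recovered $(f,g)$ with minimal $d_f+d_g$ is a valid private key, since by construction it satisfies all of $\FS_1,\FS_2,\FS_3$ and thus $f\in\mathcal L_f$ is invertible mod $p$ and mod $q$ with $h=gf^{-1}\bmod q$. The one place needing a little care — and the main obstacle — is the bookkeeping around $\kappa$: the statement refers to the ``extended condition number of $\FS_{\rm NTRU}$'', so I must check that the condition numbers of all the systems $L_{\alpha\beta}$ arising in the bisection are controlled by this single quantity (this is exactly the role of the ``extended'' condition number, paralleling its use in Propositions~\ref{th-n11} and~\ref{th-n2}), and that the shift/encoding of $o$ does not silently worsen it.
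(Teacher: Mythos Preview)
Your proposal is correct and follows essentially the same route as the paper: drop the two equations in $\FS_{11}$ that fix $d_f,d_g$, turn $d_f+d_g$ (equivalently, an affine combination of the $F_{ij},G_{ij}$) into the objective, and run Algorithm~\ref{alg-opt} with the same complexity estimates from Proposition~\ref{p-nt1}, absorbing the extra $O(\log N)$ bisection loops into the $\widetilde O$. The only cosmetic difference is that the paper takes $o=2(d_f+d_g)-2N-2=\sum_i(F_{i1}-F_{i2}+G_{i1}-G_{i2})-1$ with $u=4N$ so that $o$ has integer Boolean coefficients, whereas your $o=d_f+d_g$ would carry half-integer coefficients; this is immaterial for the argument.
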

\begin{proof}
Remove $\sum_{i=0}^{N-1}(F_{i1}-F_{i2})+N+1 -2d_f$ and 
$\sum_{i=0}^{N-1}(G_{i1}-G_{i2})+N-2d_g$ from $\FS_{1}$
and still denote $\FS_{\rm NTRU}=\FS_{1}\cup \FS_{2}\cup \FS_{3}$.
We can use the objective function $o = (2d_f-1-N) + 2d_g-N -1=
\sum_{i=0}^{N-1}(F_{i1}-F_{i2}+G_{i1}-F_{i2})-1$
 which satisfies $0\le o< 4N$.
Following \bref{eq-od},
we have
$\delta_{\alpha\beta}=\alpha+\sum_{j=0}^{\beta-1}E_j2^j -o$ and
$L_{\alpha\beta}=\FS_{\rm NTRU}\cup\{\delta_{\alpha\beta}\}\subset\C[\X_\bit,\E_\bit]$.
Then we can use {Algorithm \ref{alg-opt}} to find a private key $f$
which minimizes $o$.
By the proof of {Proposition \ref{p-nt1}},
we have  $T_{\FS_{\rm NTRU}}=O(N^2\log^2 q)$ and
$N_{\FS_{\rm NTRU}}=O(N\log q)$.
Then,  $T_{L_{\alpha\beta}}=\widetilde{O}(N^2\log^2 q)$
and $N_{L_{\alpha\beta}}=O(N\log q)$.
%
%
By Theorem \ref{th-opt1}, the complexity is
$\widetilde O(N^{4.5}\log^{4.5} q\kappa^2\log1/\epsilon\log N)$
$=\widetilde O(N^{4.5}\log^{4.5} q\kappa^2\log1/\epsilon)$.
\end{proof}

For the parameters recommended in \cite{NTRU},
$(N,p,q)=(107,3,64)$,
$(N,p,q)=(167,3,128)$,
$(N,p,q)=(503,3,256)$,
and $\epsilon=1\%$,
the complexities in Proposition \ref{pr-ntru1} is given in the following table.
\begin{table}[ht]\centering
\begin{tabular}{|c|c|c|c|}\hline
${N}$&${q}$&$p$& Complexity \\ \hline
107&64&3&$2^{45}\kappa^2$ \\
167&128&3&$2^{49}\kappa^2$ \\
503&256&3&$2^{57}\kappa^2$ \\\hline
\end{tabular}
\caption{Complexities of the quantum algebraic attack on NTRU }
\label{tab-0}
\end{table}

In Table \ref{tab-0},  $\kappa$ is the condition number of the corresponding equation systems. 
From the table, this main part of the complexity is relatively low comparing to its desired security $3^N$ if $\kappa$ is small, which implies that the NTRU is safe only if its condition number is large.

\section{Conclusion}
In this paper, we give quantum algorithms for two basic
computational problems: polynomial system solving over a finite field and
the optimization problem where the arguments  either take values from
a finite field or are bounded integers.
The complexities of these quantum algorithms are polynomial
in the input size, the maximal degree of the inequality constraints,
and $\kappa$ which is the condition number of certain matrices
derived from the problem.
So, we achieve exponential speedup for these problems
when the condition number is small.

The optimization problem considered in this paper covers
many NP-hard problems as special cases.
In particular, the proposed algorithms are used to give quantum algorithms
for several fundamental computational problems in cryptography,
including the polynomial system with noise,
the short solution problem,
the shortest vector problem,
and the NTRU cryptosystem.
The complexity for all of these problems is polynomial
in the input size and their condition numbers,
which means that these problems are difficult
to solve by a quantum computer if and only if
their condition numbers are large.
As a consequence, the NTRU cryptosystem as well as
the candidates recently proposed for
post-quantum standard of public key cryptosystems \cite{alb2}
are safe against quantum computer attacks only if the condition number of
its equation system is large.

The main idea of the algorithm is to convert the equality and
inequality constraints of the optimization problem into
polynomial equations in Boolean variables
and then convert the finding of the minimal value
of the objective function into several problems
of finding the Boolean solutions for polynomial systems
over $\C$, that is B-POSSO. Then the quantum algorithm from
\cite{qabes} is used to find Boolean solutions for these polynomial systems.
%
%

As we just mentioned that the optimization problem
is reduced into the B-POSSO problem.
It is interesting to give a description for all the problems
that can be efficiently reduced to B-POSSO.
It is also interesting to see whether it is possible to
combine the reduction methods introduced in this paper
with traditional algorithms for polynomial system solving such
as the Gr\"obenr basis method \cite{fg1}
and the characteristic set method \cite{cs-ff}
to obtain better traditional algorithms for polynomial
system solving and optimization over finite fields.
Finally, in order to know the exact complexity of the algorithm
proposed in this paper, we need to know the condition number,
which is a main future problem for study.

\end{document}